\DeclareMathOperator*{\minimize}{minimize}
\DeclareMathOperator*{\st}{subject\ to}
\def\ra{\rightarrow}
\def\argmin{\text{argmin}}
\def\argmax{\text{argmax}}
\def\diag{\text{diag}}
\def\ln{\text{ln}}
\newcommand{\mathletter}[1]{%
	\expandafter\newcommand\csname b#1\endcsname{\mathbb #1}
	\expandafter\newcommand\csname c#1\endcsname{\mathcal #1}
	\expandafter\newcommand\csname f#1\endcsname{\mathfrak #1}
	\expandafter\newcommand\csname til#1\endcsname{\widetilde #1}
	\expandafter\newcommand\csname ha#1\endcsname{\widehat #1}
	\expandafter\newcommand\csname bf#1\endcsname{\bf #1}
}%
\def\mathletters#1{\mathlettersB #1,,}
\def\mathlettersB#1,{\ifx,#1,\else\mathletter #1\expandafter\mathlettersB\fi}
\def \qed {\hfill \vrule height6pt width 6pt depth 0pt}
\def\bee{\begin{equation}}
\def\ene{\end{equation}}
\def\beq{\begin{eqnarray}}
\def\enq{\end{eqnarray}}
\newtheorem{defi}{Definition}
\newtheorem{theo}{Theorem}
\newtheorem{lemma}{Lemma}
\newtheorem{assum}{Assumption}
\newtheorem{coro}{Corollary}
\newtheorem{remark}{Remark}
\newtheorem{example}{Example}
\newenvironment{proof}{\begin{IEEEproof}}{\end{IEEEproof}}
\def\bone{{\mathds{1}}}
\title{\huge\bf Distributed Discrete-time Optimization in Multi-agent Networks Using only Sign of Relative State}
\author{Jiaqi~Zhang, Keyou~You, ~\IEEEmembership{Senior Member,~IEEE}, and~Tamer~Ba\c{s}ar, ~\IEEEmembership{Life Fellow,~IEEE}
\thanks{*This work was in part supported by the National Natural Science Foundation of China (61722308), Tsinghua University Initiative Scientific Research Program, and in part by ARL under cooperative agreement W911NF-17-2-0196 ({\em Corresponding author: Keyou You}).}
\thanks{Jiaqi Zhang and Keyou You are with the Department of Automation, and BNRist, Tsinghua University, Beijing 100084, China. E-mail: zjq16@mails.tsinghua.edu.cn, youky@tsinghua.edu.cn.} %
\thanks{Tamer Ba\c{s}ar is with the Coordinated Science Laboratory, University of Illinois at Urbana-Champaign, Urbana, IL 61801 USA. E-mail: basar1@illinois.edu.}
}
\begin{document}

\maketitle

\begin{abstract}
This paper proposes distributed discrete-time algorithms to cooperatively solve an additive cost optimization problem in multi-agent networks. The striking feature lies in the use of only the sign of relative state information between neighbors, which substantially differentiates our algorithms from others in the existing literature. We first interpret the proposed algorithms in terms of the penalty method in optimization theory and then perform non-asymptotic analysis to study convergence for static network graphs. Compared with the celebrated distributed subgradient algorithms, which however use the exact relative state information, the convergence speed is essentially not affected by the loss of information. We also study how introducing noise into the relative state information and randomly activated graphs affect the performance of our algorithms.  Finally, we validate the theoretical results on a class of distributed quantile regression problems. 
\end{abstract}

\begin{IEEEkeywords}
Distributed optimization, multi-agent networks, sign of relative state, penalty method, subgradient iterations.
\end{IEEEkeywords}

\section{INTRODUCTION}\label{sec_1}

\IEEEPARstart{R}{ecently}, there has been an increasing interest in distributed optimization problems in multi-agent networks. Distributed optimization requires all agents to cooperatively minimize a sum of local objective functions under the constraint that each agent only obtains its local objective function.  Thus agents must exchange information with their neighbors to find an optimal solution. The motivating examples include formation control \cite{you2011network,cao2013overview}, large scale machine learning \cite{cevher2014convex,you2018distributed}, and distributed quantile regression over sensor networks \cite{wang2017distributed}. An overview of this topic can be found in \cite{nedic2017network}.

Many existing algorithms to solve distributed optimization in multi-agent networks generally comprise two parts, see e.g. \cite{nedic2009distributed,nedic2017network,nedic2015distributed,shi2015extra,yi2014quantized,pu2017quantization,rabbat2005quantized,zhu2016quantized} and the references therein. One is to drive all agents to reach a consensus, and the other is to push the consensus value toward an optimal solution  of the optimization problem.  However,  they all require each agent to access the {\em exact} relative state information with respect to its neighbors \cite{nedic2009distributed,nedic2017network,nedic2015distributed,shi2015extra} or the quantized {\em absolute} state \cite{yi2014quantized,pu2017quantization,rabbat2005quantized}. In some applications, however, an agent is only able to acquire a very rough {\em relative} state information with respect to its neighbors.  As a notable example, consider several working robots in a horizontal line, where each robot can only decide whether a neighbor is on its left side or right side. In this case, each agent can only access one bit of relative state information from each of its neighbors. Clearly,  this is very different from the quantized settings in \cite{yi2014quantized,pu2017quantization,rabbat2005quantized,zhu2016quantized}, which use the quantized version of the absolute state, and dynamic quantizers are essential for computing an exact optimal solution \cite{yi2014quantized}. With static quantizers, each node can only find a sub-optimal solution \cite{pu2017quantization,rabbat2005quantized,zhu2016quantized}. This also distinguishes our work from \cite{seide20141,magnusson2017convergence}  where one-bit quantized gradients are used. We show in this work that knowing only the sign of relative state (which is essentially\footnote{We say ``essentially'' because sign function takes also the 0 value, in addition to 1 and -1. But, in any implementation, the ``0'' value appears very rarely.} one bit information for each neighbor) is sufficient to obtain an exact optimal solution.  Other distributed optimization algorithms include the ADMM-based methods \cite{Ling2014Decentralized,shi2014linear,zhu2016quantized}, and proximal gradient methods \cite{aybat2017distributed}. Note that these algorithms need much more than one bit of information per time from its neighbors.

To the best of our knowledge, the use of one bit relative state information in distributed algorithms has been previously studied in \cite{chen2011finite,chen2012distributed,guo2013consensus,franceschelli2015finite,franceschelli2017finite,lin2017distributed}, but in the context of different problems. Particularly, the authors in \cite{chen2011finite,guo2013consensus,franceschelli2015finite,franceschelli2017finite} are concerned with the consensus problem by using sign of the relative state. Except \cite{lin2017distributed}, the underlying problem is not an optimization problem.  Moreover, all these works study distributed algorithms in the continuous-time regime, and adopt the well-established non-smooth analysis tools  \cite{clarke2008nonsmooth} to analyze convergence. 

Discrete-time algorithms are worth studying for distributed optimization in multi-agent networks. First, many  applications of distributed optimization involve communication between agents and control of agents, which are typically discrete in nature. Second, discrete-time algorithms are easier to implement than their continuous-time versions \cite{lin2017distributed}. Third, neither the non-smooth analysis tools nor the Lyapunov-based methods for continuous-time algorithms \cite{lin2017distributed} are applicable to the discrete-time case. Specifically, the rule of thumb for selecting stepsize in discrete-time algorithms cannot guarantee the existence of a valid Lyapunov function, and the sophisticated stepsize rules (e.g. line minimization rule) cannot be easily implemented in a distributed manner. Thus the Lyapunov-based methods \cite{lin2017distributed} seem impossible to extend to the discrete-time case. Finally, the continuous-time multi-agent networks with one bit of feedback information renders the common numerical methods, e.g. the Euler discretization,  inapplicable \cite{dieci2012survey}. That is, simple discretization of the continuous-time algorithm may lead to an ill-posed discrete-time algorithm. Accordingly, an alternative method of approach and analysis is needed, which is the primary objective of this work.

This paper proposes distributed discrete-time optimization algorithms in multi-agent networks that use for each agent only the sign of relative state value for each neighbor. We first interpret the distributed algorithms by the penalty method in optimization theory \cite{bertsekas2015convex}, and show that they are the exact subgradient iterations of a penalized optimization problem, which is specially designed in conformity with the network structure. An interesting finding is that the finite penalty factor can be explicitly given in terms of the network size and its connectivity. This allows us to analyze the convergence of the discrete-time algorithms in a substantially different way as compared with previous works \cite{lin2017distributed,franceschelli2017finite}.  In particular, our analysis is based on optimization theory rather than algebraic graph theory or Lyapunov theory. The advantages of such an approach are at least twofold. First, compared to many existing approaches which first propose an algorithm and then find a Lyapunov function to prove its convergence, the intuition behind our algorithm appears more natural and reasonable, as it aims to minimizing a well-designed objective function. Second, a wealth of research in optimization theory is directly applicable to our algorithms, making it natural and quite easier to handle other scenarios, e.g., random network graphs and the sign of perturbed relative state, both of which are investigated in this work.  

We also provide non-asymptotic results to describe the behavior of our distributed algorithms under diminishing stepsizes as well as a constant stepsize. This implies that the convergence rate of the objective function for diminishing stepsizes varies from $O(1/\ln(k))$ to $O(\ln(k)/\sqrt{k})$, depending on the choice of the stepsize, where $k$ is the number of iterations. It should be noted that  $O(\ln(k)/\sqrt{k})$ is an optimal rate for a generic subgradient algorithm; see, for example, Page 9 of \cite{boyd2011subgradient}. That is, our distributed algorithms with only sign information on the relative state essentially do not lead to any reduction in the convergence rate. Different from \cite{nedic2009distributed},  the convergence under diminishing stepsizes does not require uniform boundedness of the subgradient of the objective function. For a constant stepsize, it approaches a neighborhood of an optimal solution at a rate $O(1/k)$ and the error is proportional to the stepsize.

Notably, in real applications, the relative state information is often obtained via communication networks or sensors, and is typically noise corrupted. This results in each node unable to obtain the sign of the relative state accurately. A natural question that comes up is how the noise of this type affects  the performance of distributed optimization algorithms. In the context of consensus seeking,  this problem has been extensively studied, see e.g., \cite{liu2011distributed,kar2009distributed,cheng2016convergence}.  Since consensus algorithms are linear and do not involve optimization, the approaches in these papers do not apply to the current setting. Here we also adopt an optimization based approach to study the performance of our distributed algorithm when the relative state is corrupted by Gaussian noise, showing the robustness of the algorithm. 

Subsequently, we extend the above results to randomly activated network graphs, which are known as gossip-like graphs \cite{dimakis2010gossip,kan2016leader}, and show that the distributed algorithms over random graphs are the exact stochastic subgradient iterations of a penalized optimization problem.  Note that the results for continuous-time counterpart in \cite{lin2017distributed,franceschelli2017finite} are limited to static network graphs, and it is unclear whether they can be extended to time-varying graphs via the approaches employed there. 

Finally, we apply our algorithms to solve a distributed quantile regression problem. Clearly, this problem is of independent interest and it has already been studied in \cite{wang2017distributed} using the distributed subgradient algorithm of \cite{nedic2009distributed}.  We approach that problem using our framework and theory, and confirm that the distributed quantile regression can be well solved using only sign of relative state. Compared with \cite{wang2017distributed}, the feedback information from each neighbor is now reduced to essentially only one bit at every node. 

Some results in this paper are obtained in \cite{zhang2018distributed}, where it requires the uniform boundedness of the subgradient of the objective function and omits the proof of its major result. This paper further considers the cases under a constant stepsize and the noisy measurement. 

The rest of the paper is organized as follows. Section \ref{sec_2} formulates the distributed optimization problem. In Section \ref{sec_3}, we present our discrete-time distributed optimization algorithm that uses only the sign of neighbor relative state and interpret it as subgradient iterations of a penalized optimization problem. Section \ref{sec_4} performs non-asymptotic analysis on the distributed algorithm under diminishing stepsizes as well as a constant stepsize.  In Section \ref{sec_7}, we examine the performance of our algorithm with relative measurement errors.  We then propose a modified algorithm to solve the problem over randomly activated graphs in Section \ref{sec_5}. Section \ref{sec_6} introduces the distributed quantile regression problem, which is solved using our algorithms, which also validates our theoretical results. Some concluding remarks are drawn in Section \ref{sec_con}. The paper ends with two appendices, which contain proofs of two of the main theorems.

\textbf{Notation}: We use $a,\bm a,A$ and $\cA$ to denote a scalar, vector, matrix and set, respectively. $\bm a^\mathsf{T}$ and $A^\mathsf{T}$ denote the transposes of $\bm a$ and $A$, respectively. $\bR$ denotes the set of real numbers and $\bR^n$ denotes the set of all $n$-dimensional real vectors. $\bone$ denotes the vector with all ones, the dimension of which depends on the context. Let $\|\cdot\|_1,\|\cdot\|$ and $\|\cdot\|_\infty$ denote the $l_1$-norm, $l_2$-norm and $l_{\infty}$-norm of a vector or matrix, respectively. We define
\[
\text{sgn}(x)=\left\{
	\begin{array}{cl}
	1,&\text{if }x>0,\\ \relax
	0,&\text{if }x=0,\\
	-1,& \text{if }x<0.
	\end{array}\right.
\]

With a slight abuse of notation, $\nabla f(x)$ denotes \emph{any} subgradient of $f(x)$ at $x$, i.e., $\nabla f(x)$ satisfies 
\bee\label{subgradient}
f(y)\geq f(x)+(y-x)^\mathsf{T}\nabla f(x),\ \forall y\in \bR.
\ene
The subdifferential $\partial f(x)$ is the set of all subgradients of $f(x)$ at $x$. If $f(x)$ is differentiable at $x$, then $\partial f(x)$ includes only the gradient of $f(x)$ at $x$. Superscripts are used to represent sequence indices, i.e., $x^k$ represents the value of the sequence $x$ at time $k$.

\section{Problem Formulation}\label{sec_2}
This section introduces some basics of graph theory, and presents the distributed optimization problem in multi-agent networks.
\subsection{Basics of Graph Theory}
A graph (network) is represented as $\cG=(\cV,\cE)$, where $\cV=\{1,...,n\}$ is the set of nodes and $\cE
\subseteq \cV \times \cV$ is the set of edges. Let $\cN_i=\{j\in\cV|(i,j)\in \cE\}$ be the set of neighbors of node $i$, and $A=[a_{ij}]$ be the weighted adjacency matrix of $\cG$, where $a_{ij}>0$ if there exists an edge connecting nodes $i$ and $j$, and otherwise, $a_{ij}=0$. If $A=A^\mathsf{T}$, the associated graph is undirected. This paper focuses only on undirected graphs. A path is a sequence of consecutive edges. We say a graph is {\em connected} if there exists a path between any pair of nodes. We introduce an important concept called $l$-connected graph.
\begin{defi}[$l$-connected graph]
A connected graph is $l$-connected ($l\geq 1$) if it remains connected whenever fewer than $l$ edges are removed. 
\end{defi}

Clearly each node of an $l$-connected graph has at least $l$ neighbors.

\subsection{Distributed Optimization Problem}
With only the sign of relative state, our objective is to distributedly solve the multi-agent optimization problem
\bee\label{original}
\minimize_{x\in\bR}\ f(x):=\sum_{i=1}^n f_i(x)
\ene
where for each $i\in\cV$, the local objective function $f_i(x)$ is continuously convex but not necessarily differentiable, and is only known by node $i$. The number of nodes is set to be $n>1$.  We first make a standard assumption.
\begin{assum}\label{assum1}  The set $\cX^\star$ of optimal solutions of problem \eqref{original} is nonempty, i.e., for any $x^\star\in \cX^\star$, it holds that $f^\star:=f(x^\star)=\inf_{x\in\bR}f(x)$.
\end{assum}

\section{A Distributed Optimization Algorithm over Static Graphs}\label{sec_3}
In this section, we propose our discrete-time distributed optimization algorithm that  uses only sign information of the relative state of the neighboring nodes (which we call, by a slight abuse of terminology, ``one bit information''), and then interpret it via the penalty method in optimization theory. 
\subsection{The Distributed Optimization Algorithm}
Our distributed algorithm to solve \eqref{original} over a static network $\cG$ is given as follows. For all $i\in\cV$,
\begin{tcolorbox}[ams equation,size=small]\label{protocol}\tag{Algo. 1}
x_i^{k+1}=x_i^{k}+\lambda\rho^k\sum_{j\in \cN_i}a_{ij}\text{sgn}(x_j^k-x_i^k)-\rho^k \nabla f_i(x_i^k)
\end{tcolorbox}
\noeqref{protocol}
\noindent where $x_i^k$ is the state of node $i$, $\lambda$ is a positive scalar, $\rho^k$ is the stepsize, $\cN_i$ is the set of neighbors of node $i$, and $\nabla f_i(x_i^k)$ is any subgradient of $f_i(x)$ at $x_i^k$, see \eqref{subgradient}.  

The continuous-time version of \ref{protocol} is given in \cite{lin2017distributed}. To ensure a valid algorithm, it is important to choose both $\lambda$ and $\rho^k$, which, for the discrete-time case, requires a completely different approach from that of \cite{lin2017distributed}, as it will be evident in Section \ref{sec_pmi}.

Compared with the celebrated distributed subgradient descent algorithm, see e.g.\cite{nedic2009distributed},  
\bee\label{dgd}
\begin{aligned}
x_i^{k+1}=x_i^{k}+\sum_{j\in \cN_i}a_{ij}(x_j^k-x_i^k)-\rho^k \nabla f_i(x_i^k)
\end{aligned}
\ene
\ref{protocol} only uses $\text{sgn}(x_j^k-x_i^k)$ instead of the exact relative state $(x_j^k-x_i^k)$. Thus, each node needs only to know the sign of the relative state, which is clearly the minimum information and can be easily extended to the case of multi-level quantization (that is, multiple bits).     
\begin{remark}
\ref{protocol} also works if $x$ is a vector by applying $\text{sgn}(\cdot)$ to each element of the relative state vector. All the results on the scalar case continue to hold with such an adjustment.
\end{remark}

%

\subsection{Penalty Method Interpretation of \ref{protocol}}
\label{sec_pmi}
In this subsection, we interpret \ref{protocol} via the penalty method and show that it is the subgradient iteration of a penalized optimization problem. 

Notice that problem \eqref{original} can be essentially reformulated as follows:
\bee\label{reformulate}
\begin{aligned}
	&\minimize_{\bm x\in\bR^n}&&g(\bm x):=\sum_{i=1}^n f_i(x_i)\\
	&\st &&x_i=x_j,\ \forall i,j\in\{1,...,n\}
\end{aligned}
\ene
where $\bm x=[x_1,...,x_n]^\mathsf{T}$. It is easy to see that the optimal value of problem \eqref{reformulate} is also $f^\star$, and the set of optimal solutions is $\{x^\star\bone|x^\star\in\cX^\star\}$. Define a penalty function by
\bee\label{hx1}
h(\bm x)=\frac{1}{2}\sum_{i=1}^n\sum_{j\in \cN_i}a_{ij}|x_i-x_j|. 
\ene

If the associated network $\cG$ is connected, then $h(\bm x)=0$ is equivalent to that $x_i=x_j,\ \forall i,j\in\{1,...,n\}$. Thus, a penalized optimization problem of \eqref{reformulate} can be given as
\bee\label{penalty}
\minimize_{\bm x\in\bR^n}\ \tilde{f}_\lambda(\bm x):=g(\bm x)+\lambda h(\bm x)
\ene
where $\lambda>0$ is the penalty factor.

We show below that \ref{protocol} is just the subgradient iteration of the penalized problem \eqref{penalty} with stepsizes $\rho^k$. Recall that $\text{sgn}(x)$ is a subgradient of $|x|$ for any $x\in\bR$. It follows from \eqref{hx1} that a subgradient $\nabla h(\bm x)=[\nabla h(\bm x)_1,...,\nabla h(\bm x)_n]^\mathsf{T}$ of $h(\bm x)$ is given element-wise by
\bee\nonumber
\begin{aligned}
\nabla h(\bm x)_i &=\sum_{j\in \cN_i}a_{ij}\text{sgn}(x_i-x_j),\ i\in\cV.
\end{aligned}
\ene
Similarly, a subgradient $\nabla g(\bm x)=[\nabla g(\bm x)_1,...,\nabla g(\bm x)_n]^\mathsf{T}$ of $g(\bm x)$ is given element-wise by
$
\nabla g(\bm x)_i=\nabla f_i(x_i).
$
Then, the $i$-th element of a subgradient of $\tilde{f}_\lambda(\bm x)$ is given as
\bee\label{grad_f2}
\nabla \tilde{f}_\lambda(\bm x)_i=\lambda\sum_{j\in \cN_i}a_{ij}\text{sgn}(x_i-x_j)+\nabla f_i(x_i), i\in\cV.
\ene
Finally, the subgradient method for solving \eqref{penalty} is given as
\bee\label{subgrad}
\bm x^{k+1}=\bm x^k-\rho^k\nabla \tilde{f}_\lambda(\bm x^k),
\ene
which is exactly the vector form of \ref{protocol}. By \cite{bertsekas2015convex}, it follows that the subgradient method converges to an optimal solution of problem \eqref{penalty} if $\rho^k$ is appropriately chosen. 

For a finite $\lambda>0$, the optimization problems \eqref{reformulate} and \eqref{penalty} are generally not equivalent. Under mild conditions, however, we prove that they actually become equivalent if the penalty factor $\lambda$ is strictly greater than an explicit lower bound. 

\begin{assum}\label{assum2}
\begin{enumerate}[label=(\alph*)]
\item\label{assum2a} (Uniform Boundedness) There exists a $c>0$ such that
\bee\label{uppergrad}
|\nabla f_i(x)|\leq c,\ \forall i\in\cV,x\in\bR.
\ene
\item\label{assum2b} There exist $ c>0$ and $\alpha>0$ such that
\bee\nonumber
|\nabla f_i(x)|^2\leq\frac{1}{2} c^2(\alpha+\min_{x^\star\in\cX^\star}|x-x^\star|^2),
\ \forall i\in\cV,x\in\bR.
\ene
\end{enumerate}
\end{assum}

Assumption \ref{assum2}\ref{assum2a} is often made to guarantee the convergence of a subgradient method \cite{nedic2009distributed}, and holds if $\{\bm x^k\}$ is restricted to a compact set. Assumption \ref{assum2}\ref{assum2b} is obviously weaker than Assumption \ref{assum2}\ref{assum2a}, and holds if $f_i(x)$ is quadratic. Then, it is easy to obtain the following two results, proofs of which are quite straightforward and are therefore not included.
\begin{enumerate}[label=(\alph*)]
\item Under Assumption \ref{assum2}\ref{assum2a}, we have that
\bee\label{c_a}
\|\nabla \tilde{f}_\lambda(\bm x)\|\leq c_a, \forall \bm x\in\bR^n
\ene
where $c_a= \sqrt{n}( c+\lambda \|A\|_\infty)$.
\item Under Assumption \ref{assum2}\ref{assum2b}, we have that
\bee\label{c_b}
\begin{aligned}
\|\nabla \tilde{f}_\lambda(\bm x)\|^2&\leq c_b^2+ c^2\min_{x^\star\in\cX^\star}\|\bm x-x^\star\bone\|^2, \forall \bm x\in\bR^n
\end{aligned}
\ene
where $c_b= \sqrt{n\alpha c^2+2\lambda^2 \|A\|_\infty^2}$.
\end{enumerate}

Now we are ready to present the main result of this subsection. To this end, we define
\bee\label{eq1_theo1}
\begin{aligned}
\bar{x}&=\frac{1}{n}\bone^\mathsf{T}\bm x,\\
v(\bm x)&=\max_i (x_i)-\min_i (x_i),
\end{aligned}
\ene
and let $a_\text{min}^{(l)}$ be the sum of the $l$ smallest edges' weights, i.e. 
\bee
a_\text{min}^{(l)}=\sum_{e=1}^l a_{(e)}\label{minsum}
\ene where $a_{(1)},a_{(2)}, \ldots $ are an ascending order of the positive weights $a_{ij}, \forall (i,j)\in\cE$. 
\begin{theo}\label{theo1}
Suppose that Assumptions \ref{assum1} and \ref{assum2}\ref{assum2a} hold, and that the multi-agent network is $l$-connected.  If the penalty factor is selected as
\bee\label{lowerbound}
\lambda>\underline{\lambda}:=\frac{n c}{2a_\text{min}^{(l)}},
\ene
where $ c$  and $a_\text{min}^{(l)}$ are defined in    \eqref{uppergrad} and \eqref{minsum}, then:
\begin{enumerate}[label=(\alph*)]
\item The optimization problems \eqref{original} and \eqref{penalty} are equivalent in the sense that the set of optimal solutions and optimal value of  \eqref{penalty} are given by $ \tilde{\cX}^\star=\{x^\star\bone|x^\star\in \cX^\star\}$ and $f^\star$ respectively.

\item For any $\bm x \notin\{\alpha\bone| \alpha\in\bR\}$, it holds that
\[\|\nabla \tilde{f}_\lambda(\bm x)\|_\infty\geq \frac{2\lambda a^{(l)}_\text{min}}{n}- c.\]
\end{enumerate}
\end{theo}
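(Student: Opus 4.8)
The plan is to make part (b) the technical engine and then read part (a) off from it. The single combinatorial fact I would isolate first is a \emph{heavy-cut} property of $l$-connectivity: if $S$ is a nonempty proper subset of $\cV$, then deleting all edges between $S$ and $\cV\setminus S$ disconnects the graph, so by the definition of an $l$-connected graph this cut must contain at least $l$ edges; since every weight is positive and $a_\text{min}^{(l)}$ in \eqref{minsum} is the sum of the $l$ globally smallest weights, the total weight crossing the cut is at least $a_\text{min}^{(l)}$.

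For part (b), fix $\bm x\notin\{\alpha\bone\mid\alpha\in\bR\}$ and let $S$ and $T$ be the sets of nodes at which $x_i$ attains its maximum and its minimum, respectively. They are disjoint because $\bm x$ is non-constant, so $\min(|S|,|T|)\le n/2$; assume without loss of generality that $|S|\le n/2$. Summing the penalty part of \eqref{grad_f2} over $S$ is where the argument turns: every edge internal to $S$ joins two equal values and enters the sum twice with opposite signs (the two endpoint contributions are negatives of one another), so it cancels \emph{no matter how the subgradient of $|\cdot|$ is resolved at its kink}; what survives is exactly the cut edges, on which $\text{sgn}(x_i-x_j)=1$. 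With the heavy-cut bound and $|\nabla f_i|\le c$ from \eqref{uppergrad}, this gives $\sum_{i\in S}\nabla\tilde{f}_\lambda(\bm x)_i\ge\lambda a_\text{min}^{(l)}-|S|c$. Since $\|\cdot\|_\infty$ is at least the magnitude of the average of the entries indexed by $S$,
\[
\|\nabla\tilde{f}_\lambda(\bm x)\|_\infty\ \ge\ \frac{1}{|S|}\Big|\sum_{i\in S}\nabla\tilde{f}_\lambda(\bm x)_i\Big|\ \ge\ \frac{\lambda a_\text{min}^{(l)}}{|S|}-c\ \ge\ \frac{2\lambda a_\text{min}^{(l)}}{n}-c,
\]
using $|S|\le n/2$ in the last step; by \eqref{lowerbound} this quantity is strictly positive.

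For part (a), the easy inclusion is immediate: for $x^\star\in\cX^\star$ we have $h(x^\star\bone)=0$, hence $\tilde{f}_\lambda(x^\star\bone)=g(x^\star\bone)=f(x^\star)=f^\star$. Part (b) supplies the converse structure: since that bound holds for \emph{every} subgradient and is strictly positive, $0\notin\partial\tilde{f}_\lambda(\bm x)$ at every non-consensus $\bm x$, so by convexity no such point minimizes $\tilde{f}_\lambda$, while on the consensus line $\tilde{f}_\lambda(\alpha\bone)=f(\alpha)$ is minimized exactly over $\alpha\in\cX^\star$. The one genuinely delicate step, which I expect to be the main obstacle, is to confirm that $x^\star\bone$ is actually a \emph{global} minimizer, i.e. $0\in\partial\tilde{f}_\lambda(x^\star\bone)$. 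I would cast this as a flow-feasibility question: choosing $g_i\in\partial f_i(x^\star)$ with $\sum_i g_i=0$ (possible since $x^\star\in\cX^\star$), I must find antisymmetric edge values $a_{ij}s_{ij}$ with $|s_{ij}|\le1$ whose net divergence at node $i$ equals $-g_i/\lambda$. By the classical circulation-feasibility theorem such a flow exists iff for every $U\subsetneq\cV$ the demand $\tfrac1\lambda|\sum_{i\in U}g_i|$ is at most the cut capacity $\sum_{\mathrm{cut}(U)}a_{ij}$; bounding $|\sum_{i\in U}g_i|\le\min(|U|,n-|U|)\,c\le\tfrac{n}{2}c$ and the capacity below by $a_\text{min}^{(l)}$ shows this condition holds exactly when $\lambda\ge\underline{\lambda}$, which is precisely where the threshold \eqref{lowerbound} originates. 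Finally, strictness in \eqref{lowerbound} makes the inequality in part (b) strict, so $\tilde{f}_\lambda(\bm x)=f^\star$ can hold only on consensus points, and there only at $\cX^\star$; hence the optimal set of \eqref{penalty} is $\tilde{\cX}^\star$ with optimal value $f^\star$, completing (a).
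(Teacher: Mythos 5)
Your proposal is correct, but it reaches both parts of the theorem by a genuinely different route than the paper. For part (b), the paper sets up incidence-matrix machinery (two auxiliary lemmas on incidence matrices of cut subgraphs, plus a careful decomposition of $\partial \tilde{f}_\lambda$ into cut columns and remaining columns) to extract the bound $\frac{2\lambda a_\text{min}^{(l)}}{n}-c$; your cut-summation argument --- internal edges of the maximizer set $S$ cancel for \emph{every} resolution of the kink, cut edges contribute $+a_{ij}$, and the averaged $\ell_\infty$ bound with $|S|\le n/2$ does the rest --- obtains the same constant far more economically, and it is sound (the antisymmetry $\sigma_{ij}=-\sigma_{ji}$ of any subgradient of $h$ is exactly what makes the cancellation unconditional). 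For part (a), the paper does \emph{not} go through optimality conditions at all: it proves the pointwise inequality $\tilde{f}_\lambda(\bm x)-f^\star\ge f(\bar{x})-f^\star+(\lambda a_\text{min}^{(l)}-\tfrac{cn}{2})v(\bm x)$ directly, using Menger's theorem to get $h(\bm x)\ge a_\text{min}^{(l)}v(\bm x)$ via $l$ disjoint max-to-min paths, together with a Cauchy--Schwarz bound on the linearization error. Your route instead establishes $0\in\partial\tilde{f}_\lambda(x^\star\bone)$ by a circulation-feasibility (Gale/max-flow--min-cut) argument, and combines it with part (b) to locate the minimizers; this is correct, the cut condition $\tfrac{n}{2}c\le \lambda a_\text{min}^{(l)}$ is verified exactly as you say, and it gives a nicer structural explanation of \emph{why} the threshold $\underline{\lambda}$ is what it is. The trade-off is that the paper's quantitative inequality \eqref{eq6_theo3} (and the intermediate bounds \eqref{eq2_theo3}, \eqref{eq8_theo3}) is reused verbatim in the convergence-rate proofs of Theorems \ref{theo3}--\ref{theo5} and \ref{theo7}, so your optimality-condition derivation of part (a), while valid for the equivalence statement itself, would still leave those later arguments in need of the paper's direct functional lower bound.
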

\begin{proof}  (of part (a))~ Consider the  inequality below
\bee\label{eq1_theo3}
\begin{aligned}
\tilde{f}_\lambda(\bm x)&=\lambda h(\bm x)+g(\bm x-\bar{x}\bone+\bar{x}\bone)\\
&\geq \lambda h(\bm x)+g(\bar{x}\bone)+(\bm x-\bar{x}\bone)^\mathsf{T}\nabla g(\bar{x}\bone)\\
&\geq \lambda h(\bm x)+f(\bar{x})-\|\bm x-\bar{x}\bone\|\cdot\|\nabla g(\bar{x}\bone)\|
\end{aligned}
\ene
where the equality follows from the definition of $\tilde{f}_\lambda(\bm x)$, the first inequality is from \eqref{subgradient}, and the second inequality results from the Cauchy-Schwarz inequality as well as the fact that $g(a\bone)=f(a)$.

Then, we can show that
\beq\label{eq2_theo3}
h(\bm x)\geq a_\text{min}^{(l)}v(\bm x).
\enq
Since the multi-agent network is $l$-connected, it follows from Menger's theorem \cite{deo1974graph} that there exist at least $l$ disjoint paths (two paths are disjoint if they have no common edge) between any two nodes of the graph. Therefore, letting $x_\text{max}$ and $x_\text{min}$ be two nodes associated with the maximum element and the minimum element of $\bm x$, respectively, we can find $l$ disjoint paths from $x_\text{max}$ to $x_\text{min}$. Let $x_{(p,1)},...,x_{(p,n_p)}$ denote the nodes of path $p$ in order, where $n_p$ is the number of nodes in path $p$, and $x_{(p,1)}=x_\text{max}$, $x_{(p,n_p)}=x_\text{min}$ for all $p\in\{1,...,l\}$. Since these $l$ paths are disjoint, it follows that
\bee\label{eq5_theo1}
\begin{aligned}
h(\bm x)&\geq\sum_{p=1}^{l}\sum_{i=1}^{n_p-1}a_{(p,i,i+1)}|x_{(p,i)}-x_{(p,i+1)}|\\
&\geq\sum_{p=1}^{l}\sum_{i=1}^{n_p-1}\min_{i}a_{(p,i,i+1)}|x_{(p,i)}-x_{(p,i+1)}|\\
&\geq\sum_{p=1}^{l}\min_{i}a_{(p,i,i+1)}\sum_{i=1}^{n_p-1}(x_{(p,i)}-x_{(p,i+1)})\\
&\geq\sum_{p=1}^{l}\min_{i}a_{(p,i,i+1)}(x_\text{max}-x_\text{min})\geq a_\text{min}^{(l)}v(\bm x)
\end{aligned}
\ene
where $a_{(p,i,i+1)}$ is the weight of the edge connecting nodes $x_{(p,i)}$ and $x_{(p,i+1)}$.

Letting $\tilde{x}=\frac{1}{2}(\max_i(x_i)+\min_i(x_i))$, we have
\bee\label{eq7_theo3}
\begin{aligned}
&\|\bm x-\bar{x}\bone\|\|\nabla g(\bar{x}\bone)\|\leq\|\bm x-\tilde{x}\bone\|\|\nabla g(\bar{x}\bone)\|\\
&\leq\sqrt{n}\|\bm x-\tilde{x}\bone\|_\infty\cdot\sqrt{n}\|\nabla g(\bar{x}\bone)\|_\infty\leq \frac{n c}{2}v(\bm x).
\end{aligned}
\ene
where the first inequality follows from the fact that $\bar{x}$ minimizes $\|\bm x-\alpha\bone\|$ with respect to (w.r.t.) $\alpha$ for all $\bm x$. Eqs. \eqref{eq1_theo3}, \eqref{eq2_theo3} and \eqref{eq7_theo3} jointly imply the following inequality 
\bee\label{eq6_theo3}
\begin{aligned}
&\tilde{f}_\lambda(\bm x)-f^\star\geq f(\bar{x})-f^\star+(\lambda a_\text{min}^{(l)}-\frac{ cn}{2})v(\bm x).
\end{aligned}
\ene

Since $\lambda>{n c}/({2a_\text{min}^{(l)}}),v(\bm x)\geq0,\ \forall \bm x\in\bR^n$ and $f(\bar{x})\geq f^\star, \forall \bar{x}\in\bR$, then the right hand side of \eqref{eq6_theo3} is nonnegative. That is, $\tilde{f}_\lambda(\bm x)\geq f^\star$ for all $\bm x \in\bR^n$. 

Moreover, it follows from \eqref{penalty} that $\tilde{f}_\lambda(x^\star\bone)=f^\star$ for any $x^\star\in \cX^\star$, i.e., $\tilde{f}_\lambda(\bm x)=f^\star$ for any $\bm x \in \tilde{\cX}^\star$.  It remains to show that $\tilde{f}_\lambda(\bm x)> f^\star$ for all $\bm x\notin \tilde{\cX}^\star$, which includes:
\begin{enumerate}[label=Case (\alph*):,leftmargin =50pt]
\item $\bm x\neq \alpha\bone$ for any $\alpha\in \bR$,
\item $\bm x= \alpha\bone$ for some $\alpha\notin \cX^\star$.
\end{enumerate}

For Case (a), $v(\bm x)$ is strictly positive, and hence we know that $\tilde{f}_\lambda(\bm x)> f^\star$ from \eqref{eq6_theo3}. For Case (b), we have $v(\bm x)=0$. By \eqref{eq6_theo3} we have that $\tilde{f}_\lambda(\bm x)\geq f(\bar{x})=f(\alpha)>f^\star$. Thus, $\tilde{f}_\lambda(\bm x)> f^\star$ for all $\bm x\notin \tilde{\cX}^\star$, which completes the proof of part (a). The proof of part (b) is given in Appendix \ref{appendix_a}.
\end{proof}

It is worth mentioning that \eqref{lowerbound} in Theorem \ref{theo1} also holds for the multi-dimension case if Assumption \ref{assum2}\ref{assum2a} is replaced with $\|\nabla f_i({\bm x})\|\leq c$ for all $i$ and ${\bm x}$.

Theorem \ref{theo1} provides a sufficient condition for the equivalence between problems \eqref{reformulate} and \eqref{penalty}, and allows us to focus only on problem $\eqref{penalty}$. This result is nontrivial even though the penalty method has been widely studied in the  literature on  optimization theory \cite{bertsekas2015convex,bazaraa2013nonlinear}. By \cite{bazaraa2013nonlinear}, a lower bound for $\lambda$ can be selected as the largest absolute value of Lagrange multipliers of the equality constraints in \eqref{reformulate}. However, a Lagrange multiplier usually cannot be obtained before solving a dual problem, and it is unclear how to establish the relationship between the Lagrange multiplier and the network structure. Via a different technique, Theorem \ref{theo1} provides an explicit lower bound for $\lambda$ in terms of the network size and its connectivity, and is tighter than the bounds in \cite{franceschelli2017finite} and \cite{lin2017distributed}.

In fact, the lower bound in Theorem \ref{theo1} can be tight in some cases as shown in the following Example \ref{example_1} and Section \ref{sec_sim_a}. Note that a too large $\lambda$ may have negative effects on the transient performance of \ref{protocol}, as we will demonstrate later in Section \ref{sec_sim_a}. Thus, the tighter bound in Theorem \ref{theo1} allows us to choose a smaller $\lambda$ in applications.

\begin{example}\label{example_1}
Consider the graph in Fig. \ref{fig1b:sec_3} with unit edge weights, i.e., $a_{ij}=1$ for all $(i,j)\in\cV$. Let $f_1(x)=|x|,f_2(x)=|x-2|,f_3(x)=|x-4|,f_4(x)=|x-6|$ and $f(x)=\sum_{i=1}^4f_i(x)$. It is not difficult to compute that the optimal value of $f(x)$ is 8 and the set of optimal solutions is a closed interval $[2, 4]$.  By \eqref{penalty}, the corresponding penalized problem is given as
\[
\begin{aligned}
\tilde{f}_\lambda(\bm x)&=|x_1|+|x_2-2|+|x_3-4|+|x_4-6|+\\
&~~~\lambda(|x_1-x_2|+|x_2-x_3|+|x_3-x_4|+|x_4-x_1|).
\end{aligned}
\]
 Theorem \ref{theo1} implies that $\tilde{f}_\lambda(\bm x)$ has the same optimal value as $f(x)$ and the set of optimal solutions is $\tilde{\cX}^\star=\{x^{\star} \bone|x^{\star}\in [2, 4]\}$, provided that $\lambda>{4\cdot 1}/({2\cdot 2})=1$. 

Given any $\lambda\leq 1$, consider $\bm x =[2,2,4,4]^\mathsf{T}\notin \tilde{\cX}^\star$. Clearly, $\tilde{f}_\lambda(\bm x)=4+4\lambda\leq f^\star=8$, which implies that the set of optimal solutions of the penalized problem is not $\tilde{\cX}^\star$. Thus for any $\lambda\leq 1$,  the original problem $f(x)$ cannot be solved via the penalized problem $\tilde{f}_\lambda(\bm x)$, and the lower bound in \eqref{lowerbound} is tight in this example. \qed
\end{example}

\begin{figure}[!t]
\centering
\subfloat[]{\label{fig1a:sec_3}{\includegraphics[width=0.33\linewidth]{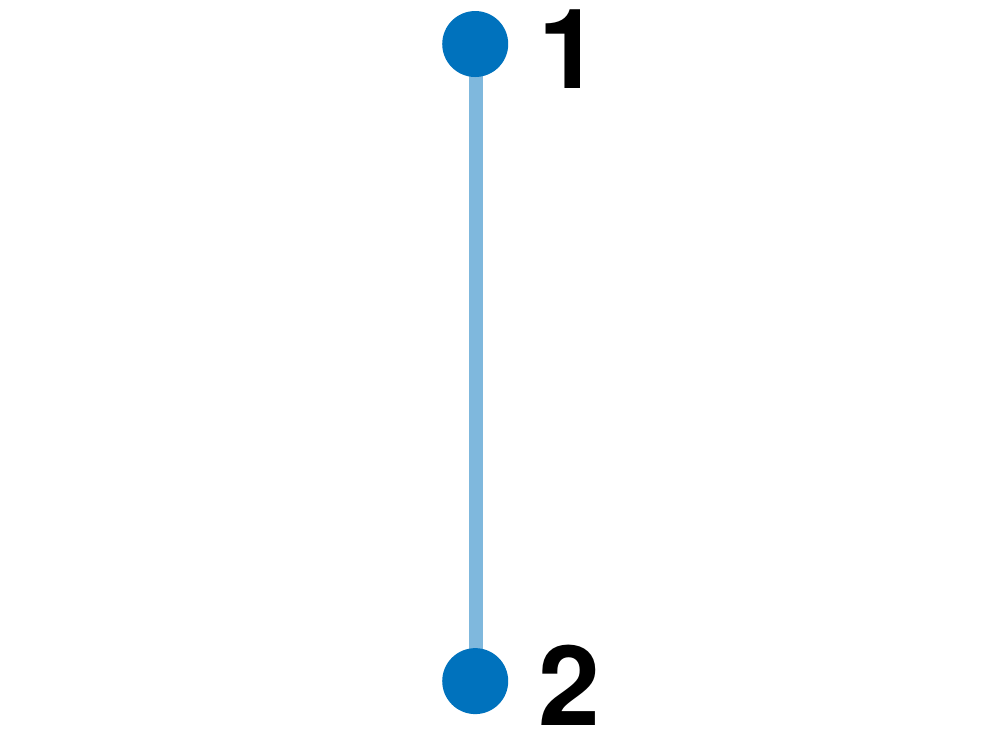}}}
\subfloat[]{\label{fig1b:sec_3}\includegraphics[width=0.33\linewidth]{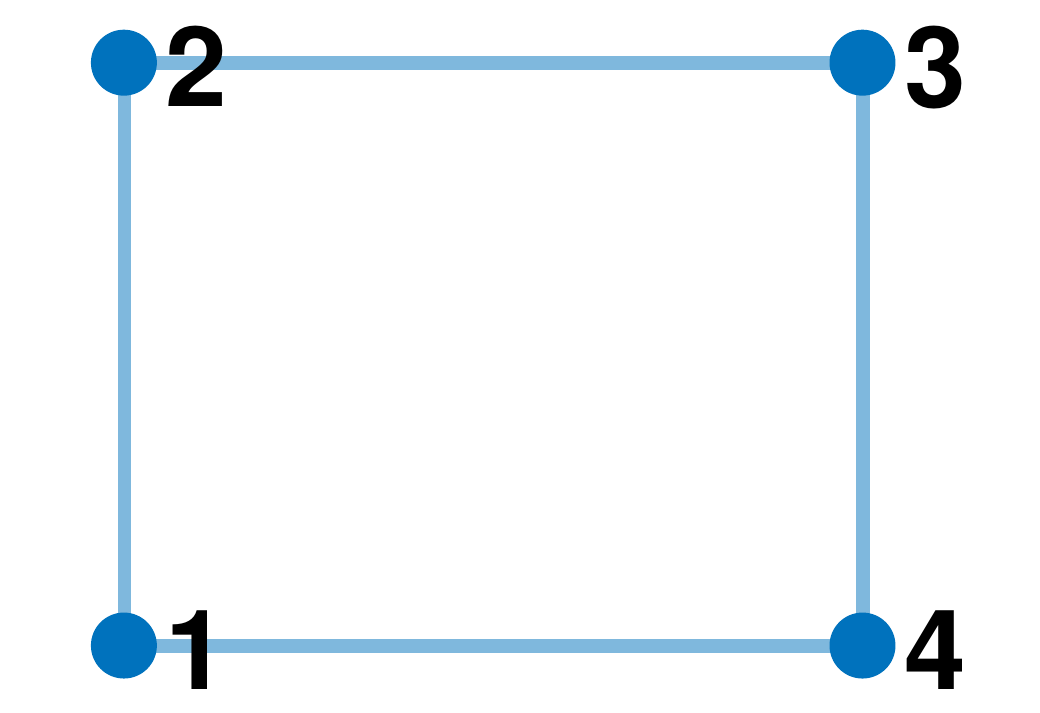}}
\subfloat[]{\label{fig1c:sec_3}\includegraphics[width=0.33\linewidth]{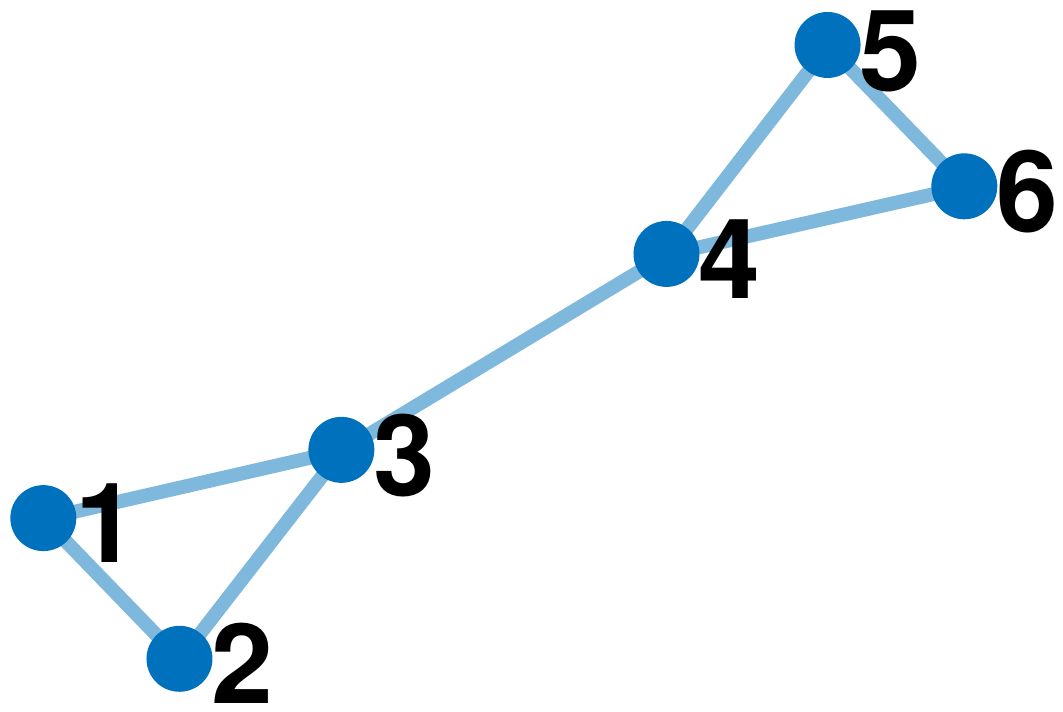}}
\caption{Some graphs.}
\label{fig1:sec_3}
\end{figure}

The lower bound in \eqref{lowerbound} is in a simple form and $a_{\min}^{(l)}$ cannot be easily replaced. One may consider to use the minimum degree of the network, i.e., $d_{m}=\min_{i\in\cV} \sum_{j=1}^n a_{ij}$. This is impossible in some cases. Consider the $1$-connected graph in Fig. \ref{fig1c:sec_3} with unit edge weights. Then, $a_{\min}^{(1)}=1$ and $d_{m}=2$.  Let $[s_1,...,s_6]=[1,2,3,4,5,6]$ and $f_i(x)=|x-s_i|,\ \forall i\in\{1,...,6\}$. Set $\bm x =[x_1,...,x_6]^\mathsf{T}=[3,3,3,4,4,4]^\mathsf{T}$ and use similar arguments as Example \ref{example_1}, one can inference that the lower bound $\underline{\lambda}$ in \eqref{lowerbound} cannot be reduced to ${n c}/(2d_{m})=3/2$.

A similar penalty method interpretation of \eqref{dgd} with constant $\rho^k$ is provided in \cite{mokhtari2017network}, where the penalty function is chosen as $\bm x^\mathsf{T}L\bm x=\frac{1}{2}\sum_{i,j} a_{ij}(x_i-x_j)^2$ and $L$ is the graph Laplacian matrix. However, such a quadratic penalty function cannot always guarantee the existence of a finite $\lambda$ for the equivalence of the two problems. 

%

By \cite{bertsekas2015convex}, ${\bm x^\star}$ is an optimal solution of \eqref{penalty} if and only if $0\in \partial \tilde{f}_\lambda({\bm x^\star})$. Part (b) of Theorem \ref{theo1} shows that for any $\bm x \notin\{\alpha\bone| \alpha\in\bR\}$, the norm of the corresponding subgradient is uniformly greater than a positive lower bound, which clearly shows the non-optimality of ${\bm x}$.

Assumption \ref{assum2}\ref{assum2a} in Theorem \ref{theo1} can also be  removed to ensure the equivalence of the problems \eqref{original} and \eqref{penalty}. 
\begin{theo}\label{theo8} Suppose that Assumption \ref{assum1} holds, and that the multi-agent network is $l$-connected. If the penalty factor is selected as
\bee\label{eq2_theo8}
\lambda>\frac{nc^\star}{2a_\text{min}^{(l)}},
\ene
where $c^\star=\min_{x^\star\in\cX^\star}\max_{y\in\cup_{i\in\cV}\partial f_i(x^\star)} |y|$, and $a_\text{min}^{(l)}$ is given in \eqref{minsum}, then the set of optimal solutions and optimal value of the penalized problem \eqref{penalty} are $ \tilde{\cX}^\star=\{x^\star\bone|x^\star\in \cX^\star\}$ and $f^\star$, respectively.
\end{theo}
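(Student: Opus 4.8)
The plan is to reuse the architecture of the proof of Theorem~\ref{theo1}(a) — bound $\tilde f_\lambda(\bm x)$ below by $f^\star$ plus a nonnegative multiple of $v(\bm x)$, then argue strictness through the same two cases — but to replace the \emph{global} subgradient bound of Assumption~\ref{assum2}\ref{assum2a} by a \emph{local} bound evaluated at a single, carefully chosen optimizer. Concretely, I would fix an $x^\star\in\cX^\star$ attaining the minimum defining $c^\star$ (the strict inequality \eqref{eq2_theo8} lets me take a near-minimizer should the minimum fail to be attained), so that every $y\in\partial f_i(x^\star)$ satisfies $|y|\leq c^\star$ for all $i\in\cV$, and then expand $g$ around $x^\star\bone$ rather than around $\bar x\bone$.

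The decisive new ingredient is to exploit the optimality of $x^\star$ for $f$. Since each $f_i$ is finite convex on $\bR$, the Moreau--Rockafellar sum rule gives $\partial f(x^\star)=\sum_{i\in\cV}\partial f_i(x^\star)$, and $0\in\partial f(x^\star)$ because $x^\star$ minimizes $f$. Hence there exist subgradients $g_i\in\partial f_i(x^\star)$ with $\sum_{i\in\cV} g_i=0$ and $|g_i|\leq c^\star$. Using convexity of each $f_i$ at $x^\star$ with this particular selection,
\[
g(\bm x)=\sum_{i\in\cV}f_i(x_i)\geq f^\star+\sum_{i\in\cV}(x_i-x^\star)g_i .
\]
Because $\sum_i g_i=0$, the cross term is invariant under recentering: with $\tilde x=\tfrac12(\max_i x_i+\min_i x_i)$ one has $\sum_i(x_i-x^\star)g_i=\sum_i(x_i-\tilde x)g_i$, and since $|x_i-\tilde x|\leq v(\bm x)/2$ and $\sum_i|g_i|\leq nc^\star$, this term is at least $-\tfrac{nc^\star}{2}v(\bm x)$. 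Thus $g(\bm x)\geq f^\star-\tfrac{nc^\star}{2}v(\bm x)$.

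Combining this with the inequality $h(\bm x)\geq a_\text{min}^{(l)}v(\bm x)$ already established in \eqref{eq2_theo3} yields
\[
\tilde f_\lambda(\bm x)-f^\star\geq\Bigl(\lambda a_\text{min}^{(l)}-\tfrac{nc^\star}{2}\Bigr)v(\bm x),
\]
whose right side is nonnegative precisely under \eqref{eq2_theo8}. Strictness then follows exactly as in Theorem~\ref{theo1}: if $\bm x\neq\alpha\bone$ then $v(\bm x)>0$ forces $\tilde f_\lambda(\bm x)>f^\star$, while if $\bm x=\alpha\bone$ with $\alpha\notin\cX^\star$ then $h(\bm x)=0$ gives $\tilde f_\lambda(\bm x)=f(\alpha)>f^\star$; together with $\tilde f_\lambda(x^\star\bone)=f^\star$ this identifies $\tilde\cX^\star$ as the optimal set with value $f^\star$. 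I expect the only real obstacle to be the justification of the zero-sum selection $\sum_i g_i=0$ — that is, that the optimality condition $0\in\partial f(x^\star)$ distributes across the local subdifferentials — since this is exactly the step that lets the bound depend on $c^\star$ (subgradients at one optimum) rather than on a uniform bound $c$; everything downstream is the bookkeeping of Theorem~\ref{theo1}.
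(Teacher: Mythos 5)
Your proposal is correct, and it takes a genuinely different route from the paper's. The paper proves Theorem~\ref{theo8} by constructing an auxiliary problem: it clips each $f_i$ into a surrogate $f_i^\epsilon$ (see \eqref{defif}) whose (sub)gradients are uniformly bounded by $d_M^\epsilon=\max_i(|\nabla f_i(x^\star)|+\epsilon)$, shows the clipped problem has the same minimizers and value as \eqref{original} (Claim 1), invokes Theorem~\ref{theo1} on the clipped problem to identify its penalization (Claim 2), and then passes back to $\tilde f_\lambda$ via $f_i^\epsilon\leq f_i$ with equality on $\tilde\cX^\star$, finally letting $\epsilon\to0$; the argument is written out only for differentiable $f_i$. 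You instead rerun the key inequality of Theorem~\ref{theo1}(a) directly, moving the expansion point of $g$ from $\bar x\bone$ to $x^\star\bone$ and exploiting the stationarity condition $0\in\partial f(x^\star)=\sum_i\partial f_i(x^\star)$ (valid by Moreau--Rockafellar since each $f_i$ is finite convex on $\bR$) to obtain a zero-sum selection $g_i$ with $|g_i|\leq c^\star$; the zero-sum property is exactly what lets you recenter at $\tilde x$ and recover the constant $nc^\star/2$ in place of $nc/2$, after which the two-case strictness argument goes through verbatim (Case (b) needs no lower bound at all, only $h(\alpha\bone)=0$ and $f(\alpha)>f^\star$). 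Your route is shorter, self-contained, handles the nondifferentiable case natively rather than by the paper's ``similar ideas apply'' remark, and your treatment of possible non-attainment of the minimum defining $c^\star$ via a near-minimizer and the strict inequality \eqref{eq2_theo8} is sound; what the paper's construction buys in exchange is a reduction that reuses Theorem~\ref{theo1} as a black box and exhibits an explicit bounded-gradient surrogate problem, which is conceptually informative but not logically necessary.
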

\begin{proof}
We prove the results only for differentiable $f_i(x)$ to save space, where $\nabla f_i(x)$ now becomes the gradient. Similar ideas can also be applied to the non-differentiable case. We first show that for any $x_a^\star, x_b^\star\in\cX^\star$, $\nabla f_i(x_a^\star)=\nabla f_i(x_b^\star), \forall i\in\cV$. There is no loss of generality to let $x_a^\star<x_b^\star$. Since $f_i(x)$ is convex,
\[
\begin{aligned}
f_i(x_a^\star)\geq f_i(x_b^\star)+\nabla f_i(x_b^\star)(x_a^\star-x_b^\star),\\
f_i(x_b^\star)\geq f_i(x_a^\star)+\nabla f_i(x_a^\star)(x_b^\star-x_a^\star).
\end{aligned}
\] 

Summing the two inequalities leads to $(x_a^\star-x_b^\star)(\nabla f_i(x_a^\star)-\nabla f_i(x_b^\star))\geq 0$, which  implies $\nabla f_i(x_a^\star)\leq\nabla f_i(x_b^\star)$. This together with that $\sum_{i=1}^n\nabla f_i(x_a^\star)=\sum_{i=1}^n\nabla f_i(x_b^\star)=0$ yields that $\nabla f_i(x_a^\star)=\nabla f_i(x_b^\star)$. Then given an arbitrary $\epsilon>0$, we can let $d_i^\epsilon =|\nabla f_i(x^\star)|+\epsilon$ for all $x^\star\in\cX^\star$. 

Denote the conjugate function of $f_i(x)$ by $f_i^\star(y)$, i.e., $f_i^\star(y)=\sup_{x\in\bR} \{xy-f_i(x)\}$. Consider the following optimization problem 
\bee\label{eq1_theo8}
\minimize_{x\in\bR}\ f_\epsilon(x):=\sum_{i=1}^n f_i^\epsilon(x)
\ene
where 
\bee\label{defif}
f_i^\epsilon(x)=\left\{\begin{aligned}
        &d_i^\epsilon x-f_i^\star(d_i^\epsilon ),&&\text{if }\nabla f_i(x)\geq d_i^\epsilon,\\
	&-d_i^\epsilon x-f_i^\star(-d_i^\epsilon ),&&\text{if }\nabla f_i(x)\leq -d_i^\epsilon,\\ 
	&f_i(x),&&\text{otherwise}.\end{aligned}\right.
\ene
We claim the following. 

{\em Claim 1}: Problem \eqref{eq1_theo8} is equivalent to the problem \eqref{original}.

To this end, we note that $f_\epsilon(x)$ is convex as well. Since $\nabla f_i(x^\star)\in(-d_i^\epsilon, d_i^\epsilon )$ for all $x^\star\in\cX^\star$, it follows that
\bee\label{eq1_theo9}
\nabla f_\epsilon(x^\star)=\sum_{i=1}^n\nabla f_i(x^\star)=0,\ \forall x^\star\in\cX^\star. 
\ene

For any $x\notin\cX^\star$, where there is no loss of generality to assume that $x$ is strictly greater than all elements in $\cX^\star$, we obtain that $\nabla f_i(x)\geq\nabla f_i(x^\star)$ as $(x-x^\star)(\nabla f_i(x)-\nabla f_i(x^\star))\geq 0$.
Jointly with \eqref{defif}, it implies that  for all $x^\star\in\cX^\star$, 
 \bee\label{inequal}
 \nabla f_i^\epsilon(x)=\min\{\nabla f_i(x),d_i^\epsilon\}\geq\nabla f_i(x^\star), \forall i\in\cV.
 \ene 
 If the inequalities in \eqref{inequal} strictly hold for some $i\in\cV$, then
$$\nabla f_\epsilon(x)=\sum_{i=1}^n\nabla f_i^\epsilon(x)>\sum_{i=1}^n\nabla f_i(x^\star)=0.$$
That is, $\nabla f_\epsilon(x)\neq 0$ for any $x\notin\cX^\star$. While $f_\epsilon(x^\star)=f(x^\star)$ for any $x^\star\in\cX^\star$, Claim 1 is verified.

It remains to show that the inequalities in \eqref{inequal} must strictly hold for some $i\in\cV$. On the contrary,  suppose that $\nabla f_i^\epsilon(x)=\nabla f_i(x^\star)$ for all $i\in\cV$. Since $d_i^\epsilon>\nabla f_i(x^\star)$ for all $i\in\cV$, it follows from \eqref{inequal} that $\nabla f_i(x)=\nabla f_i(x^\star)$ for all $i\in\cV$. Then, $\sum_{i=1}^n\nabla f_i(x)=\sum_{i=1}^n\nabla f_i(x^\star)=0$, which contradicts that $x\notin\cX^\star$. 

{\em Claim 2}:  Problem \eqref{eq1_theo8} is equivalent to  the penalized problem  \eqref{penalty} in the same sense as Theorem \ref{theo1}(a).

Consider the following penalized problem
\bee\label{penalizedprob}
\minimize_{\bm x\in\bR^n}\ \tilde{f}_{\epsilon,\lambda}(\bm x):=g_\epsilon(\bm x)+\lambda h(\bm x)
\ene
where $g_\epsilon(\bm x):=\sum_{i=1}^{n} f_i^\epsilon(x_i)$.

Noting that $|\nabla f_i^\epsilon(x)|\leq\max_{i\in\cV}{d_i^\epsilon}:=d_M^\epsilon$ for all $x$, it follows from Theorem \ref{theo1}  that by selecting $\lambda>{n d_M^\epsilon}/(2a_\text{min}^{(l)})$, problem \eqref{eq1_theo8} is equivalent to problem \eqref{penalizedprob}.

Since $f_i^\epsilon(x)\leq f_i(x)$ for all $x\in\bR$ and $i\in\cV$, we have that $g_\epsilon(\bm x)\leq g(\bm x)$. Then, $\tilde{f}_{\epsilon,\lambda}(\bm x)\leq\tilde{f}_{\lambda}(\bm x)$ for all $\bm x\in\bR^n$, and $\tilde{f}_{\lambda}(\bm x)\geq\tilde{f}_{\epsilon,\lambda}(\bm x)\geq\min_{x\in\bR}f_{\epsilon}(x)=f^\star$, where all equalities hold if and only if $\bm x\in\tilde{\cX}^\star$. That is, the penalized problem  \eqref{penalty} is equivalent to problem \eqref{penalizedprob}, which implies Claim 2 as well. 

Finally, we conclude that the penalized problem \eqref{penalty} is equivalent to the original problem \eqref{original}, provided that $\lambda>{nd_M^\epsilon}/({2a_\text{min}^{(l)}})$. Given any $\lambda>{nc^\star}/({2a_\text{min}^{(l)}})$, there exists a positive $\epsilon>0$ such that $\lambda>{nd_M^\epsilon}/({2a_\text{min}^{(l)}})$, which completes the proof.
\end{proof}
\begin{remark}
It is usually difficult to obtain $c^\star$ in \eqref{eq2_theo8}. In applications, an upper bound can be used instead. Specifically, let $x_i^\text{opt}$ be an optimal solution  of $f_i(x)$, then we have $c^\star\leq \min_i\max_{j}|\nabla f_i(x_j^\text{opt})|$.
\end{remark}

Using the novel idea of constructing the optimization problem \eqref{eq1_theo8}, Theorem \ref{theo8} extends the results of Theorem 1 to objective functions with unbounded (sub)gradients, which includes quadratic functions as a special case. Obviously, the quadratic form constitutes an important class of objective functions in real applications.  

\section{Convergence Analysis}\label{sec_4}
In this section we examine the convergence behavior of \ref{protocol}. If $\rho^k$ is diminishing, all agents converge to the same optimal solution of problem \eqref{original} under \ref{protocol}. With a constant stepsize, all agents eventually converge to a neighborhood of an optimal solution. For both cases, we perform the non-asymptotic analysis to determine the their convergence rates. 

Let $\{\bm x^k\}$ be generated by \eqref{subgrad}, it follows from  \cite{bertsekas2015convex} to easily establish the following inequalities.
\begin{enumerate}[label=(\alph*)]
\item Under Assumption \ref{assum2}\ref{assum2a}, it holds that for all $x^\star\in\cX^\star$,
\bee\label{eq_sg}
\begin{aligned}
&\|\bm x^{k+1}-x^\star\bone\|^2\\
&\leq\|\bm x^{k}-x^\star\bone\|^2-2\rho^k(\tilde{f}_\lambda(\bm x^k)-f^\star)+(\rho^k)^2c_a^2\\
\end{aligned}
\ene
\item Under Assumption \ref{assum2}\ref{assum2b}, it holds that for all $x^\star\in\cX^\star$,
\bee\label{eq_sg2}
\hspace{-0.2cm}\begin{aligned}
\|\bm x^{k+1}-x^\star\bone\|^2&\leq(1+(\rho^k)^2 c^2)\|\bm x^{k}-x^\star\bone\|^2\\
&\quad-2\rho^k(\tilde{f}_\lambda(\bm x^k)-f^\star)+(\rho^k)^2c_b^2
\end{aligned}
\ene
\end{enumerate}
where $c_a,c_b$ are given in \eqref{c_a} and \eqref{c_b}, respectively. 

Proof of the convergence of \ref{protocol} with diminishing stepsizes is straightforward.
\begin{theo}\label{theo2}  Suppose that the conditions in Theorem \ref{theo1}, or the conditions in Theorem \ref{theo8} and Assumption \ref{assum2}\ref{assum2b} hold. Let $\{\bm x^k\}$ be generated by \ref{protocol} and $\rho^k$ satisfy
\beq
\sum_{k=0}^\infty\rho^k=\infty,~\text{and}~ \sum_{k=0}^\infty(\rho^k)^2<\infty.
\enq
Then, there is some $x^\star\in\cX^\star$ such that
$\lim_{k\ra\infty}\bm x^k=x^\star\bone.$
\end{theo}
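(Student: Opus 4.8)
The plan is to treat \ref{protocol} purely as the subgradient iteration \eqref{subgrad} for the penalized problem \eqref{penalty}, and to run a deterministic quasi-Fej\'er argument, using that Theorem \ref{theo1} (resp.\ Theorem \ref{theo8}) guarantees $\tilde f_\lambda$ has optimal value $f^\star$ attained exactly on $\tilde{\cX}^\star=\{x^\star\bone\mid x^\star\in\cX^\star\}$. The starting point is the one-step estimate: under Assumption \ref{assum2}\ref{assum2a} this is \eqref{eq_sg}, and under Assumption \ref{assum2}\ref{assum2b} it is \eqref{eq_sg2}. Fix any $x^\star\in\cX^\star$ and abbreviate $u^k=\|\bm x^k-x^\star\bone\|^2$; both estimates then take the common form $u^{k+1}\le(1+a^k)u^k-b^k+e^k$ with $a^k=(\rho^k)^2c^2$ (or $a^k=0$ in case (a)), $b^k=2\rho^k(\tilde f_\lambda(\bm x^k)-f^\star)\ge0$, and $e^k=(\rho^k)^2c_b^2$ (or $(\rho^k)^2c_a^2$). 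Since $\sum_k(\rho^k)^2<\infty$, both $\sum_k a^k$ and $\sum_k e^k$ are finite.

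Second, I would apply the deterministic version of the Robbins--Siegmund lemma to this recursion: whenever $a^k,b^k,e^k\ge0$ with $\sum_k a^k<\infty$ and $\sum_k e^k<\infty$, the nonnegative sequence $u^k$ converges and $\sum_k b^k<\infty$. This yields two consequences at once: (i) for \emph{every} fixed $x^\star\in\cX^\star$ the distance $\|\bm x^k-x^\star\bone\|$ converges, so in particular $\{\bm x^k\}$ is bounded; and (ii) $\sum_k\rho^k(\tilde f_\lambda(\bm x^k)-f^\star)<\infty$. Combining (ii) with the divergence $\sum_k\rho^k=\infty$ and the nonnegativity $\tilde f_\lambda(\bm x^k)-f^\star\ge0$ forces $\liminf_{k\ra\infty}\bigl(\tilde f_\lambda(\bm x^k)-f^\star\bigr)=0$.

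Third, I would pin down a single limit. By the $\liminf$ just obtained there is a subsequence along which $\tilde f_\lambda(\bm x^k)\ra f^\star$; by the boundedness from (i) this subsequence has a further subsequence converging to some $\bar{\bm x}$, and continuity of the finite convex function $\tilde f_\lambda$ gives $\tilde f_\lambda(\bar{\bm x})=f^\star$, i.e.\ $\bar{\bm x}\in\tilde{\cX}^\star$, so $\bar{\bm x}=\bar x^\star\bone$ for some $\bar x^\star\in\cX^\star$. I then reuse (i) for this specific $\bar x^\star$: the sequence $\|\bm x^k-\bar x^\star\bone\|$ converges and has a subsequence tending to $\|\bar{\bm x}-\bar x^\star\bone\|=0$, so its limit is $0$. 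Hence the whole sequence satisfies $\bm x^k\ra\bar x^\star\bone$, proving the claim with $x^\star=\bar x^\star$.

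The delicate point is step two: establishing genuine convergence (not merely boundedness or a single convergent subsequence) of the distance sequence $\|\bm x^k-x^\star\bone\|$ for each fixed optimal $x^\star$, since it is precisely this convergence---and not just the $\liminf$ on function values---that promotes the subsequential limit in step three to convergence of the entire iterate sequence to one optimal consensus point. In case (a) this is an elementary summability argument, but in case (b) the multiplicative factor $(1+(\rho^k)^2c^2)$ makes $u^k$ only an \emph{almost}-supermartingale, so the Robbins--Siegmund-type lemma (or an equivalent telescoping bound exploiting $\prod_k(1+a^k)<\infty$) is essential; the remainder of the proof is routine.
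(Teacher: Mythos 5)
Your proposal is correct and follows essentially the same route as the paper: the paper's proof simply observes that \ref{protocol} is the exact subgradient iteration \eqref{subgrad} for the penalized problem \eqref{penalty} and then cites Proposition 3.2.6 of \cite{bertsekas2015convex} for convergence of the subgradient method under these stepsizes, combining it with the equivalence from Theorem \ref{theo1} or \ref{theo8}. What you have written is, in effect, a self-contained proof of that cited proposition (the quasi-Fej\'er/Robbins--Siegmund argument starting from \eqref{eq_sg} or \eqref{eq_sg2}), so the content matches; your explicit handling of the multiplicative factor in case (b) is a correct and welcome elaboration of a step the paper leaves to the reference.
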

\begin{proof}
Recall that \ref{protocol} is the exact iteration of the subgradient method of problem \eqref{penalty}. It follows from Proposition 3.2.6  in \cite{bertsekas2015convex} that $\{\bm x^k\}$ converges to some optimal solution of problem \eqref{penalty}. Combined with Theorem \ref{theo1} or Theorem \ref{theo8}, the result follows immediately.
\end{proof}
\begin{remark} In the sequel, we only present the results for the case of bounded subgradient, i.e., Assumption \ref{assum2}\ref{assum2a} holds, which can be replaced by Assumption \ref{assum2}\ref{assum2b}. In fact, we only need to use \eqref{eq_sg2} to replace \eqref{eq_sg} to establish the main results. Due to space limitation, we do not include details here. 
\end{remark}

Our next result provides the non-asymptotic result to evaluate the convergence rate for $\rho^k=k^{-\alpha},\alpha\in[0.5,1]$. To this end, we define
\bee\label{defidis}
d(\bm x)=\min_{x^\star\in \cX^\star}\|\bm x-x^\star\bone\|.
\ene

\begin{theo}\label{theo3}
Suppose that the conditions in Theorem \ref{theo1} hold, and let $\{\bm x^k\}$ be generated by \ref{protocol}. If $\rho^k={k^{-\alpha}}, \alpha\in(0.5,1]$, then
\bee\label{eq5_theo3}
 \begin{aligned}
\min_{1< t\leq {k}}f(x_i^t)-f^\star&\leq\frac{(2\alpha-1)d(\bm x^0)^2+2\alpha c_a^2}{2(2\alpha-1)s(k)},\ \forall i\in\cV\\
\end{aligned}
\ene
where $\bm x^0$ is the initial point, and
\[
s({k})=\left\{\begin{aligned}&\frac{1}{1-\alpha}(k^{1-\alpha}-1),&&\text{ if }\alpha\in(0.5,1),\\ &\ln(k),&&\text{ if }\alpha=1.\end{aligned}\right.
\]
Moreover, if $\rho^k=1/\sqrt{k}$,  we have that
\begin{align}
\min_{1< t\leq {k}}f(x_i^k)-f^\star&\leq\frac{d(\bm x^0)^2+c_a^2\ln(k)}{4\sqrt{k}},\ \forall i\in\cV.
\end{align}
\end{theo}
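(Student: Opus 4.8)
The plan is to run the standard subgradient-method accounting on the penalized problem \eqref{penalty}, for which \ref{protocol} is exactly the subgradient iteration, and to translate the resulting bound on the penalized objective back to the original objective $f$ only at the very end. The engine is the per-step inequality \eqref{eq_sg}, which is available here because the conditions of Theorem \ref{theo1} include Assumption \ref{assum2}\ref{assum2a}. Fixing an $x^\star\in\cX^\star$ that attains $d(\bm x^0)=\|\bm x^0-x^\star\bone\|$, I would sum \eqref{eq_sg} over $t=1,\dots,k$ and discard the nonnegative terminal term $\|\bm x^{k+1}-x^\star\bone\|^2\ge 0$, which telescopes to
$$2\sum_{t=1}^k \rho^t\big(\tilde{f}_\lambda(\bm x^t)-f^\star\big)\le d(\bm x^0)^2+c_a^2\sum_{t=1}^k (\rho^t)^2.$$
By Theorem \ref{theo1}(a) each term $\tilde{f}_\lambda(\bm x^t)-f^\star$ is nonnegative, so the left-hand side is bounded below by $\big(\min_{1<t\le k}(\tilde{f}_\lambda(\bm x^t)-f^\star)\big)\sum_{t=1}^k\rho^t$, which isolates the quantity to be estimated.

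Next I would bound the two scalar series by integral comparison. For $\rho^t=t^{-\alpha}$ with $\alpha\in(0.5,1]$ the squared series obeys $\sum_{t=1}^k t^{-2\alpha}\le 1+\int_1^k \tau^{-2\alpha}\,d\tau\le \frac{2\alpha}{2\alpha-1}$, while $\sum_{t=1}^k t^{-\alpha}\ge \int_1^{k+1}\tau^{-\alpha}\,d\tau\ge s(k)$, the boundary case $\alpha=1$ producing the logarithm $\ln k$ in $s(k)$. Substituting both into the displayed inequality and dividing by $2s(k)$ returns exactly the stated constant $\frac{(2\alpha-1)d(\bm x^0)^2+2\alpha c_a^2}{2(2\alpha-1)s(k)}$. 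The case $\rho^t=1/\sqrt t$ (that is $\alpha=1/2$, excluded above precisely because the squared series no longer converges) goes through identically, with $\sum_{t=1}^k 1/t\le 1+\ln k$ replacing the bounded constant and $\sum_{t=1}^k t^{-1/2}\ge 2(\sqrt{k+1}-1)$, of order $2\sqrt k$, replacing $s(k)$; this yields the claimed $\frac{d(\bm x^0)^2+c_a^2\ln k}{4\sqrt k}$.

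The one genuinely nontrivial step, and where I expect the real work to lie, is converting a bound on the penalized objective $\tilde{f}_\lambda(\bm x^t)$ into a bound on the \emph{original} objective evaluated at an individual coordinate, $f(x_i^t)$, since the iterates need not be at consensus. The clean half is already available: inequality \eqref{eq6_theo3} gives $\tilde{f}_\lambda(\bm x)\ge f(\bar x)$, so the penalized value dominates $f$ at the average $\bar x$. To reach each node I would establish the sharper pointwise comparison $f(x_i)\le \tilde{f}_\lambda(\bm x)$ for every $i\in\cV$, arguing as in the proof of Theorem \ref{theo1}(a) but expanding the first-order term about the midpoint $\tilde x=\tfrac12(\max_i x_i+\min_i x_i)$, for which $|x_i-\tilde x|\le v(\bm x)/2$ holds for all $i$, and combining the Lipschitz bound $nc$ on $f$ with the penalty lower bound $h(\bm x)\ge a_\text{min}^{(l)}v(\bm x)$ from \eqref{eq2_theo3}. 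Granting $f(x_i^t)\le \tilde{f}_\lambda(\bm x^t)$, the minimum over $t$ of $f(x_i^t)-f^\star$ cannot exceed the minimum of $\tilde{f}_\lambda(\bm x^t)-f^\star$ already bounded above, and the theorem follows for every $i$. The delicate point is that this pointwise comparison consumes the penalty margin $\lambda a_\text{min}^{(l)}-\tfrac{nc}{2}$ more aggressively than the averaged one does, so absorbing the consensus gap $v(\bm x^t)$ while keeping the constant exactly in the stated form is the part that requires the most care.
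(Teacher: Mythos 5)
Your accounting for the subgradient iteration is exactly the paper's: sum \eqref{eq_sg} over $t=1,\dots,k$, telescope, drop the nonnegative terminal term, lower-bound the left side by $\min_t(\tilde{f}_\lambda(\bm x^t)-f^\star)\sum_t\rho^t$, and estimate the two scalar series by integral comparison; this reproduces \eqref{eq4_theo3} and \eqref{eq3_theo3}, including the separate treatment of $\alpha=1/2$. You have also correctly located the crux in converting the bound on $\tilde{f}_\lambda(\bm x^t)$ into one on $f(x_i^t)$ at each individual node.

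The gap is precisely at that crux, and it is not merely ``delicate'': the inequality $f(x_i)\le\tilde{f}_\lambda(\bm x)$ that you propose to grant does not follow from your sketch under the hypothesis $\lambda>nc/(2a_\text{min}^{(l)})$. Expanding about the midpoint $\tilde x$ gives $f(\tilde x)\le\tilde{f}_\lambda(\bm x)-\lambda h(\bm x)+\tfrac{nc}{2}v(\bm x)$, and passing from $\tilde x$ to $x_i$ costs another $\tfrac{nc}{2}v(\bm x)$ by Lipschitz continuity of $f$, so you land at $f(x_i)\le\tilde{f}_\lambda(\bm x)-\lambda h(\bm x)+nc\,v(\bm x)\le\tilde{f}_\lambda(\bm x)-(\lambda a_\text{min}^{(l)}-nc)v(\bm x)$, which exceeds $\tilde{f}_\lambda(\bm x)$ whenever $\lambda<nc/a_\text{min}^{(l)}$ --- a range the hypothesis permits. (The paper's expansion directly about $x_i^t$ lands at the identical expression.) What the penalty margin actually buys is only $f(x_i^t)\le g(\bm x^t)+\tfrac{nc}{a_\text{min}^{(l)}}h(\bm x^t)\le\tilde{f}_{2\lambda}(\bm x^t)$, i.e.\ domination by the penalized objective with a \emph{doubled} penalty factor; this is \eqref{eq8_theo3}. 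The paper then transfers the bound \eqref{eq3_theo3} to $\tilde{f}_{2\lambda}$ in \eqref{eq9_theo3} and concludes. So to complete your argument you must either strengthen the hypothesis to $\lambda\ge nc/a_\text{min}^{(l)}$, or supply the missing bridge from a bound on $\min_t\tilde{f}_\lambda(\bm x^t)-f^\star$ to one on $\min_t\tilde{f}_{2\lambda}(\bm x^t)-f^\star$, e.g.\ by controlling the extra term $\lambda h(\bm x^t)$ at the minimizing index via \eqref{eq6_theo3}. As written, the proof is incomplete at exactly the step you flagged.
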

\begin{proof}
By Theorem \ref{theo2}, $\{\bm x^k\}$ is a convergent sequence. For any $x^\star\in\cX^\star$, it follows from \eqref{eq_sg} that
\bee\nonumber
2\rho^t(\tilde{f}_\lambda(\bm x^t)-f^\star)\leq\|\bm x^{t}-x^\star\bone\|^2-\|\bm x^{t+1}-x^\star\bone\|^2+(\rho^t)^2c_a^2.
\ene
Summing the above relation over $t\in\{1,...,k\}$ yields
\begin{align}
&2\sum_{t=1}^{k}\rho^t(\tilde{f}_\lambda(\bm x^t)-f^\star)\\
&\leq\|\bm x^{0}-x^\star\bone\|^2-\|\bm x^{{k}+1}-x^\star\bone\|^2+\sum_{t=1}^{k}(\rho^t)^2c_a^2\\
&\leq d(\bm x^0)^2+\sum_{t=1}^{k}(\rho^t)^2c_a^2
\end{align}
where the last inequality holds by choosing $x^\star=\argmin_{x\in\cX^\star}\|\bm x^0-x\bone\|$. Then, it follows that
\bee\label{eq4_theo3}
\min_{0\leq t\leq {k}}\tilde{f}_\lambda(\bm x^t)-f^\star\leq\frac{d(\bm x^0)^2+\sum_{t=1}^{k}(\rho^t)^2c_a^2}{2\sum_{t=1}^{k}\rho^t}.
\ene

Since 
$\int_1^{k}\frac{1}{x^\alpha}dx<\sum_{t=1}^{k} \frac{1}{t^\alpha}<\int_1^{k}\frac{1}{x^\alpha}dx+1,$
we have that $\sum_{t=1}^{k}(\rho^t)^2<\int_1^{k}\frac{1}{x^{2\alpha}}dx+1=\frac{1-k^{1-2\alpha}}{2\alpha-1}+1<\frac{2\alpha}{2\alpha-1}$, and $\sum_{t=1}^{k}\rho^t>\int_1^{k}\frac{1}{x^{\alpha}}dx=s(k)$ if $\alpha\in(0.5,1]$. Together with \eqref{eq4_theo3}, this implies 
\bee\label{eq3_theo3}
\min_{0\leq t\leq {k}}\tilde{f}_\lambda(\bm x^t)-f^\star\leq\frac{(2\alpha-1)d(\bm x^0)^2+2\alpha c_a^2}{2(2\alpha-1)s(k)}.
\ene

By \eqref{eq1_theo3}, it follows that 
\begin{align}
 f({x_i^t}) & \leq\tilde{f}_\lambda(\bm x^t)-\lambda h(\bm x^t)+\|\bm x^t-x_i^t\bone\|\cdot\|\nabla g(x_i^t\bone)\|
\end{align}
which combined with
\begin{align}
 &\|\bm x^t-x_i^t\bone\|\cdot\|\nabla g(x_i^t\bone)\|\\
 & \leq\sqrt{n}\|\bm x^t-x_i^t\bone\|_\infty\cdot\sqrt{n}\|\nabla g(x_i^t\bone)\|_\infty\leq ncv(\bm x^t)
\end{align}
and \eqref{eq5_theo1} yields that
\begin{align}\label{eq8_theo3}
 f({x_i^t}) & \leq\tilde{f}_\lambda(\bm x^t)-\lambda h(\bm x^t)+\frac{nc}{a_\text{min}^{(l)}}h(\bm x^t)\\
 &=g(\bm x^t)+\frac{nc}{a_\text{min}^{(l)}}h(\bm x^t)\leq \tilde{f}_{2\lambda}(\bm x^t)
\end{align}
where the last inequality follows from $\lambda>nc/(2a_{\min}^{(l)})$. 

In view of \eqref{eq3_theo3}, the above implies
\begin{align}\label{eq9_theo3}
 \min_{0\leq t\leq {k}}\tilde{f}_{2\lambda}(\bm x^t)-f^\star\leq\frac{(2\alpha-1)d(\bm x^0)^2+2\alpha c_a^2}{2(2\alpha-1)s(k)}.
\end{align}
The result for $\alpha\in(0.5,1)$ follows from \eqref{eq8_theo3} and \eqref{eq9_theo3}, while the result for $\alpha=0.5$ is from \eqref{eq4_theo3} and \eqref{eq8_theo3}.
\end{proof}

Theorem \ref{theo3} reveals that the convergence rate of the objective function lies between $O(1/\ln(k))$ and $O(\ln(k)/\sqrt{k})$, depending on the choice of $\rho^k$. If $f(x)$ is non-differentiable, the convergence rate is essentially of the same with that of the classical distributed algorithm \eqref{dgd} \cite{shi2015extra}. Thus using only the sign of relative state does not lead to reduction in the convergence rate. However, if $f(x)$ is differentiable or strongly convex,  \ref{protocol} may converge at a rate slower than that of \eqref{dgd} due to the non-smoothness of the second term in \ref{protocol}. Harnessing smoothness to accelerate distributed optimization has been well studied; see e.g., \cite{qu2017harnessing}.

For a constant stepsize, \ref{protocol} approaches a neighborhood of an optimal solution as fast as $O(1/k)$ and the error is proportional to the stepsize. These results are formally stated in Theorem \ref{theo4} and Theorem \ref{theo5}.

\begin{theo}\label{theo4} Suppose that the conditions in Theorem \ref{theo1} hold, and let $\{\bm x^k\}$ be generated by \ref{protocol}. If $\rho^k=\rho$, then
\bee\label{eq7_theo4}
\begin{aligned}
&\limsup_{k\rightarrow\infty}d(\bm x^k)\leq 2\sqrt{n}\max\left\{\tilde{d}(\rho),\frac{\rho c_a^2}{2\lambda a_\text{min}^{(l)}- cn}\right\}+\rho c_a
\end{aligned}
\ene
where $\tilde{\cX}(\rho)=\{x|f(x)\leq f^\star+{\rho c_a^2}/{2}\}$ and $\tilde{d}(\rho)=\max_{x\in\tilde{\cX}(\rho)}d(x)<\infty$.
\end{theo}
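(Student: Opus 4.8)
The plan is to view \ref{protocol} with constant stepsize as the constant-stepsize subgradient method for the penalized problem \eqref{penalty}, and to show the iterates enter, and are thereafter trapped in, a neighborhood of $\tilde{\cX}^\star$ whose radius is controlled by $\rho$. Throughout I write $R:=2\sqrt{n}\max\{\tilde{d}(\rho),\rho c_a^2/(2\lambda a_\text{min}^{(l)}-cn)\}$, so the target is $\limsup_k d(\bm x^k)\le R+\rho c_a$, and I let $S:=\{\bm x\mid \tilde{f}_\lambda(\bm x)-f^\star\le \rho c_a^2/2\}$ be the relevant sublevel set of the penalized objective.

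First I would convert a bound on the penalized value into a bound on $d(\bm x)$. Suppose $\bm x\in S$. Inequality \eqref{eq6_theo3} from the proof of Theorem \ref{theo1} gives $\tilde{f}_\lambda(\bm x)-f^\star\ge (f(\bar{x})-f^\star)+(\lambda a_\text{min}^{(l)}-cn/2)v(\bm x)$; since both terms on the right are nonnegative, membership in $S$ yields simultaneously $v(\bm x)\le \rho c_a^2/(2\lambda a_\text{min}^{(l)}-cn)$ and $f(\bar{x})\le f^\star+\rho c_a^2/2$, i.e. $\bar{x}\in\tilde{\cX}(\rho)$ and hence $\min_{x^\star\in\cX^\star}|\bar{x}-x^\star|\le \tilde{d}(\rho)$. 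Combining $d(\bm x)\le \|\bm x-\bar{x}\bone\|+\sqrt{n}\min_{x^\star\in\cX^\star}|\bar{x}-x^\star|$ with $\|\bm x-\bar{x}\bone\|\le \sqrt{n}\,\|\bm x-\bar{x}\bone\|_\infty\le \sqrt{n}\,v(\bm x)$ then gives $d(\bm x)\le \sqrt{n}\,v(\bm x)+\sqrt{n}\min_{x^\star\in\cX^\star}|\bar{x}-x^\star|\le R$. Thus $\bm x\in S\Rightarrow d(\bm x)\le R$, and contrapositively $d(\bm x)>R\Rightarrow \bm x\notin S$.

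Next I would establish forward invariance of $\{\bm x\mid d(\bm x)\le R+\rho c_a\}$. The single-step bound $d(\bm x^{k+1})\le d(\bm x^k)+\rho\|\nabla\tilde{f}_\lambda(\bm x^k)\|\le d(\bm x^k)+\rho c_a$, using \eqref{c_a}, handles the case $d(\bm x^k)\le R$. If instead $R<d(\bm x^k)\le R+\rho c_a$, then $\bm x^k\notin S$, so $\tilde{f}_\lambda(\bm x^k)-f^\star>\rho c_a^2/2$, and applying \eqref{eq_sg} with $\rho^k=\rho$ at the optimal point nearest $\bm x^k$ gives $d(\bm x^{k+1})^2\le d(\bm x^k)^2-2\rho(\tilde{f}_\lambda(\bm x^k)-f^\star)+\rho^2c_a^2<d(\bm x^k)^2$, so $d$ strictly decreases. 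In either case $d(\bm x^{k+1})\le R+\rho c_a$, so once the iterate enters this set it cannot leave.

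It then remains to show the iterate reaches $\{d\le R+\rho c_a\}$ in finite time; this is the main obstacle, because \eqref{eq_sg} only drives the \emph{value} $\tilde{f}_\lambda$ toward $f^\star+\rho c_a^2/2$, not the iterate itself. I would argue by contradiction: if $d(\bm x^k)>R+\rho c_a$ for all $k$, then $\bm x^k\notin S$ for every $k$, so $d(\bm x^k)$ is strictly decreasing and converges to some $d_\infty\ge R+\rho c_a$; summing $2\rho(\tilde{f}_\lambda(\bm x^k)-f^\star)-\rho^2c_a^2\le d(\bm x^k)^2-d(\bm x^{k+1})^2$ shows the left side is summable with positive terms, hence $\tilde{f}_\lambda(\bm x^k)-f^\star\to\rho c_a^2/2$. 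Passing to the limit in \eqref{eq6_theo3} forces $\limsup_k v(\bm x^k)\le \rho c_a^2/(2\lambda a_\text{min}^{(l)}-cn)$ and $\limsup_k f(\bar{x}^k)\le f^\star+\rho c_a^2/2$; since $\{\bm x^k\}$ is bounded (as $d(\bm x^k)$ is bounded and $\cX^\star\subseteq\tilde{\cX}(\rho)$ is bounded because $\tilde{d}(\rho)<\infty$) while $f$ is continuous and $\tilde{\cX}(\rho)$ closed, every limit point of $\{\bar{x}^k\}$ lies in $\tilde{\cX}(\rho)$, so $\limsup_k \min_{x^\star\in\cX^\star}|\bar{x}^k-x^\star|\le \tilde{d}(\rho)$. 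Combining these exactly as in the second paragraph yields $d_\infty=\limsup_k d(\bm x^k)\le R$, contradicting $d_\infty\ge R+\rho c_a>R$. Hence the iterate enters $\{d\le R+\rho c_a\}$ and, by forward invariance, remains there, which proves $\limsup_k d(\bm x^k)\le R+\rho c_a$.
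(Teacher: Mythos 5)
Your proof is correct, and it uses the same three ingredients as the paper's Appendix \ref{appendix_b} --- the implication ``$d(\bm x)>R \Rightarrow \tilde{f}_\lambda(\bm x)-f^\star>\rho c_a^2/2$'' derived from \eqref{eq6_theo3} (the paper's Claim 1, which you state in contrapositive form), the per-step displacement bound $d(\bm x^{k+1})\le d(\bm x^k)+\rho c_a$ from \eqref{c_a}, and the descent inequality \eqref{eq_sg} --- but it assembles them along a genuinely different route. The paper first proves $\liminf_k\|\bm x^k-x_0^\star\bone\|\le c_\rho$ by asserting a uniform margin $\epsilon>0$ with $\tilde{f}_\lambda(\bm x^k)-f^\star>\rho c_a^2/2+\epsilon$ (a step that is not fully justified, since Claim 1 only gives a strict but non-uniform inequality), and then upgrades to the $\limsup$ bound via a subsequence contradiction. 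You instead prove forward invariance of $\{d\le R+\rho c_a\}$ and finite-time entry; your entry argument telescopes $2\rho(\tilde{f}_\lambda(\bm x^k)-f^\star)-\rho^2c_a^2\le d(\bm x^k)^2-d(\bm x^{k+1})^2$ to force $\tilde{f}_\lambda(\bm x^k)-f^\star\to\rho c_a^2/2$ and then squeezes $d(\bm x^k)$ back through \eqref{eq6_theo3}, which needs no uniform margin and is in that respect cleaner than the paper's Claim 2. One small repair: your parenthetical ``$\cX^\star\subseteq\tilde{\cX}(\rho)$ is bounded because $\tilde{d}(\rho)<\infty$'' is not valid in general --- $\tilde d(\rho)<\infty$ only bounds $\tilde{\cX}(\rho)\setminus\cX^\star$ (this is exactly how the paper argues finiteness), so if $\cX^\star$ is an unbounded interval your boundedness claim for $\{\bar x^k\}$ fails. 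The conclusion you need, $\limsup_k\min_{x^\star\in\cX^\star}|\bar x^k-x^\star|\le\tilde d(\rho)$, still follows from $\limsup_k f(\bar x^k)\le f^\star+\rho c_a^2/2$ by the right-continuity in $\epsilon$ of the sublevel-set radius $\max_{f(x)\le f^\star+\epsilon}d(x)$ (or by a limit-point argument applied to the bounded sequence $d(\bar x^k)$ rather than to $\bar x^k$ itself), so this is a fixable justification rather than a hole in the strategy.
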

\begin{proof}
See Appendix \ref{appendix_b}. 
\end{proof}

In Theorem \ref{theo4}, $\tilde{d}(0)=0$ and $\tilde{d}(\rho)$ is increasing in $\rho$. Thus, \ref{protocol} under a constant stepsize finally approaches a neighborhood of $x^\star\bone$ for some $x^\star\in\cX^\star$, the size of which decreases to zero as $\rho$ tends to zero. If the order of growth of $f$ near the set of optimal solutions  is available, then $\tilde{d}(\rho)$ can even be determined explicitly, which is given in Corollary \ref{coro1}.
\begin{coro}\label{coro1}
Suppose that the conditions in Theorem \ref{theo4} hold, and that $f(x)$ satisfies
\[f(x)-f^\star\geq\gamma (d(x))^\alpha\]
where $\gamma>0$ and $\alpha\geq 1$. Then
\bee\nonumber
\limsup_{k\rightarrow\infty}d(\bm x^k)\leq 2\sqrt{n}\max\left\{\left(\frac{\rho c_a^2}{2\gamma}\right)^\frac{1}{\alpha},\frac{\rho c_a^2}{2\lambda a_\text{min}^{(l)}- cn}\right\}+\rho c_a
\ene
\end{coro}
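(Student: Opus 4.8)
The plan is to obtain the corollary as a direct specialization of Theorem \ref{theo4}: since its hypotheses are assumed, the bound
\[
\limsup_{k\rightarrow\infty}d(\bm x^k)\leq 2\sqrt{n}\max\left\{\tilde{d}(\rho),\frac{\rho c_a^2}{2\lambda a_\text{min}^{(l)}- cn}\right\}+\rho c_a
\]
is already in hand, and the only remaining work is to replace the implicit quantity $\tilde{d}(\rho)=\max_{x\in\tilde{\cX}(\rho)}d(x)$ by the explicit expression in the statement, where $\tilde{\cX}(\rho)=\{x\mid f(x)\leq f^\star+\rho c_a^2/2\}$.

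First I would fix an arbitrary point $x\in\tilde{\cX}(\rho)$, so that by definition $f(x)-f^\star\leq\rho c_a^2/2$. Feeding this into the growth condition $f(x)-f^\star\geq\gamma(d(x))^\alpha$ gives $\gamma(d(x))^\alpha\leq\rho c_a^2/2$, and since $\gamma>0$ and $\alpha\geq 1$ I may solve for $d(x)$ to obtain the uniform estimate $d(x)\leq(\rho c_a^2/(2\gamma))^{1/\alpha}$. As this holds for every element of $\tilde{\cX}(\rho)$, taking the maximum over the sublevel set yields $\tilde{d}(\rho)\leq(\rho c_a^2/(2\gamma))^{1/\alpha}$.

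The final step is to substitute this estimate back into the right-hand side of Theorem \ref{theo4}. Because $\max\{\cdot,\cdot\}$ is nondecreasing in its first argument, replacing $\tilde{d}(\rho)$ by its upper bound $(\rho c_a^2/(2\gamma))^{1/\alpha}$ can only enlarge the right-hand side, which immediately produces the claimed inequality.

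I do not anticipate any genuine obstacle, as the argument is purely a plug-in of the growth condition into the already-proven Theorem \ref{theo4}. The only subtlety worth flagging is the finiteness requirement $\tilde{d}(\rho)<\infty$ invoked in Theorem \ref{theo4}: the growth condition makes this automatic, since on $\tilde{\cX}(\rho)$ the distance $d(x)$ is controlled by the explicit function $((f(x)-f^\star)/\gamma)^{1/\alpha}$, so the sublevel set is bounded in the metric $d$ and the maximum defining $\tilde{d}(\rho)$ is indeed finite.
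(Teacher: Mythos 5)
Your argument is correct and is essentially the paper's own proof: the paper likewise notes that the growth condition forces $\tilde{d}(\rho)\leq(\rho c_a^2/(2\gamma))^{1/\alpha}$ and then invokes Theorem \ref{theo4} directly. Your additional remark on the finiteness of $\tilde{d}(\rho)$ is a harmless (and valid) bonus, though Theorem \ref{theo4} already establishes it without the growth condition.
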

\begin{proof}
Noting that $\tilde{d}(\rho)\leq({\rho c_a^2}/{2\gamma})^\frac{1}{\alpha}$, the result follows directly from Theorem \ref{theo4}.
\end{proof}

The following theorem evaluates the convergence rate when the stepsize is a constant.

\begin{theo}\label{theo5}
Suppose that the conditions in Theorem \ref{theo4} hold. Then
\bee\label{eq1_theo4}
\begin{aligned}
\min_{0\leq t\leq k}f(x_i^t)-f^\star&\leq\frac{\rho c_a^2}{2}+\frac{d(\bm x^0)^2}{2\rho k},\ \forall i\in\cV.\\
\end{aligned}
\ene
\end{theo}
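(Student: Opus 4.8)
The plan is to run the constant-stepsize version of the telescoping argument already used for the diminishing-stepsize rate in Theorem \ref{theo3}. Because \ref{protocol} is precisely the subgradient iteration \eqref{subgrad} applied to the penalized objective $\tilde{f}_\lambda$, I would begin from the one-step decrease \eqref{eq_sg} specialized to $\rho^k=\rho$, namely $\|\bm x^{t+1}-x^\star\bone\|^2\le\|\bm x^{t}-x^\star\bone\|^2-2\rho(\tilde{f}_\lambda(\bm x^t)-f^\star)+\rho^2 c_a^2$ for every $x^\star\in\cX^\star$. This inequality is available here since the conditions of Theorem \ref{theo4} include those of Theorem \ref{theo1}, and in particular Assumption \ref{assum2}\ref{assum2a}.

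First I would move the term $2\rho(\tilde{f}_\lambda(\bm x^t)-f^\star)$ to the left and sum over $t=0,\dots,k-1$; the distance terms telescope, giving $2\rho\sum_{t=0}^{k-1}(\tilde{f}_\lambda(\bm x^t)-f^\star)\le\|\bm x^0-x^\star\bone\|^2-\|\bm x^{k}-x^\star\bone\|^2+k\rho^2 c_a^2$. Discarding the nonpositive term $-\|\bm x^{k}-x^\star\bone\|^2$, replacing the sum by $k$ times $\min_{0\le t\le k-1}(\tilde{f}_\lambda(\bm x^t)-f^\star)$ (each summand being nonnegative by Theorem \ref{theo1}(a)), and choosing $x^\star=\argmin_{x\in\cX^\star}\|\bm x^0-x\bone\|$ so that $\|\bm x^0-x^\star\bone\|=d(\bm x^0)$ as in \eqref{defidis}, I arrive at $\min_{0\le t\le k}\tilde{f}_\lambda(\bm x^t)-f^\star\le \rho c_a^2/2+d(\bm x^0)^2/(2\rho k)$, which is exactly the asserted right-hand side, but stated for the penalized objective rather than for $f$.

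The step I expect to be the crux is transferring this estimate from $\tilde{f}_\lambda(\bm x^t)$ to the original objective value $f(x_i^t)$ at an individual node. For this I would reuse the node-wise comparison established inside the proof of Theorem \ref{theo3}: combining the subgradient inequality \eqref{eq1_theo3} with the path estimate \eqref{eq5_theo1} and the standing condition $\lambda>nc/(2a_\text{min}^{(l)})$ yields $f(x_i^t)\le g(\bm x^t)+\frac{nc}{a_\text{min}^{(l)}}h(\bm x^t)\le\tilde{f}_{2\lambda}(\bm x^t)$, i.e.\ \eqref{eq8_theo3}. The subtlety is that the telescoping sum controls only $\tilde{f}_\lambda$, so I must argue, exactly as in the passage from \eqref{eq3_theo3} to \eqref{eq9_theo3}, that the same upper bound propagates to $\min_{0\le t\le k}\tilde{f}_{2\lambda}(\bm x^t)-f^\star$; since $\min_{0\le t\le k}f(x_i^t)\le\min_{0\le t\le k}\tilde{f}_{2\lambda}(\bm x^t)$ for every $i\in\cV$, this delivers \eqref{eq1_theo4} and completes the argument.
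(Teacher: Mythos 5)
Your proposal is correct and follows essentially the same route as the paper: the paper's proof of Theorem \ref{theo5} likewise specializes the telescoped bound \eqref{eq4_theo3} to the constant stepsize $\rho^k=\rho$ (giving $\sum_t(\rho^t)^2=k\rho^2$ and $\sum_t\rho^t=k\rho$) and then invokes the node-wise comparison \eqref{eq8_theo3} to pass from $\tilde{f}_\lambda(\bm x^t)$ to $f(x_i^t)$. The only difference is one of exposition — you spell out the telescoping and the transfer via $\tilde{f}_{2\lambda}$ explicitly, while the paper simply cites \eqref{eq4_theo3} and \eqref{eq8_theo3}.
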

\begin{proof}
From \eqref{eq4_theo3} we know that
\bee\nonumber
\min_{0\leq t\leq {k}}\tilde{f}_\lambda(\bm x^t)-f^\star\leq\frac{d(\bm x^0)^2+k\rho^2c_a^2}{2\rho k},
\ene
which together with \eqref{eq8_theo3} implies the result.
\end{proof}
\begin{remark}
The following conclusions can be easily arrived at from Theorem \ref{theo5}.
\begin{enumerate}[label=(\alph*)]
\item   $\min_{0\leq t\leq k}f(x_i^t)$ approaches the interval $[f^\star,f^\star+\frac{\rho c_a^2}{2}]$ at a rate of $O(1/k)$.
\item Given $k$ iterations, let $\rho=\frac{1}{c_a}\frac{d(\bm x^0)}{\sqrt{k}}$, which minimizes the right-hand-side of \eqref{eq1_theo4}. Then
\bee\nonumber
\begin{aligned}
\min_{0\leq t\leq k}f(x_i^t)-f^\star&\leq c_a\frac{d(\bm x^0)}{\sqrt{k}},\ \forall i\in\cV.\\
\end{aligned}
\ene
The multi-agent network converges only to a neighborhood of an optimal solution with an error size $O(k^{-1/2})$. 
\end{enumerate}
\end{remark}

\section{A Distributed Algorithm with Noisy Relative State Information}\label{sec_7}
In real applications, the measurement of the relative state may be noise corrupted. This happens because of including inaccurate sensors, unreliable communications, and poor sensing environment. To capture such inaccuracies, we replace $\text{sgn}(x_i^k-x_j^k)$ in \ref{protocol} with $\text{sgn}(x_i^k-x_j^k+\epsilon_{ij}^k)$, where, for each $i,j\in\cV$, $\{\epsilon_{ij}^k\}$ is a sequence of independent and identically distributed (i.i.d.) Gaussian random variables with zero mean and variance $\sigma_{ij}$, i.e., $\epsilon_{ij}^k\sim N(0,\sigma_{ij}^2)$ . Our objective is then to study the following algorithm
\begin{tcolorbox}[ams equation, size=small]\label{protocol_noisy}\tag{Algo. 2}
x_i^{k+1}=x_i^{k}+\lambda\rho^k\sum_{j\in \cN_i}a_{ij}\text{sgn}(x_j^k-x_i^k+\epsilon_{ij}^k)-\rho^k\nabla f_i(x_i^k).
\end{tcolorbox}
\noeqref{protocol_noisy}
Then, \ref{protocol_noisy} is exactly the iteration of the stochastic subgradient method of the following penalized problem
\bee\label{noisy}
\minimize_{\bm x\in\bR^n}\ \bar{f}_\lambda(\bm x):=g(\bm x)+\lambda \bar{h}(\bm x)
\ene
where $g(x)$ is given in \eqref{reformulate}, and
\bee\label{hx_noisy}
\bar{h}(\bm x)=\frac{1}{2}\sum_{i=1}^n \sum_{j\in \cN_i}a_{ij}\bE\{|x_i-x_j+\epsilon_{ij}|\},
\ene
and $\bE(x)$ denotes the expectation of a random variable $x$. Then, we have the following result.  
%

\begin{lemma}\label{lemma4}
Let $\{\bm x^k\}$ be generated by \ref{protocol_noisy} and $\rho^k$ satisfy
\beq
\sum_{k=0}^\infty\rho^k=\infty,~\text{and}~ \sum_{k=0}^\infty(\rho^k)^2<\infty.
\enq
Under Assumptions \ref{assum1} and \ref{assum2}\ref{assum2b},  $\{\bm x^k\}$ converges to some optimal solution of the problem \eqref{noisy}
\end{lemma}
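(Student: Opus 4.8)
The plan is to view \ref{protocol_noisy} as the stochastic subgradient iteration $\bm x^{k+1}=\bm x^k-\rho^k\hat{\bm g}^k$ for the convex program \eqref{noisy}, where the stochastic subgradient has $i$-th component $\hat g_i^k=\nabla f_i(x_i^k)+\lambda\sum_{j\in\cN_i}a_{ij}\,\text{sgn}(x_i^k-x_j^k-\epsilon_{ij}^k)$, and then to run a supermartingale (Robbins--Siegmund) convergence argument, i.e. the stochastic counterpart of the deterministic scheme behind \eqref{eq_sg2}. The correspondence between \ref{protocol_noisy} and \eqref{noisy} asserted above this lemma is the starting point; the first thing I would make rigorous is the \emph{unbiasedness} relation $\bE\{\hat{\bm g}^k\mid\cF^k\}\in\partial\bar f_\lambda(\bm x^k)$, where $\cF^k=\sigma(\bm x^0,\dots,\bm x^k)$. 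The crucial observation is that the Gaussian convolution in \eqref{hx_noisy} smooths the absolute value, so $\bar h$ is in fact continuously differentiable and $\frac{d}{dw}\bE\{|w+\epsilon|\}=\bE\{\text{sgn}(w+\epsilon)\}$ by differentiation under the integral sign (dominated convergence, the kink carrying zero Gaussian mass). Using the symmetry of the Gaussian to replace $-\epsilon_{ij}^k$ by $\epsilon_{ij}^k$ inside the sign then identifies $\bE\{\hat{\bm g}^k\mid\cF^k\}$ with the (unique) gradient $\nabla\bar f_\lambda(\bm x^k)$.

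Next I would establish a second-moment bound of the form $\bE\{\|\hat{\bm g}^k\|^2\mid\cF^k\}\le C_1+C_2\|\bm x^k-\bm z^\star\|^2$, valid for any fixed optimizer $\bm z^\star$ of \eqref{noisy}. The sign part is a bounded vector (each coordinate at most $\|A\|_\infty$ in magnitude), while Assumption \ref{assum2}\ref{assum2b} bounds $\|\nabla g(\bm x^k)\|^2$ by a constant plus $c^2 d(\bm x^k)^2$, with $d(\cdot)$ as in \eqref{defidis} (cf. \eqref{c_b}); the triangle inequality $d(\bm x^k)\le\|\bm x^k-\bm z^\star\|+d(\bm z^\star)$ then absorbs the distance-to-$\cX^\star$ into the distance to $\bm z^\star$. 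This is the stochastic analogue of \eqref{c_b} and is exactly why the weaker Assumption \ref{assum2}\ref{assum2b}, rather than the boundedness \eqref{uppergrad}, is the natural hypothesis here.

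Combining the subgradient inequality $\bE\{\hat{\bm g}^k\mid\cF^k\}^\mathsf{T}(\bm x^k-\bm z^\star)\ge\bar f_\lambda(\bm x^k)-\bar f_\lambda^\star\ge0$ (with $\bar f_\lambda^\star$ the optimal value of \eqref{noisy}) with the two facts above yields the supermartingale recursion
\[
\begin{aligned}
\bE\{\|\bm x^{k+1}-\bm z^\star\|^2\mid\cF^k\}\le{}&(1+C_2(\rho^k)^2)\|\bm x^k-\bm z^\star\|^2\\
&-2\rho^k(\bar f_\lambda(\bm x^k)-\bar f_\lambda^\star)+C_1(\rho^k)^2,
\end{aligned}
\]
which parallels \eqref{eq_sg2}. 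Since $\sum_k(\rho^k)^2<\infty$, the Robbins--Siegmund theorem gives that $\|\bm x^k-\bm z^\star\|^2$ converges almost surely (so $\{\bm x^k\}$ is a.s. bounded) and that $\sum_k\rho^k(\bar f_\lambda(\bm x^k)-\bar f_\lambda^\star)<\infty$ a.s.; together with $\sum_k\rho^k=\infty$ this forces $\liminf_k(\bar f_\lambda(\bm x^k)-\bar f_\lambda^\star)=0$ a.s. A standard subsequence argument then finishes: along a subsequence $\bm x^{k_j}\to\bar{\bm x}$ with $\bar f_\lambda(\bm x^{k_j})\to\bar f_\lambda^\star$, continuity of $\bar f_\lambda$ makes $\bar{\bm x}$ optimal; choosing $\bm z^\star=\bar{\bm x}$, the a.s. convergence of $\|\bm x^k-\bar{\bm x}\|^2$ together with the vanishing subsequence yields $\bm x^k\to\bar{\bm x}$ a.s.

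I expect the unbiasedness step to be the main obstacle: one must justify the interchange of expectation and differentiation for the nonsmooth integrand, and, more delicately, match the per-edge sign terms of $\hat{\bm g}^k$ against the double-sum structure of \eqref{hx_noisy}, where the symmetry of the noise is what makes the two sides coincide. A secondary point to confirm is that the optimal set of \eqref{noisy} is nonempty, so that a reference optimizer $\bm z^\star$ exists; this is implicit in the statement and should follow from convexity of $\bar f_\lambda$ together with the coercivity that the penalty $\bar h$ imparts.
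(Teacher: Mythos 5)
Your proposal is correct and follows essentially the same route as the paper: the paper's proof simply observes that \ref{protocol_noisy} is the stochastic subgradient iteration for \eqref{noisy} and invokes the standard convergence theorem (citing Borkar, Ch.~5), which is exactly the argument you unroll via unbiasedness, the second-moment bound from Assumption \ref{assum2}\ref{assum2b}, and Robbins--Siegmund. Your expanded treatment of the Gaussian-smoothing/unbiasedness step and the existence of an optimizer of \eqref{noisy} supplies details the paper leaves to the cited reference.
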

\begin{proof}
 Since \ref{protocol_noisy} is exactly the stochastic subgradient method of the problem \eqref{noisy}, the result follows directly from the convergence theorem of stochastic subgradient methods. See e.g. \cite[Chapter 5]{borkar2008stochastic}.
\end{proof}

Note that Assumption \ref{assum2}\ref{assum2b} is weaker than Assumption \ref{assum2}\ref{assum2a}. The study of convergence rate is much more involved, see e.g. \cite{lim2011convergence} where more technical assumptions are needed on the objective function. As our focus is not on the convergence rate of stochastic subgradient methods and also due to the space limitation, we do not discuss here the convergence rate of \ref{protocol_noisy}  and leave it for future work.

Due to the presence of noise, we would not expect problem \eqref{penalty} and problem \eqref{noisy} to be equivalent. However, we can still evaluate the difference between their optimal solutions. To this end, we introduce the folded normal distribution below.
\begin{lemma}[Folded Normal Distribution,\cite{leone1961folded}]\label{fnormal}
If $x\sim N(\mu,\sigma^2)$, then $y=|x|$ has a folded normal distribution with parameters $\mu$ and $\sigma^2$, and
\bee\nonumber
\bE(y)=\mu[1-2\Phi(-\frac{\mu}{\sigma})]+\sigma\sqrt{\frac{2}{\pi}}\exp\left(-\frac{\mu^2}{2\sigma^2}\right)
\ene
where $\Phi(\cdot)$ is the standard normal cumulative distribution function. In particular, if $\mu=0$, then $\bE(y)=\sigma\sqrt{{2}/{\pi}}$.
\end{lemma}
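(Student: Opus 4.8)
The plan is to compute $\bE(y)=\bE(|x|)$ directly from the Gaussian density and reduce everything to the standard normal CDF $\Phi$. First I would write
\[
\bE(|x|)=\int_{-\infty}^{\infty}|t|\,\frac{1}{\sqrt{2\pi}\,\sigma}\exp\left(-\frac{(t-\mu)^2}{2\sigma^2}\right)dt
\]
and split the range at $t=0$, using $|t|=t$ on $(0,\infty)$ and $|t|=-t$ on $(-\infty,0)$. This produces two integrals of $t$ against the shifted Gaussian density over the two half-lines, with opposite outer signs.

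Next I would standardize each integral with the substitution $u=(t-\mu)/\sigma$, which turns the density into the standard normal density and lets me write $t=\sigma u+\mu$. Each of the two resulting integrals then splits into a ``$\sigma u$'' part and a ``$\mu$'' part. The $\sigma u$ parts are elementary because $u\,e^{-u^2/2}$ has the explicit antiderivative $-e^{-u^2/2}$; evaluating at the endpoints $\pm\mu/\sigma$ and $\pm\infty$ yields the exponential term $\sigma\sqrt{2/\pi}\,\exp(-\mu^2/(2\sigma^2))$. The $\mu$ parts are tail and head integrals of the standard normal density over $(-\mu/\sigma,\infty)$ and $(-\infty,-\mu/\sigma)$, and hence equal $\mu\,\Phi(\mu/\sigma)$ and $\mu\,\Phi(-\mu/\sigma)$ respectively.

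Finally I would assemble the four pieces. Combining the two exponential contributions doubles the Gaussian term, and combining the two CDF contributions yields $\mu[\Phi(\mu/\sigma)-\Phi(-\mu/\sigma)]$; applying the symmetry $\Phi(z)+\Phi(-z)=1$ rewrites this bracket as $1-2\Phi(-\mu/\sigma)$, which recovers the claimed formula. The special case $\mu=0$ then follows immediately, since the bracketed term vanishes and $\exp(0)=1$, leaving $\bE(y)=\sigma\sqrt{2/\pi}$.

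The computation is entirely routine and there is no genuine obstacle, only careful bookkeeping of the integration limits and signs when the absolute value is removed. The single point requiring a little care is checking that the $\sigma u$ terms coming from the two half-lines add rather than cancel, which is precisely what the sign flip from $|t|=-t$ on the negative half-line guarantees.
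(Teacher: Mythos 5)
Your proposal is correct: the split at $t=0$, the standardization $u=(t-\mu)/\sigma$, and the assembly of the two exponential pieces (which indeed add, giving $2\sigma\frac{1}{\sqrt{2\pi}}e^{-\mu^2/(2\sigma^2)}=\sigma\sqrt{2/\pi}\,e^{-\mu^2/(2\sigma^2)}$) and the two CDF pieces (giving $\mu[\Phi(\mu/\sigma)-\Phi(-\mu/\sigma)]=\mu[1-2\Phi(-\mu/\sigma)]$) all check out, as does the $\mu=0$ specialization. Note that the paper does not prove this lemma at all --- it is stated as a known fact with a citation to Leone et al. --- so your direct computation is a self-contained substitute for the reference rather than an alternative to an in-paper argument.
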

\begin{theo}\label{theo7}
Suppose that the conditions in Theorem \ref{theo1} hold and let $\bar{\bm x}^\star=[\bar{x}_1^\star,...,\bar{x}_n^\star]^\mathsf{T}$ be an optimal solution of problem \eqref{noisy}. Then, there exists $x^\star\in \cX^\star$ such that 
\bee\nonumber
\|{\bar{\bm x}^\star}-x^\star \bone\|_{\infty}\le v(\bar{\bm x}^\star)\le \sqrt{\frac{2}{\pi}}\frac{2\lambda \sigma_s}{2\lambda a_\text{min}^{(l)}- cn}
\ene
where $\sigma_s=\frac{1}{2}\sum_{i,j}a_{ij}\sigma_{ij}$.
\end{theo}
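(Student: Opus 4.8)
The plan is to sandwich the noisy objective $\bar f_\lambda$ between the noise-free objective $\tilde f_\lambda$ (from below) and its value at a consensus point (from above), and then feed these two estimates into the key inequality \eqref{eq6_theo3} already established in the proof of Theorem \ref{theo1}. First I would record two elementary facts about the noise-averaged penalty $\bar h$ in \eqref{hx_noisy}. Since $|\cdot|$ is convex, Jensen's inequality gives $\bE\{|x_i-x_j+\epsilon_{ij}|\}\ge|x_i-x_j|$ for every edge, so $\bar h(\bm x)\ge h(\bm x)$ pointwise and hence $\bar f_\lambda(\bm x)\ge\tilde f_\lambda(\bm x)$ for all $\bm x\in\bR^n$. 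At a consensus vector $x^\star\bone$ with $x^\star\in\cX^\star$ every difference $x_i-x_j$ vanishes, so the $\mu=0$ case of Lemma \ref{fnormal} yields $\bE\{|\epsilon_{ij}|\}=\sigma_{ij}\sqrt{2/\pi}$, giving $\bar h(x^\star\bone)=\sqrt{2/\pi}\,\sigma_s$ and therefore $\bar f_\lambda(x^\star\bone)=f^\star+\lambda\sqrt{2/\pi}\,\sigma_s$.

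Next I would combine optimality with \eqref{eq6_theo3}. Because $\bar{\bm x}^\star$ minimizes $\bar f_\lambda$, we get $\bar f_\lambda(\bar{\bm x}^\star)\le\bar f_\lambda(x^\star\bone)=f^\star+\lambda\sqrt{2/\pi}\,\sigma_s$. On the other hand, using $\bar f_\lambda\ge\tilde f_\lambda$ together with \eqref{eq6_theo3} applied at $\bar{\bm x}^\star$ (with $\bar x=\tfrac1n\bone^\mathsf{T}\bar{\bm x}^\star$) and $f(\bar x)\ge f^\star$ gives $\bar f_\lambda(\bar{\bm x}^\star)-f^\star\ge(\lambda a_\text{min}^{(l)}-cn/2)\,v(\bar{\bm x}^\star)$. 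Since $\lambda>nc/(2a_\text{min}^{(l)})$ makes this coefficient strictly positive, chaining the two inequalities and solving for $v(\bar{\bm x}^\star)$ produces exactly the right-most bound $v(\bar{\bm x}^\star)\le\sqrt{2/\pi}\,\frac{2\lambda\sigma_s}{2\lambda a_\text{min}^{(l)}-cn}$.

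For the left inequality I would first note the elementary fact that $\|\bm y-\alpha\bone\|_\infty\le v(\bm y)$ holds precisely when $\alpha\in[\min_i y_i,\max_i y_i]$; hence it suffices to show that $\cX^\star$ meets the interval $I:=[\min_i\bar x_i^\star,\max_i\bar x_i^\star]$. To this end I would use that convolution with a Gaussian density makes $\bar h$ differentiable, so the optimality condition $0\in\partial\bar f_\lambda(\bar{\bm x}^\star)$ selects subgradients $s_i\in\partial f_i(\bar x_i^\star)$ with $s_i=-\lambda\,\partial_{x_i}\bar h(\bar{\bm x}^\star)$. Summing over $i$ and relabeling the ordered double sum in \eqref{hx_noisy} shows $\sum_i\partial_{x_i}\bar h(\bar{\bm x}^\star)=0$, whence $\sum_i s_i=0$. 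If $I$ lay strictly to the right of $\cX^\star$, then with $m:=\min_i\bar x_i^\star>\max\cX^\star$, monotonicity of the scalar subdifferentials gives $s_i\ge\nabla f_i(m)$ for each $i$, so $0=\sum_i s_i\ge\sum_i\nabla f_i(m)$, which is a subgradient of $f$ at a point strictly beyond all minimizers and is therefore strictly positive, a contradiction; the symmetric argument rules out $I$ lying left of $\cX^\star$ (the case of unbounded $\cX^\star$ is vacuous). Thus $I\cap\cX^\star\neq\emptyset$, and any $x^\star$ in the intersection gives $\|\bar{\bm x}^\star-x^\star\bone\|_\infty\le v(\bar{\bm x}^\star)$.

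I expect the main obstacle to be this last step, namely establishing $I\cap\cX^\star\neq\emptyset$ cleanly: it requires the differentiability of $\bar h$, the exact cancellation $\sum_i\partial_{x_i}\bar h(\bar{\bm x}^\star)=0$, and a consistent choice of subgradients so that the monotonicity argument goes through. By contrast, the bound on $v(\bar{\bm x}^\star)$ itself is comparatively routine once the Jensen lower bound and the folded-normal evaluation at consensus are in place.
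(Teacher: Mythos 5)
Your proof is correct, and for the upper bound on $v(\bar{\bm x}^\star)$ it takes a genuinely cleaner route than the paper. The paper establishes the key pointwise estimate $\bE|\epsilon_{ij}|-\bE|\mu+\epsilon_{ij}|<-|\mu|+\sigma_{ij}\sqrt{2/\pi}$ by an explicit calculus argument on the folded-normal mean (differentiating $\bar t(\mu)$, checking signs, and taking limits), and then separately re-derives a Cauchy--Schwarz bound $g(x^\star\bone)-g(\bar{\bm x}^\star)\le \tfrac{cn}{2}v(\bar{\bm x}^\star)$ in \eqref{eq5_theo5} before combining with $h\ge a_\text{min}^{(l)}v$. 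Your observation that this pointwise estimate is nothing but Jensen's inequality $\bE|x_i-x_j+\epsilon_{ij}|\ge|x_i-x_j|$ in disguise collapses that whole computation, and reusing \eqref{eq6_theo3} wholesale replaces the second half of the paper's chain; the two derivations produce the identical constant $\sqrt{2/\pi}\,\frac{2\lambda\sigma_s}{2\lambda a_\text{min}^{(l)}-cn}$. What your route buys is brevity and the insight that the Gaussian assumption enters only through the evaluation $\bE|\epsilon_{ij}|=\sigma_{ij}\sqrt{2/\pi}$ at consensus (so the argument generalizes to any zero-mean noise with finite first absolute moment, with $\sigma_s\sqrt{2/\pi}$ replaced by $\tfrac12\sum_{i,j}a_{ij}\bE|\epsilon_{ij}|$); what the paper's route buys is an explicit strict gap between $t(\mu)$ and $-|\mu|+\sigma_{ij}\sqrt{2/\pi}$, which it illustrates graphically but does not actually need for the stated bound. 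For the left inequality your argument coincides with the paper's: both reduce to showing $\cX^\star$ meets $[\min_i\bar x_i^\star,\max_i\bar x_i^\star]$ via translation-invariance of the penalty along $\bone$ and monotonicity of scalar subdifferentials; the paper phrases the cancellation as $\nabla\bar g(0)=\bone^\mathsf{T}\nabla g(\bar{\bm x}^\star)$ rather than $\sum_i\partial_{x_i}\bar h(\bar{\bm x}^\star)=0$, and handles the "strict inequality for some $i$" step by the same contradiction you sketch. One small point of care in your version: when $\bar x_i^\star=m$ for some $i$, the monotonicity $s_i\ge\nabla f_i(m)$ requires choosing the comparison subgradient at $m$ to be $s_i$ itself (or the minimal element of $\partial f_i(m)$); with that convention the contradiction $0=\sum_i s_i\ge\sum_i\nabla f_i(m)>0$ goes through exactly as in the paper.
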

\begin{proof}
Since $\bar{\bm x}^\star$ is an optimal solution of \eqref{noisy}, we have
\bee\nonumber
\begin{aligned}
\bar{f}_\lambda(\bar{\bm x}^\star)&=g(\bar{\bm x}^\star)+\frac{\lambda}{2} \sum_{i,j}a_{ij}\bE|\bar{x}_i^\star-\bar{x}_j^\star+\epsilon_{ij}|\\
&\leq \bar{f}_\lambda(x^\star\bone)=f^\star+\frac{\lambda}{2} \sum_{i,j}a_{ij}\bE|\epsilon_{ij}|.
\end{aligned}
\ene
Thus
$
g(\bar{\bm x}^\star)-f^\star\leq\frac{\lambda}{2} \sum_{i,j}a_{ij}(\bE|\epsilon_{ij}|-\bE|\bar{x}_i^\star-\bar{x}_j^\star+\epsilon_{ij}|).
$

Let $\mu_{ij}=\bar{x}_i-\bar{x}_j$ and $t(\mu)=\mu[2\Phi(-\frac{\mu}{\sigma_{ij}})-1]+\sigma_{ij}\sqrt{\frac{2}{\pi}}(1-\exp(-\frac{\mu^2}{2\sigma_{ij}^2}))$. It follows from Lemma \ref{fnormal} that $ \bE|\epsilon_{ij}|-\bE|\bar{x}_i-\bar{x}_j+\epsilon_{ij}|=t(\mu_{ij})$. 

Now we show that $t(\mu)<-|\mu|+\sigma_{ij}\sqrt{{2}/{\pi}}$ for all $\mu\in \bR$. Let $\bar{t}(\mu):=t(\mu)+|\mu|-\sigma_{ij}\sqrt{{2}/{\pi}}$. For any $\mu\neq 0$, we have that
\bee\nonumber
\begin{aligned}
\nabla \bar{t}(\mu)&=2\Phi(-\frac{\mu}{\sigma_{ij}})-1-\sqrt{\frac{2}{\pi}}\frac{\mu}{\sigma_{ij}}\text{exp}({-\frac{\mu^2}{2\sigma_{ij}^2}})\\
&\quad+\sqrt{\frac{2}{\pi}}\frac{\mu}{\sigma_{ij}}\text{exp}(-\frac{\mu^2}{2\sigma_{ij}^2})+\text{sgn}(\mu)\\
&=2\Phi(-{\mu}/{\sigma_{ij}})-1+\text{sgn}(\mu)\\
&=\left\{\begin{array}{lc}
2\Phi(-\frac{\mu}{\sigma_{ij}}),&\text{ if }\mu>0,\\2\Phi(-\frac{\mu}{\sigma_{ij}})-2,&\text{ if }\mu<0.
\end{array}\right.
\end{aligned}
\ene
Thus, $\nabla \bar{t}(\mu)=\left\{\begin{aligned}>0,\text{ if }\mu>0\\<0,\text{ if }\mu<0\end{aligned}\right.$. Then, for any $\mu>0$,
\bee\nonumber
\begin{aligned}
\bar{t}(\mu)&<\lim_{\mu\ra\infty}\bar{t}(\mu)\\
&=\lim_{\mu\ra\infty} 2\mu\Phi(-\frac{\mu}{\sigma_{ij}})-\sigma_{ij}\sqrt{\frac{2}{\pi}}\text{exp}(-\frac{\mu^2}{2\sigma_{ij}^2})\\
&=0.
\end{aligned}
\ene
Similarly, $\bar{t}(\mu)<\lim_{\mu\ra-\infty}\bar{t}(\mu)=0$ for any $\mu<0$. Since $\bar{t}(0)=-\sigma_{ij}\sqrt{2/\pi}<0$, we obtain $\bar{t}(\mu)<0$ for all $\mu\in \bR$, and hence
$
t(\mu)<-|\mu|+\sigma_{ij}\sqrt{{2}/{\pi}}
$ where Fig. \ref{fig1:theo4} illustrates their gap. 
\begin{figure}[!t]
\centering
\includegraphics[width=\linewidth]{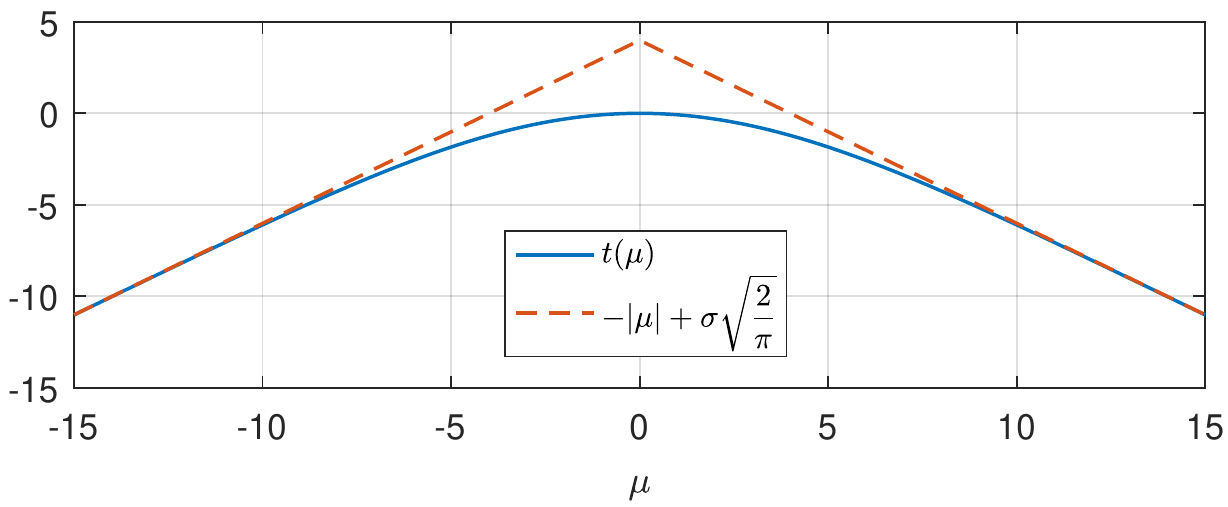}
\caption{The graph of $t(\mu)$ and $-|\mu|+\sigma\sqrt{\frac{2}{\pi}}$ on $[-3\sigma,3\sigma]$.}
\label{fig1:theo4}
\end{figure}
The above implies that \bee\nonumber
\begin{aligned}
g(\bar{\bm x}^\star)-f^\star&\leq\frac{\lambda}{2}\sum_{i,j}a_{ij}t(\mu_{ij})\\
&<\frac{\lambda}{2}\sum_{i,j}a_{ij}(-|\mu_{ij}|+\sigma_{ij}\sqrt{{2}/{\pi}})\\
&=\lambda(-h(\bm x)+\sum_{i,j}a_{ij}\sigma_{ij}\sqrt{{2}/{\pi}})\\
&\leq \lambda(-a_\text{min}^{(l)}v(\bar{\bm x}^\star)+\sigma_s\sqrt{{2}/{\pi}}).
\end{aligned}
\ene
where the last inequality follows from \eqref{eq2_theo3} and the definition of $\sigma_s$. Then, it follows that
\bee\label{eq2_theo5}
v(\bar{\bm x}^\star)\leq\frac{1}{\lambda a_\text{min}^{(l)}}[g(x^\star\bone)-g(\bar{\bm x}^\star)+\lambda\sigma_s\sqrt{{2}/{\pi}}].
\ene
Moreover, we have the following results
\bee\label{eq5_theo5}
\begin{aligned}
&g(x^\star\bone)-g(\bar{\bm x}^\star)\leq-\nabla g(x^\star\bone)^\mathsf{T}(\bar{\bm x}^\star-x^\star\bone)\\
&=-\nabla g(x^\star\bone)^\mathsf{T}[(\bar{\bm x}^\star-\frac{1}{n}\bone^\mathsf{T}\bar{\bm x}^\star\bone)+(\frac{1}{n}\bone^\mathsf{T}\bar{\bm x}^\star\bone-x^\star\bone)]\\
&=-\nabla g(x^\star\bone)^\mathsf{T}(\bar{\bm x}^\star-\frac{1}{n}\bone^\mathsf{T}\bar{\bm x}^\star\bone)\\
&\leq\|\nabla g(x^\star\bone)\|\|\bar{\bm x}^\star-\frac{1}{n}\bone^\mathsf{T}\bar{\bm x}^\star\bone\|\\
&\leq\|\nabla g(x^\star\bone)\|\|\bar{\bm x}^\star-\frac{1}{2}(\max_i{\bar{x}_i^\star}+\min_i{\bar{x}_i^\star})\bone\|\\
&\leq n\|\nabla g(x^\star\bone)\|_\infty\|\bar{\bm x}^\star-\frac{1}{2}(\max_i{\bar{x}_i^\star}+\min_i{\bar{x}_i^\star})\bone\|_\infty\\
&\leq \frac{ cn}{2}\max_{i,j}|\bar{x}_i^\star-\bar{x}_j^\star|=\frac{ cn}{2}v(\bar{\bm x}^\star)
\end{aligned}
\ene
where the second equality is from $\nabla g(x^\star\bone)^\mathsf{T}\bone=\nabla f(x^\star)=0$, and the third inequality follows from $\frac{1}{n}\bone^\mathsf{T}{\bm x}$ minimizes $\|\bm x-\alpha\bone\|$ w.r.t. $\alpha$ for all $\bm x$. Combining \eqref{eq2_theo5} and \eqref{eq5_theo5}, we obtain
\bee\nonumber
v(\bar{\bm x}^\star)\leq\frac{ cn}{2\lambda a_\text{min}^{(l)}}v(\bar{\bm x}^\star)+\frac{\sigma_s}{a_\text{min}^{(l)}}\sqrt{\frac{2}{\pi}}.
\ene
Since $2\lambda a_\text{min}^{(l)}> cn$, we have that
\bee\label{eq4_theo5}
v(\bar{\bm x}^\star)\leq\sqrt{\frac{2}{\pi}}\frac{2\lambda \sigma_s}{2\lambda a_\text{min}^{(l)}- cn}.
\ene

Next, we prove that there exists $x^\star\in \cX^\star$ such that
\bee\label{eq3_theo5}
\|\bar{\bm x}^\star-x^\star\bone\|_\infty\leq\max_{i,j}|\bar{x}_i^\star-\bar{x}_j^\star|= v(\bar{\bm x}^\star).
\ene
Clearly, it is sufficient to show that there exists $x^\star\in \cX^\star$ satisfying $x^\star\in[\min_i{\bar{x}_i^\star},\max_i{\bar{x}_i^\star}]$. Suppose that this is not true. Then, there is no loss of generality to assume that $\min_i{\bar{x}_i^\star}>x^\star$  for all $x^\star\in\cX^\star$. The first order necessary condition implies that
\bee\nonumber
\nabla f(x^\star)=\sum_{i=1}^n\nabla f_i(x^\star)=0.
\ene
Since $f_i(x)$ is convex and $x^\star<\bar{x}_i^\star$ for all $i$, it follows that $\nabla f_i(x^\star)\leq\nabla f_i(\bar{x}_i^\star)$ for all $i$. Letting $\bar{g}(\alpha)=g(\bar{\bm x}^\star+\alpha\bone)$, we obtain
\bee\label{eq6_theo5}
\nabla \bar{g}(0)=\bone^\mathsf{T}\nabla g(\bar{\bm x}^\star)=\sum_{i=1}^n\nabla f_i(\bar{x}_i^\star)\geq\sum_{i=1}^n\nabla f_i(x^\star)=0.
\ene
Actually, the inequality $\nabla f_i(x^\star)\leq\nabla f_i(\bar{x}_i^\star)$ must hold strictly for some $i\in\cV$.  Otherwise, we obtain that $\nabla f_i(x^\star)=\nabla f_i(\bar{x}_i^\star)$ for all $i$ which further implies that $\nabla f_i(x)=\nabla f_i(x^\star)$ for all $x\in[x^\star,\bar{x}_i^\star]$. Particularly, $\nabla f_i(\min_i\bar{x}_i^\star)=\nabla f_i(x^\star)$ for all $i$. Then, $\sum_{i=1}^n\nabla f_i(\min_i\bar{x}_i^\star)=0$, i.e., $\min_i\bar{x}_i^\star\in \cX^\star$. This contradicts the supposition that $x^\star<\min_i{\bar{x}_i^\star}$ for all $x^\star\in \cX^\star$.  


Hence, the inequality in \eqref{eq6_theo5} holds strictly, which implies
$
\nabla \bar{g}(0)>0,
$
and hence there exists an $\alpha<0$ such that $g(\bar{\bm x}^\star+\alpha\bone)<g(\bar{\bm x}^\star)$. Since $\bar{h}({\bm x})=\bar{h}({\bm x}+\alpha\bone)$ for all $\alpha\in R$, it follows that $\bar{f}_\lambda(\bar{\bm x}^\star+\alpha\bone)<\bar{f}_\lambda(\bar{\bm x}^\star)$, which contradicts the optimality of $\bar{\bm x}^\star$. That is, it is impossible to have $\min_i{\bar{x}_i^\star}>x^\star$  for all $x^\star\in\cX^\star$. Hence, \eqref{eq3_theo5} is established.

Combining \eqref{eq4_theo5} and \eqref{eq3_theo5} yields
\bee\nonumber
\|\bar{\bm x}^\star-x^\star\bone\|_\infty\leq v(\bar{\bm x}^\star) \leq \sqrt{\frac{2}{\pi}}\frac{2\lambda \sigma_s}{2\lambda a_\text{min}^{(l)}- cn},
\ene
which completes the proof. 
\end{proof}

Together with Lemma \ref{lemma4}, Theorem \ref{theo7} shows that consensus among agents may not be achieved in the presence of measurement noise. However each agent converges almost surely to a point that lies within a neighborhood of an optimal solution of problem \eqref{original}, the size of which is proportional to the noise level. Moreover, this optimal solution is encompassed by agents' final states.  

\section{A Distributed Algorithm over Randomly Activated Graphs}\label{sec_5}
This section studies the performance of \ref{protocol} over randomly activated graphs, which are defined as follows.
\begin{defi}[Randomly Activated Graphs]\label{def_rcg}
$\cG^k$ are randomly activated if for all $i,j\in\cV,i\neq j$, $\{a_{ij}^k\}$ is an i.i.d. Bernoulli process with $\bP\{a_{ij}^k=1\}=p_{ij}$, where $\bP(\cX)$ denotes the probability of an event $\cX$ and $0\leq p_{ij}\leq 1,\ \forall i,j\in\cV$. 
\end{defi}

We call $P=[p_{ij}]$ as the activation matrix of $\cG^k$, and the graph associated with $P$ is denoted as $\cG_P$, which is also the mean graph of $\cG^k$, i.e., 
\bee\label{meangraph}
\cG_P:=\bE(\cG^k).
\ene Randomly activated graphs can model many networks such as gossip social networks and random measurement losses in networks. They are different from another class of commonly used time-varying graphs that require the connectedness of the network in any finite time interval, see e.g. \cite{nedic2015distributed,olfati2004consensus}.

Under this scenario, \ref{protocol} is revised as
\begin{tcolorbox}[ams equation,size=small]\label{protocol_t}\tag{Algo. 3}
x_i^{k+1}=x_i^{k}+\lambda\rho^k\sum_{j\in \cN_i^k}\text{sgn}(x_j^k-x_i^k)-\rho^k\nabla f_i(x_i^k),
\end{tcolorbox}
\noeqref{protocol_t}
\noindent where the time-varying set of neighbors is given by $\cN_i^k=\{j\in\cV|(i,j)\in \cE^k\}$. For brevity, the weight of each edge $a_{ij}^k$ is now taken to be either zero or one.  

Similarly, \ref{protocol_t} is just the iteration of the stochastic subgradient method of the following optimization problem
\bee\label{timevarying}
\minimize_{\bm x\in\bR^n}\ \hat{f}_\lambda(\bm x):=g(\bm x)+\lambda \hat{h}(\bm x)
\ene
where $g(x)$ is given in \eqref{reformulate} and
\bee\label{hx_tv}
\hat{h}(\bm x)=\frac{1}{2}\sum_{i,j=1}^n p_{ij}|x_i-x_j|.
\ene
To exposit it, notice that $\bE(a_{ij}^k)=p_{ij}$, and thus a stochastic subgradient $\nabla_s \hat{h}(\bm x)=[\nabla_s \hat{h}(\bm x)_1,...,\nabla_s \hat{h}(\bm x)_n]^\mathsf{T}$ of $\hat{h}(\bm x)$ is given element-wise by
\bee\nonumber
\begin{aligned}
\nabla_s \hat{h}(\bm x)_i &=\sum_{j=1}^na_{ij}^k\text{sgn}(x_i-x_j)=\sum_{j\in \cN_i^k}\text{sgn}(x_i-x_j).
\end{aligned}
\ene

Since $\bE\{\nabla_s \hat{h}(\bm x)_i\}=\sum_{j}p_{ij}\text{sgn}(x_i-x_j)$, $\bE\{\nabla_s \hat{h}(\bm x)\}$ is a subgradient of $\hat{h}(\bm x)$. It follows from Lemma \ref{lemma4} that all agents almost surely converge to an optimal solution of problem \eqref{timevarying} under \ref{protocol_t}.  The following theorem summarizes the above analysis, and is the main result of this section.
\begin{theo}\label{theo6} Suppose that Assumptions \ref{assum1} and \ref{assum2}\ref{assum2a} hold, and that the multi-agent network $\cG_P$ is $l$-connected. Select
\bee\nonumber
\lambda>\frac{n c}{2p_{\min}^{(l)}},
\ene
where $\cG_P$ is given in \eqref{meangraph}, $p_{\min}^{(l)}$ denotes the sum of the $l$ smallest nonzero elements of $P$, and  $\{\rho^k\}$ satisfy
\bee\nonumber
\sum_{k=0}^\infty\rho^k=\infty,\quad\sum_{k=0}^\infty(\rho^k)^2<\infty. 
\ene
Let $\{\bm x^k\}$ be generated by \ref{protocol_t}. Then, it holds almost surely that $\lim_{k\ra\infty}\bm x^k=x^\star\bone$ for some $x^\star\in \cX^\star$.
\end{theo}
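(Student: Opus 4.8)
The plan is to reduce the stochastic, time-varying setting to the deterministic equivalence already established in Theorem \ref{theo1}, so that essentially no new estimation is required. First I would record the observation made just before the statement: \ref{protocol_t} is exactly the stochastic subgradient iteration for the penalized problem \eqref{timevarying}, because the random update direction $\nabla_s \hat{h}(\bm x)_i = \sum_{j \in \cN_i^k}\text{sgn}(x_i - x_j)$ is unbiased, i.e. $\bE\{\nabla_s \hat{h}(\bm x)\}$ is a genuine subgradient of $\hat{h}(\bm x)$. Under Assumption \ref{assum2}\ref{assum2a} each $|\nabla f_i|$ is bounded by $c$ and each $|\text{sgn}(\cdot)|\le 1$, so the stochastic subgradient has uniformly bounded norm and in particular bounded second moment; moreover Assumption \ref{assum2}\ref{assum2a} is stronger than Assumption \ref{assum2}\ref{assum2b}, so the hypotheses of Lemma \ref{lemma4} hold. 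The stepsizes satisfy the standard diminishing conditions, hence the convergence theorem for stochastic subgradient methods invoked in Lemma \ref{lemma4} applies here as well, and $\{\bm x^k\}$ converges almost surely to some (a priori random) optimal point of \eqref{timevarying}.

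The crux of the argument is the identification of \eqref{timevarying} as an instance of the static penalized problem \eqref{penalty}. The penalty $\hat{h}(\bm x)=\frac{1}{2}\sum_{i,j} p_{ij}|x_i-x_j|$ in \eqref{hx_tv} has exactly the same form as $h(\bm x)$ in \eqref{hx1} once the activation probabilities $p_{ij}$ are read as the edge weights of the weighted undirected graph $\cG_P=\bE(\cG^k)$. Under this identification $\hat{h}$ coincides with the penalty $h$ built from $\cG_P$, and the quantity $p_{\min}^{(l)}$, the sum of the $l$ smallest nonzero entries of $P$, plays precisely the role of $a_\text{min}^{(l)}$ in \eqref{minsum}. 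Since $\cG_P$ is assumed $l$-connected and $\lambda > nc/(2 p_{\min}^{(l)})$, Theorem \ref{theo1}(a), applied with $\cG_P$ in place of $\cG$, yields directly that problems \eqref{original} and \eqref{timevarying} are equivalent, with common optimal value $f^\star$ and optimal set $\tilde{\cX}^\star=\{x^\star\bone\,|\,x^\star\in\cX^\star\}$.

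Combining the two ingredients finishes the proof: the almost-sure limit of $\{\bm x^k\}$ is an optimal solution of \eqref{timevarying}, which by the equivalence must lie in $\tilde{\cX}^\star$; being a single point, it equals $x^\star\bone$ for some $x^\star\in\cX^\star$, with the value of $x^\star$ possibly depending on the sample path. I expect the only genuine care to be needed in the reduction step rather than in any new inequality. One must check that $\cG_P$ really is a legitimate weighted undirected graph, which holds since $P$ is symmetric with $p_{ij}\ge 0$, and, more importantly, that it is the $l$-connectivity of the \emph{mean} graph $\cG_P$, not of the almost-surely sparse instantaneous graphs $\cG^k$, that is the right hypothesis; this is exactly what lets Menger's theorem and the disjoint-path counting in the proof of Theorem \ref{theo1} go through with the weights $p_{ij}$. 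No fresh hard estimate is required, since the entire analytic difficulty, namely the explicit lower bound on $\lambda$, is already absorbed into Theorem \ref{theo1}.
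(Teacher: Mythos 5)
Your proposal is correct and follows essentially the same route as the paper: the paper's proof likewise invokes Theorem \ref{theo1} on the mean graph $\cG_P$ (with the $p_{ij}$ playing the role of edge weights) to establish equivalence of \eqref{timevarying} with \eqref{original}, and then appeals to the stochastic subgradient convergence argument of Lemma \ref{lemma4}. Your additional checks (unbiasedness and boundedness of the stochastic subgradient, symmetry and nonnegativity of $P$) are details the paper leaves implicit.
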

\begin{proof}
By Theorem \ref{theo1}, it follows that problem \eqref{timevarying} has the same set of optimal solutions and optimal value as problem \eqref{original}. The convergence proof of \ref{protocol_t} is very similar to that of  Lemma \ref{lemma4}.
\end{proof}

\section{Application to Distributed Quantile Regression}\label{sec_6}
In this section we apply our algorithms to solve the distributed quantile regression problem \cite{wang2017distributed}, which is widely used in statistics and econometrics \cite{hunter2000quantile,wang2017distributed}. Suppose we have observed $n$ sample points $(y_1,s_1),...,(y_n,s_n)$ where $y_i,s_i\in \bR$ for all $i\in\{1,...,n\}$ (we consider here only the scalar case for brevity). Our objective is to find the $\alpha$-th ($\alpha\in[0,1]$) linear quantile regression estimate $x_\alpha\in \bR$, which is an optimal solution to the following convex optimization problem \cite{hunter2000quantile}:
\bee\label{disquan1}
\minimize_{x\in \bR}\ f(x):=\sum_{i=1}^n f_i(x)=\sum_{i=1}^n Q_\alpha(y_i-xs_i)
\ene
where  $\alpha$-th quantile function $Q_\alpha(x)$ is defined by
\bee\label{quantilefunction}
Q_\alpha(x) = \left\{
	\begin{array}{lc}
		\alpha x, &\text{ if } x\geq 0,\\
		(\alpha-1)x, &\text{ if } x<0.
	\end{array}\right.
\ene
Hence, a subgradient of $f_i(x)$ is
\bee\nonumber
\nabla f_i(x)=\left\{\begin{aligned}&-\alpha s_i,&&\text{ if }y_i\geq xs_i,\\&(1-\alpha) s_i,&&\text{ if }y_i<xs_i.\end{aligned}\right.
\ene

Clearly, this problem satisfies Assumptions \ref{assum1} and \ref{assum2}\ref{assum2a} with $ c=\max_i\{\alpha s_i,(1-\alpha) s_i\}$, and thus we can apply our algorithms to solve it.

\subsection{The Effect of $\lambda$ and $\rho^k$.}\label{sec_sim_a}
We first illustrate that the lower bound of $\lambda$ in Theorem \ref{theo1} is tight in some cases. For simplicity, let $s_i=1$ for all $i$; then the problem \eqref{disquan1} is to find the $\alpha$-th quantile of $\{y_1,...,y_n\}$. Here we set $\alpha=0.5$ (the median) and let $\{y_1,...,y_n\}=\{4.45,14.99,24.28,26.21,44.24,58.61,68.78,75.49\}$.  Then, the median can be any value in $[26.21,44.24]$. Consider a ring-shaped 2-connected graph as in Fig. \ref{fig1b:sec_3}, with 8 nodes and unit edge weights. Then, it follows from Theorem \ref{theo1} that $\lambda$ should be strictly greater than $\underline{\lambda}=\frac{n c}{2\cdot2}=1$ to ensure \ref{protocol} to converge to the median of the sample points. We set $\lambda$ to be 0.95, 1.05, and 10, respectively, to examine their performance under \ref{protocol} and set the stepsize as $\rho^k={100}/{(k+10)}$. The trajectories of all agents are shown in Fig. \ref{fig1:exp1}.
\begin{figure}[!t]
\centering
\includegraphics[width=1\linewidth]{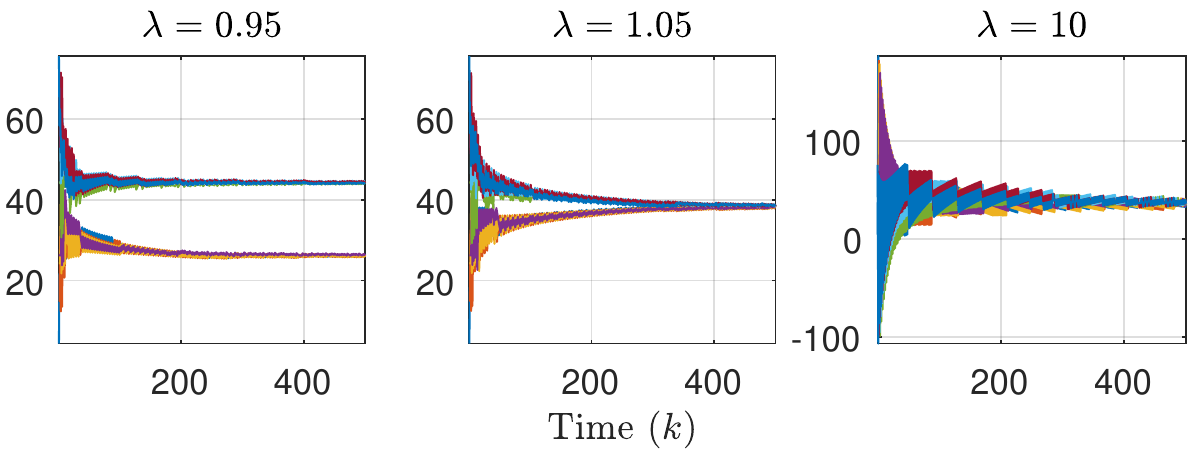}
\caption{Trajectories of all agents with different $\lambda$ under \ref{protocol}.}
\label{fig1:exp1}
\end{figure}

\begin{figure}[!t]
\centering
\includegraphics[width=1\linewidth]{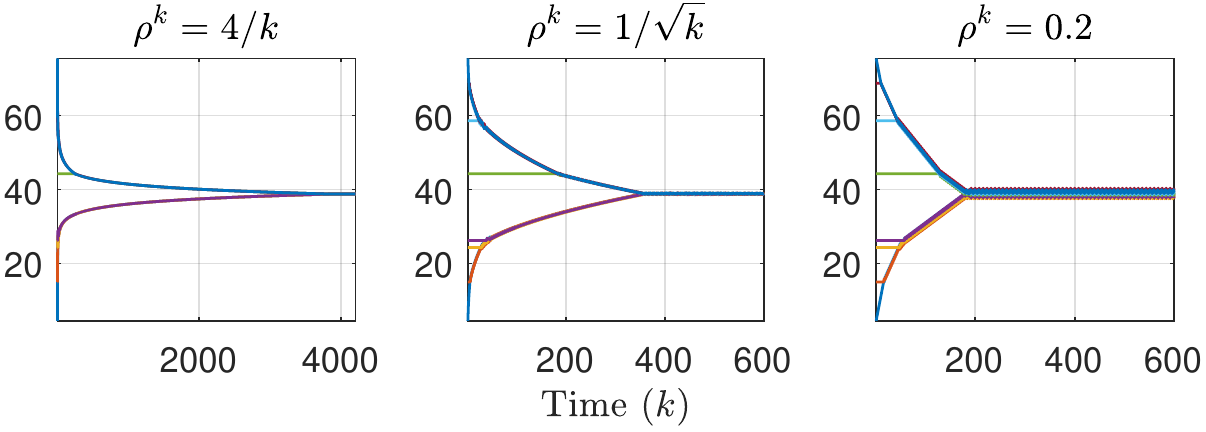}
\caption{Trajectories of all agents with different $\rho^k$ under \ref{protocol}.}
\label{fig2:exp1}
\end{figure}

As shown in Fig \ref{fig1:exp1}, consensus is not achieved even when $\lambda$ is slightly smaller than $\underline{\lambda}$ (the left subgraph), while the algorithm converges to the median when $\lambda$ is larger than $\underline{\lambda}$ (the middle and the right subgraphs). Besides, a larger value of $\lambda$ results in larger fluctuations in the transient stage. This suggests that it is better to choose a small $\lambda$ as long as it satisfies the condition of Theorem \ref{theo1}.

Fig. \ref{fig2:exp1} shows the trajectories under different stepsize rules for $\lambda=2$. The convergence with $\rho^k={4}/{k}$ is the slowest (the left subgraph), while it is faster for $\rho^k={1}/{\sqrt{k}}$ (the middle subgraph). Note that the algorithm under the constant stepsize approaches fastest to a neighborhood of an optimal solution.

\subsection{Noisy Measurements}\label{sec_sim_c}
We now study the effect of the measurement error described in Section \ref{sec_7} on the performance of our algorithms. Under the same settings as in Section \ref{sec_sim_a}, we have run two simulations. Both are expected to calculate the $0.4$-th quantile of $\{y_1,...,y_n\}$. We choose $\rho^k={40}/{(k+20)}$ and $\lambda=2$. The variance of the measurement error is $\sigma^2=9$ for all edges.  The trajectories of all agents in the two experiments are shown in the Fig. \ref{fig1:exp3}.
\begin{figure}[!t]
\centering
\includegraphics[width=\linewidth]{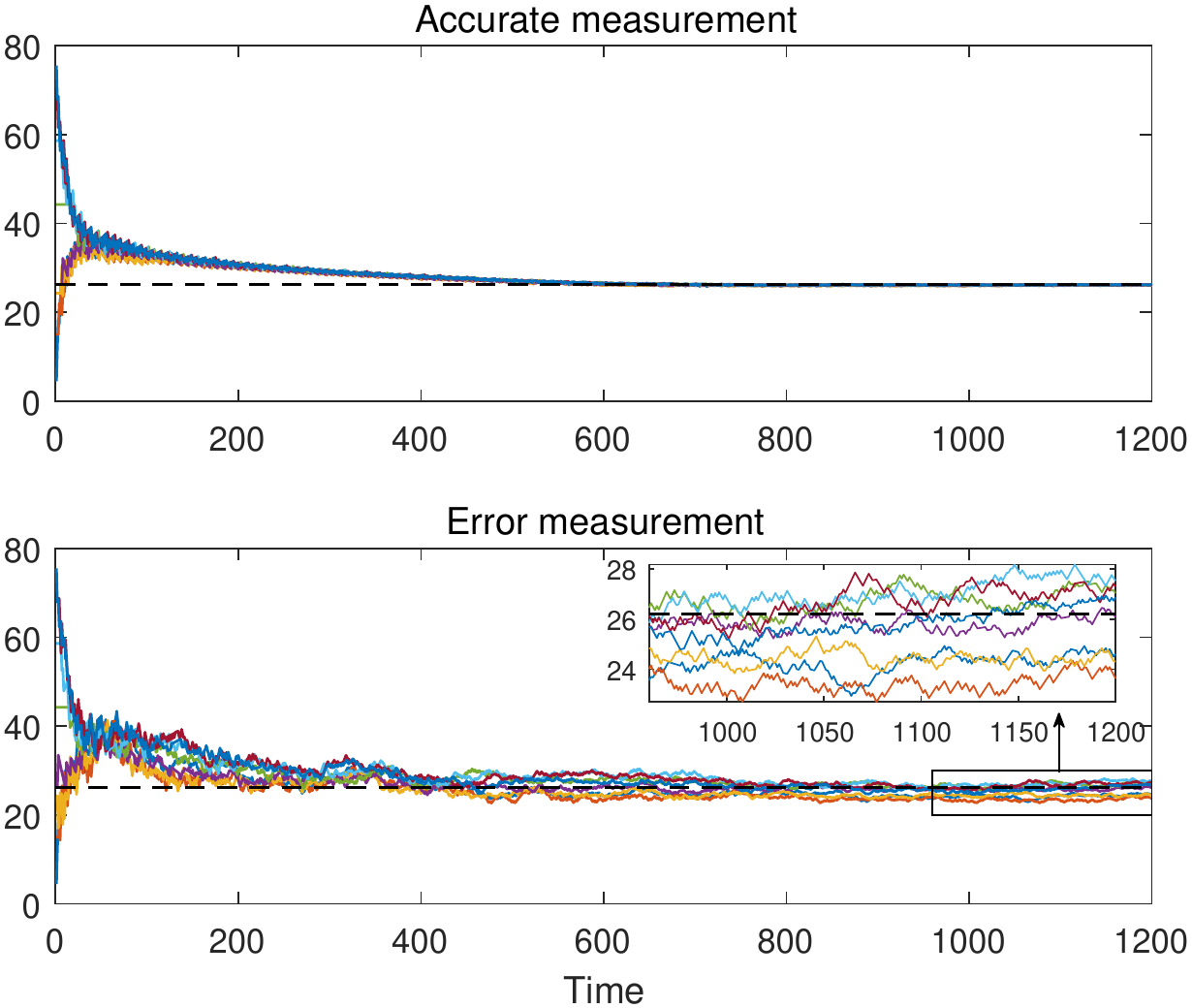}
\caption{The influence of the measurement error.}
\label{fig1:exp3}
\end{figure}
The black dash line represents one of the $0.4$-th quantile, i.e., 26.21. The top subgraph in Fig. \ref{fig1:exp3} is the result of \ref{protocol}, while the bottom subgraph is that of \ref{protocol_noisy}. It can be observed that all agents under \ref{protocol_noisy} only converge to a neighborhood of an optimal solution.

\subsection{Linear Quantile Regression}\label{sec_sim_b}
We have run two simulations over a static graph and randomly activated graphs, respectively. Both calculate the 0.1-th, 0.5-th and 0.9-th quantile regression estimates simultaneously by using 20 randomly generated sample points. The graph is ring-shaped with 20 nodes. The stepsizes are diminishing. We randomly choose some $a_e\in(0, 1]$ as the weight of edge $e$ of the static graph for all $e\in\{1,...,20\}$, which is also used as the activation probability of edge $e$ of the randomly activated graph.

\begin{figure}[!t]
\centering
\subfloat[]{\label{fig1a:exp2}{\includegraphics[width=0.49\linewidth]{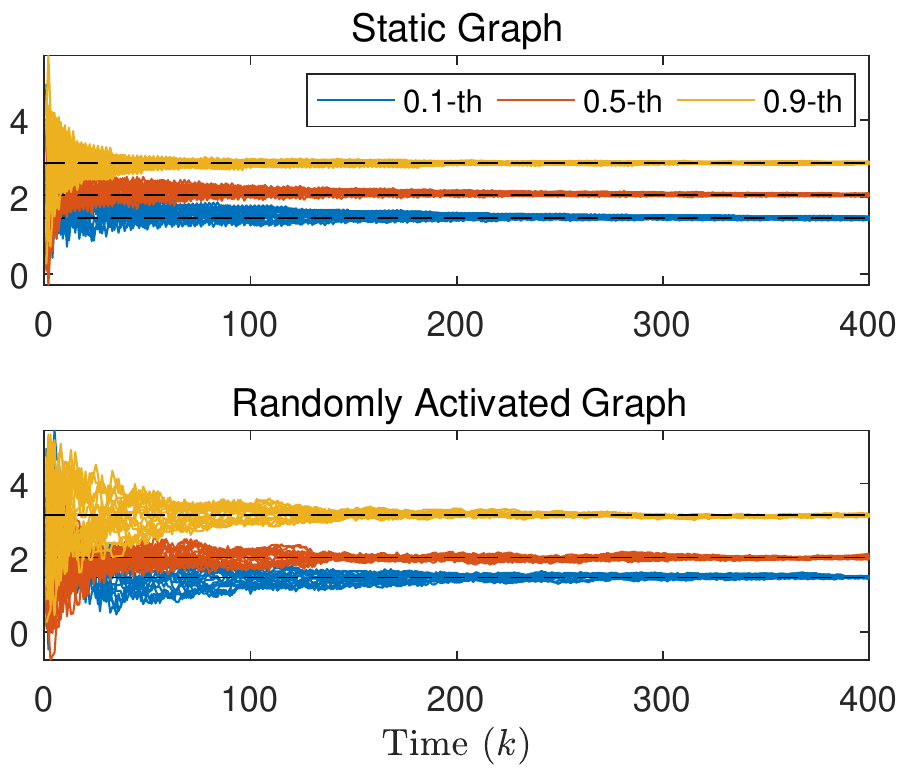}}}
\subfloat[]{\label{fig1b:exp2}\includegraphics[width=0.51\linewidth]{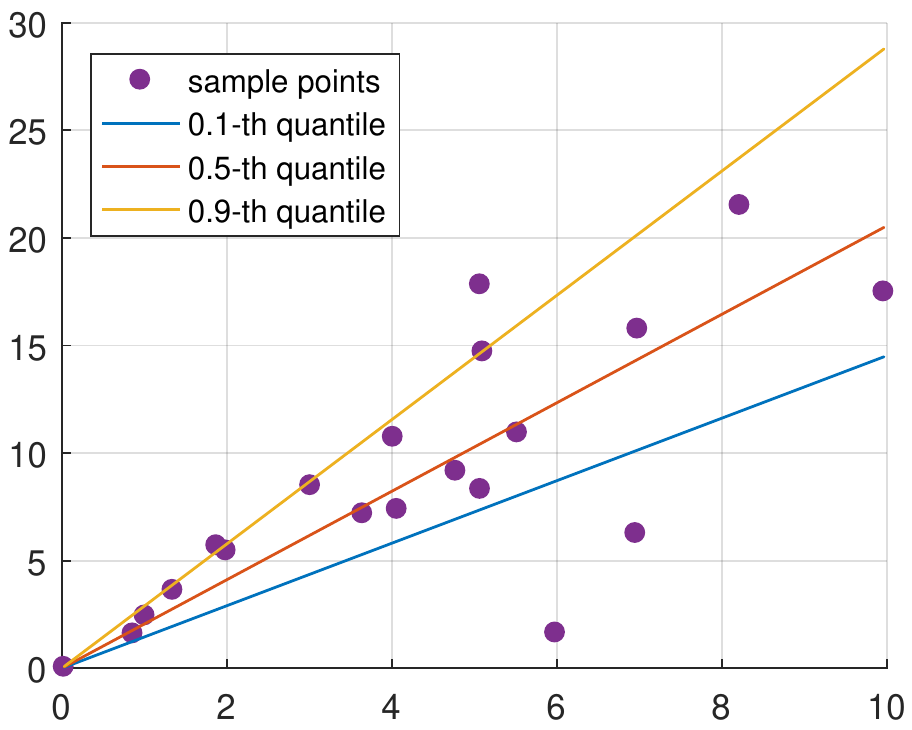}}
\caption{(a) Trajectories of all agents over a static graph and  randomly activated graphs. (b)  Three linear quantile regression estimates.}
\label{fig1:exp2}
\end{figure}

Fig. \ref{fig1a:exp2} illustrates the trajectories of the agents. All agents converge to the three quantile regression estimates (the black dash line) simultaneously. Besides, the randomly activated graph leads to larger fluctuations and a slower convergence rate. Fig. \ref{fig1b:exp2} plots our 20 sample points and the three linear estimates with $x_\alpha$ obtained in Fig. \ref{fig1a:exp2} for $\alpha=0.1,0.5$ and $0.9$, respectively, which shows that our algorithm converges to the correct points.

\section{Conclusions}
\label{sec_con}
In this paper, we have proposed a distributed optimization algorithm using as online information only the sign of relative state values in agent neighborhoods to solve the additive cost optimization problem in multi-agent networks. The network was allowed to be static or stochastically time-varying. For the former case, we have first provided a penalty method interpretation of our algorithm, and then studied its convergence  under  diminishing stepsizes as well as a constant stepsize. We have shown that the convergence rate varies from $O(1/\ln(k))$ to $O(\ln(k)/\sqrt{k})$, depending on the stepsize. For the latter case, we studied the  algorithm over the so-called randomly activated graphs, the convergence of which is given in the almost sure sense, and the case that the relative state information is noise corrupted. Finally, we have applied our algorithm to solve a quantile regression problem. All the theoretical results have been corroborated via simulations.

As shown in this work, using only the sign of the relative state information one is still able to solve the distributed optimization problem \eqref{original}. It is interesting to study the tradeoff between the convergence performance and the amount of information used for a distributed algorithm, which we leave as future work.

\appendices
\section{Proof of part (b) of Theorem \ref{theo1}}\label{appendix_a}
We first introduce additional basics of graph theory, which can be found in \cite{bullo2017lectures}.

Let $\cG=(\cV,\cE)$ be a graph with an adjacency matrix $A$. We number each edge of $\cG$ with a unique $e\in \cI$ and assign an arbitrary direction to each edge, where $\cI=\{1,...,m\}$ is called the edge number set of $\cG$ and $m$ is the number of edges. We say that node $i$ is the source node of edge $e$ if $e$ leaves $i$, and is the sink node if $e$ enters $i$. The incidence matrix $B\in \bR^{n\times m}$ of $\cG$ is defined by
\[B_{ie}=\left\{\begin{aligned}1, &\text{ if node $i$ is the source node of $e$},\\-1, &\text{ if node $i$ is the sink node of $e$},\\ 0, &\text{ otherwise}.\end{aligned}\right.
\]
For any $\bm x=[x_1,...,x_n]^\mathsf{T}$, we have that
\bee\label{eq1_sec2}
\bm b_e^\mathsf{T}\bm x=x_i-x_j
\ene
where $\bm b_e,e\in \cI$ is the $e$-th column of $B$, and $i$ and $j$ are the source and the sink nodes of edge $e$, respectively. Throughout this section, we use $i,j$ to denote nodes, and $e,u,v$ to denote edge numbers.

A connected graph is a tree if it becomes unconnected when any single edge is removed. A spanning tree $\cT$ of a connected graph $\cG$ is the tree with the same nodes as $\cG$ and a subset of the edges of $\cG$.

A subgraph $\cG_s=(\cV_s,\cE_s)$ of $\cG$ is a graph with $\cV_s\subseteq\cV$ and $\cE_s\subseteq\cE$. The subgraph of $\cG$ induced by $\cV_I\subseteq\cV$ is the graph $\cG_I=(\cV_I,\cE_I)$ where $\cE_I$ includes all edges of $\cE$ that connect two nodes of $\cV_I$. The subgraph $\cG_{cI}$ of $\cG$ induced connectedly by $\cV_I\subseteq\cV$ is $\cG_I$ with extra edges that make $\cG_{cI}$ connected.

A cut of a graph is a partition of its nodes into two non-empty and disjoint sets.

Finally, we define the set-valued function
\[
\text{SGN}(x)=\left\{
	\begin{array}{lc}
	\{1\},& \text{if }x>0,\\ \relax
	[-1,1],&\text{if }x=0,\\
	\{-1\},& \text{if }x<0.
	\end{array}\right.
\]
It is obvious that $\text{SGN}(x)$ is the subdifferential of $|x|$. With a slight abuse of notation, we use $\text{SGN}(\bm x)$ to represent the set-valued vector $[\text{SGN}(x_1),...,\text{SGN}(x_n)]^\mathsf{T}$.

To establish the proof of part (b) of Theorem \ref{theo1}, we need two lemmas on incidence matrices.

\begin{lemma}\label{lemma1}
Let $\cG_1$ be a graph with nodes $\cV_1=\{1,...,n_1\}$ and the edge number set $\cI_1=\{1,...,m_1\}$, and let $\cG_2$ be a graph with nodes $\cV_2=\{n_1+1,...,n_1+n_2\}$ and the edge number set $\cI_2=\{{m_1+1},...,{m_1+m_2}\}$. Denote by $B_1$ and $B_2$ the incidence matrices of $\cG_1$ and $\cG_2$, respectively. Let $B=\diag(B_1,B_2)$, and $n=n_1+n_2$. 

Assume that there is a new edge $e$ which connects some $p\in\cV_1$ and $q\in\cV_2$, and let $\bm b_e\in\bR^{n}$ be a vector with the $p$-th element 1, the $q$-th element -1 and other elements 0. Then for any $\alpha\in \bR$, it follows that
\bee\nonumber
\min_{\bm x \in \bR^{n}}\|\alpha \bm b_e-B\bm x\|_\infty\geq\frac{2|\alpha|}{n}.
\ene
\end{lemma}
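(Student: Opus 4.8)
The plan is to exploit the single structural feature of an incidence matrix that actually matters here: every column has exactly one $+1$ and one $-1$, so $\bone^\mathsf{T}B_i=0$ and hence $B_i\bm x_i$ has zero entry-sum for \emph{every} $\bm x_i$. Since $B=\diag(B_1,B_2)$ is block diagonal, I would partition any feasible $\bm x$ conformably as $\bm x=[\bm x_1^\mathsf{T},\bm x_2^\mathsf{T}]^\mathsf{T}$, so that $B\bm x=[(B_1\bm x_1)^\mathsf{T},(B_2\bm x_2)^\mathsf{T}]^\mathsf{T}$, and split $\bm b_e=[(\bm b_e^{(1)})^\mathsf{T},(\bm b_e^{(2)})^\mathsf{T}]^\mathsf{T}$ the same way, where $\bm b_e^{(1)}\in\bR^{n_1}$ carries the single $+1$ at node $p\in\cV_1$ and $\bm b_e^{(2)}\in\bR^{n_2}$ carries the single $-1$ at node $q\in\cV_2$. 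The residual then splits cleanly as $\alpha\bm b_e-B\bm x=[(\bm r^{(1)})^\mathsf{T},(\bm r^{(2)})^\mathsf{T}]^\mathsf{T}$ with $\bm r^{(1)}=\alpha\bm b_e^{(1)}-B_1\bm x_1$ and $\bm r^{(2)}=\alpha\bm b_e^{(2)}-B_2\bm x_2$, and the two blocks can be handled independently.

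The core of the argument is a pigeonhole/averaging step that converts a constraint on the \emph{sum} of the entries into a lower bound on the $\infty$-norm. Using $\bone^\mathsf{T}B_1=0$ together with $\bone^\mathsf{T}\bm b_e^{(1)}=1$, I obtain $\bone^\mathsf{T}\bm r^{(1)}=\alpha$; likewise $\bone^\mathsf{T}\bm r^{(2)}=-\alpha$. Because any vector $\bm v\in\bR^{k}$ with $\bone^\mathsf{T}\bm v=S$ satisfies $|S|\le\sum_i|v_i|\le k\,\|\bm v\|_\infty$, it follows immediately that $\|\bm r^{(1)}\|_\infty\ge|\alpha|/n_1$ and $\|\bm r^{(2)}\|_\infty\ge|\alpha|/n_2$. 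The crucial feature is that these two block-sums equal $\pm\alpha$ \emph{independently of} $\bm x$, precisely because the all-ones vector lies in the left null space of each $B_i$.

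To finish, I note that the $\infty$-norm of the full residual is the maximum of the two block norms, so that for every $\bm x$,
\[
\|\alpha\bm b_e-B\bm x\|_\infty\ge\max\left\{\frac{|\alpha|}{n_1},\frac{|\alpha|}{n_2}\right\}=\frac{|\alpha|}{\min\{n_1,n_2\}}\ge\frac{2|\alpha|}{n},
\]
where the last inequality uses $\min\{n_1,n_2\}\le(n_1+n_2)/2=n/2$. Since the bound holds uniformly in $\bm x$, it passes to the minimum and yields the claim. I do not expect a genuine obstacle here; the only points requiring care are recognizing that the sole property of $B$ needed is zero column-sums (hence zero block entry-sums for all $\bm x$), and then applying the elementary averaging inequality — everything else is bookkeeping on the block decomposition.
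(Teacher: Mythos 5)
Your proposal is correct and follows essentially the same route as the paper's proof: a block decomposition along $B=\diag(B_1,B_2)$, the observation that $\bone^\mathsf{T}B_i=0$ forces each block residual to have entry-sum $\pm\alpha$, the averaging inequality $\|\bm v\|_\infty\ge|\bone^\mathsf{T}\bm v|/k$, and the final step $\min\{n_1,n_2\}\le n/2$. No gaps.
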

\begin{proof}
Since edge $e$ joins two nodes which are in different graphs, there is no loss of generality to let the source node and the sink node be in $\cG_1$ and $\cG_2$, respectively. Then, we obtain
\[\|\alpha \bm b_e-B\bm x\|_\infty=\|\begin{bmatrix}B_1\bm x_1+\alpha \bm e_1\\B_2\bm x_2-\alpha \bm e_2\end{bmatrix}\|_\infty\]
where $\bm x=[\bm x_1^\mathsf{T},\bm x_2^\mathsf{T}]^\mathsf{T}\in \bR^{n}$, and both $\bm e_1$ and $ \bm e_2$ are vectors with one element 1 and other elements 0. By applying the inequality that $\|\bm x\|_\infty\geq\frac{1}{n}\|\bm x\|_1$ for all $\bm x\in\bR^n$, it follows that
\begin{equation*}
\begin{aligned}
&\|\begin{bmatrix}B_1 \bm x_1+\alpha \bm e_1\\B_2 \bm x_2-\alpha \bm e_2\end{bmatrix}\|_\infty\geq\|B_1 \bm x_1+\alpha \bm e_1\|_\infty\\
&\geq\frac{1}{n_1}\|B_1 \bm x_1+\alpha \bm e_1\|_1\geq\frac{1}{n_1}|\bone^\mathsf{T}(B_1 \bm x_1+\alpha \bm e_1)|.
\end{aligned}
\end{equation*}
Using the fact that $\bone^\mathsf{T}B_1=0$, we have for all $\bm x\in\bR^n$ that
\begin{equation*}
\begin{aligned}
\|\alpha \bm b_e-B\bm x\|_\infty&\geq\frac{|\bone^\mathsf{T}(B_1\bm x+\alpha \bm e_1)|}{n_1}=\frac{|\alpha\bone^\mathsf{T} \bm e_1|}{n_1}=\frac{|\alpha|}{n_1}.
\end{aligned}
\end{equation*}

Similarly, we obtain that $\|\alpha \bm b_e-B\bm x\|_\infty\geq {|\alpha|}/{n_2}$. Hence,
$$\min_{\bm x \in \bR^{n}}\|\alpha \bm b_e-B\bm x\|_\infty\geq\frac{|\alpha|}{\min\{n_1,n_2\}}\geq\frac{2|\alpha|}{n}.$$
\end{proof}

The following corollary directly follows from Lemma \ref{lemma1}.
\begin{coro}
Let $B\in\bR^{n\times (n-1)}$ be the incidence matrix of a tree. Then, for all $\bm x\in \bR^{n-1}$, the following inequality holds
\bee\nonumber
\|B\bm x\|_\infty\geq \frac{2}{n}\|\bm x\|_\infty.
\ene
\end{coro}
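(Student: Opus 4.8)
The plan is to reduce the statement to one application of Lemma~\ref{lemma1}, exploiting the defining property of a tree that deleting any single edge splits it into exactly two connected components. First I would fix an arbitrary $\bm x\in\bR^{n-1}$ and choose an edge index $e^\star$ at which the maximum $\|\bm x\|_\infty=|x_{e^\star}|$ is attained. Removing the edge $e^\star$ disconnects the tree into two subtrees $\cG_1$ and $\cG_2$, on node sets of sizes $n_1$ and $n_2$ with $n_1+n_2=n$, and $e^\star$ joins some node $p\in\cG_1$ to some node $q\in\cG_2$. After relabelling the nodes so that those of $\cG_1$ precede those of $\cG_2$, and ordering the columns of $B$ so that the edges of $\cG_1$ come first, then $e^\star$, then the edges of $\cG_2$, the columns of $B$ other than the one for $e^\star$ form precisely $\diag(B_1,B_2)$, where $B_1,B_2$ are the incidence matrices of $\cG_1,\cG_2$.

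Next I would isolate the contribution of the single column $\bm b_{e^\star}$. Since every edge of $\cG_1$ has both endpoints in $\cG_1$ (and likewise for $\cG_2$), each remaining column is supported inside its own block, and therefore
\[
B\bm x=x_{e^\star}\bm b_{e^\star}+\diag(B_1,B_2)\tilde{\bm x},
\]
where $\tilde{\bm x}\in\bR^{n-2}$ collects the coordinates of $\bm x$ indexed by the edges of $\cG_1$ and $\cG_2$. The column $\bm b_{e^\star}$ carries a single $+1$ at $p$ and a single $-1$ at $q$ (the arbitrary orientation of $e^\star$ only toggles an overall sign), so it equals $\pm\bm b_e$ in the notation of Lemma~\ref{lemma1}; hence $x_{e^\star}\bm b_{e^\star}=\alpha\bm b_e$ for some $\alpha$ with $|\alpha|=|x_{e^\star}|=\|\bm x\|_\infty$.

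Finally I would apply Lemma~\ref{lemma1} directly. Rewriting the identity as $B\bm x=\alpha\bm b_e-\diag(B_1,B_2)(-\tilde{\bm x})$ and taking $\infty$-norms gives
\[
\|B\bm x\|_\infty\geq\min_{\bm y\in\bR^{n-2}}\|\alpha\bm b_e-\diag(B_1,B_2)\bm y\|_\infty\geq\frac{2|\alpha|}{n}=\frac{2}{n}\|\bm x\|_\infty,
\]
which is the desired inequality. I expect the only genuine work to lie in the second paragraph: verifying that deleting the extremal edge produces exactly the block-diagonal matrix $\diag(B_1,B_2)$ required by the lemma, and that the two node counts sum to $n$ so that the bound emerges with the correct denominator $n$. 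Everything after that is a one-line substitution into Lemma~\ref{lemma1}.
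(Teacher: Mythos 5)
Your proposal is correct and follows essentially the same route as the paper's own proof: remove a single edge of the tree to split it into two subtrees, observe that the remaining columns of $B$ form the block-diagonal matrix $\diag(B_1,B_2)$ required by Lemma~\ref{lemma1}, and apply that lemma to the isolated column $\bm b_{e}$ with coefficient $x_{e}$. The only cosmetic difference is that you fix the maximizing edge index $e^\star$ up front, whereas the paper derives $\|B\bm x\|_\infty\geq 2|x_e|/n$ for every edge $e$ and then takes the maximum over $e$; the two are equivalent.
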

\begin{proof}
For any $e\in\{1,...,n-1\}$, the tree becomes two disjoint subtrees when the $e$-th edge is removed. Let $x_e$ denote the $e$-th element of $\bm x$, and let $\bm x_{-e}$ and $B_{-e}$ denote $\bm x$ with the $e$-th element removed and $B$ with the $e$-th column removed, respectively. Then, it follows from Lemma \ref{lemma1} that
\[
\|B\bm x\|_\infty=\|x_e\bm b_e+B_{-e}\bm x_{-e}\|_\infty\geq\frac{2|x_e|}{n}.
\]
Since $e$ is arbitrary, the result holds immediately.
\end{proof}

\begin{lemma}\label{lemma2}
Let $\cG$ be a graph with the edge number set $\cI$. A cut separates the nodes of $\cG$ in two subsets $\cV_1$ and $\cV_2$ which are joined by $l$ edges with the edge number set $\cI_c\subseteq \cI$. Let $\cG_1$ and $\cG_2$ be two graphs induced connectedly by $\cV_1$ and $\cV_2$, respectively. The edge number set and incidence matrix of $\cG_i$ are denoted respectively by $\cI_i$ and $B_i$, where $i\in\{1,2\}$. Let $\bar{B}=\diag(B_1,B_2)$. Then for any $e,u\in \cI_c$, we have 
\bee\label{eq_lemma2}
\bm b_{e}=\gamma_{eu} \bm b_{u}+\bar{B}\bm y_{eu},\ u\in\{1,...,k\}
\ene
where $\bm y_{eu}$ is a vector with elements $1, 0$ or $-1$, and
\beq\label{gamma}
&&\hspace{-0.7cm}\gamma_{eu}=\\
&&\hspace{-0.7cm}\left\{\begin{aligned}1, &~\text{if the source nodes of $e$ and $u$ are in the same subset},\\-1,&\text{ otherwise}.\end{aligned}\right.\nonumber
\enq

\end{lemma}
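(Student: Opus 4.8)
The plan is to exploit the combinatorial meaning of the incidence-matrix columns together with the connectedness of the two induced subgraphs $\cG_1$ and $\cG_2$. Recall from \eqref{eq1_sec2} that each column is a difference of standard unit vectors: writing $\bm\delta_i\in\bR^n$ for the $i$-th standard basis vector and $s(\cdot),t(\cdot)$ for the source and sink of an edge, we have $\bm b_e=\bm\delta_{s(e)}-\bm\delta_{t(e)}$. Since $e,u\in\cI_c$ are cut edges, each has exactly one endpoint in $\cV_1$ and one in $\cV_2$. First I would split into the two cases governing $\gamma_{eu}$. If the sources of $e$ and $u$ lie in the same subset ($\gamma_{eu}=1$), then their sinks also lie in the common opposite subset, and I form $\bm b_e-\bm b_u=(\bm\delta_{s(e)}-\bm\delta_{s(u)})-(\bm\delta_{t(e)}-\bm\delta_{t(u)})$, whose first term is supported on $\cV_1$ and second on $\cV_2$. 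If the sources lie in different subsets ($\gamma_{eu}=-1$), I form $\bm b_e+\bm b_u$ and regroup as $(\bm\delta_{s(e)}-\bm\delta_{t(u)})+(\bm\delta_{s(u)}-\bm\delta_{t(e)})$, again one difference within $\cV_1$ and one within $\cV_2$. In either case $\bm b_e-\gamma_{eu}\bm b_u$ is a sum of two \emph{within-subset} zero-sum vectors.

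Next I would realize each within-subset difference $\bm\delta_a-\bm\delta_b$ (with $a,b$ in the same $\cV_i$) through a path. Because $\cG_i$ is connected, being induced connectedly by $\cV_i$, there is a \emph{simple} path in $\cG_i$ joining $a$ and $b$. Telescoping the identity $\bm b_{e'}=\bm\delta_{s(e')}-\bm\delta_{t(e')}$ along the edges of that path, with a coefficient $\pm1$ on each edge chosen to match its orientation, collapses the intermediate terms and yields $\bm\delta_a-\bm\delta_b=B_i\bm z$ for a vector $\bm z$ carrying entries $\pm1$ on the path edges and $0$ elsewhere. Stacking the $\cV_1$-vector and the (suitably signed) $\cV_2$-vector into the block form dictated by $\bar B=\diag(B_1,B_2)$ produces $\bm y_{eu}$ with $\bar B\bm y_{eu}=\bm b_e-\gamma_{eu}\bm b_u$, which is precisely \eqref{eq_lemma2}.

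The main obstacle, and really the only point needing care, is ensuring that the entries of $\bm y_{eu}$ remain in $\{-1,0,1\}$ rather than accumulating to larger magnitudes. This is exactly what a \emph{simple} path buys: each edge index appears at most once in the telescoping sum, so it contributes a single $\pm1$ coefficient, and the existence of such a path is guaranteed by the connectedness of $\cG_1$ and $\cG_2$. The remaining checks are routine bookkeeping: verifying in each case that the four endpoints regroup correctly into one $\cV_1$-difference and one $\cV_2$-difference, and noting that the degenerate situation in which the two within-subset endpoints coincide simply yields a zero block for $\bm y_{eu}$, which is still admissible. With these in hand, \eqref{eq_lemma2} follows for every $e,u\in\cI_c$.
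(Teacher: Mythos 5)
Your proof is correct and follows essentially the same route as the paper's: both arguments telescope incidence columns along paths inside the connected subgraphs $\cG_1$ and $\cG_2$ (the paper concatenates a path in $\cG_1$ from $p_e$ to $p_u$, the edge $u$, and a path in $\cG_2$ from $q_u$ to $q_e$ into a single walk and reads off \eqref{eq_lemma2}, while you first split $\bm b_e-\gamma_{eu}\bm b_u$ into two within-subset indicator differences and realize each by a path). Your explicit use of \emph{simple} paths to guarantee the entries of $\bm y_{eu}$ stay in $\{-1,0,1\}$ makes precise a point the paper leaves implicit.
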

\begin{proof}
If $e=u$, the result holds by just setting $y_{eu}=0$. If $e\neq u$, we define the source node and sink node of $\bm b_e$ to be $p_e$ and $q_e$, and that the source node and sink node of $\bm b_u$ to be $p_u$ and $q_u$, respectively. We first assume $p_e$ and $p_u$ are in the same subset, say $\cG_1$, and thus $\gamma_{eu}=1$. Hence we can find a path in $\cG_1$ from $p_e$ to $p_u$ as $\cG_1$ is connected. Similarly, we can find a path in $\cG_2$ from $q_u$ to $q_e$. Therefore, we have a path from $p_e$ to $q_e$ through edge $u$ rather than edge $e$. The edge number set of edges in the path is denoted by $\cI_p\subseteq \cI_1\cup \cI_2\cup \{u\}$. 

Now we define $s_{ev},v\in \cI_p$ as
\bee\label{eq3_lemma2}
\begin{aligned}
&s_{ev}=\left\{\begin{aligned}1,&\text{ if the source node of edge $v$ is closer to $p_e$},\\-1,&\text{ if the sink node of edge $v$ is closer to $p_e$}.\\\end{aligned}\right.
\end{aligned}
\ene
Note that $s_{eu}=\gamma_{eu}$. Then, \eqref{eq_lemma2} is obtained since $\bm b_e=\sum_{v\in \cI_p}s_{ev}\bm b_v=\gamma_{eu}\bm b_u+\sum_{v\in \cI_p-\{u\}}s_{ev}\bm b_{ev}=\gamma_{eu} \bm b_{u}+\bar{B}\bm y_{eu}$, where the last equality holds because $\bar{B}$ includes $\bm b_v$ as one of its columns for all $v\in \cI_p-\{u\}$. 

If $p_e$ and $p_u$ are in different subsets where $\gamma_{eu}=-1$, we obtain \eqref{eq_lemma2} by similar arguments.
\end{proof}

\begin{proof}[Proof of part (b) of Theorem \ref{theo1}]

Notice that the penalty function $h(\bm x)$ can be represented as
\bee\label{hx2}
\begin{aligned}
h(\bm x)&=\sum_{e=1}^m a_e |\bm b_e^\mathsf{T}\bm x|.
\end{aligned}
\ene
where $a_e$ is the weight of edge $e$. The subdifferential of $h(\bm x)$ is then given by
\bee\label{subdiff}
\partial h(\bm x)=\sum_{e=1}^ma_e \text{SGN}(\bm b_e^\mathsf{T}\bm x)\bm b_e=BA_e\text{SGN}(B^\mathsf{T}\bm x)
\ene
where $A_e=\diag\{a_1,...,a_m\}$. This implies that the subdifferential of $\tilde{f}_\lambda(\bm x)$ is
\bee\label{gradf1}
\begin{aligned}
\partial \tilde{f}_\lambda(\bm x) =\lambda BA_e\cdot\text{SGN}(B^\mathsf{T}\bm x)+\partial g(\bm x).
\end{aligned}
\ene

Let $B$, $\cV=\{1,...,n\}$ and $\cI=\{1,...,m\}$ be the incidence matrix, the node set, and the edge number set of the graph $\cG$, respectively, and let $x_i$ be the $i$-th element of $\bm x$. We define $\cV_\text{max}=\argmax_{i\in\cV} x_i$ and $\cV_{\text{r}}=\cV-\cV_\text{max}$. Since $\bm x\neq \alpha\bone$, then $\cV_{\text{r}}$ is not empty, and has $l_0\geq l$ edges connected to $\cV_\text{max}$. Denote the edge number set and the set of weights of these $l_0$ edges by $\cI_\text{c}\subseteq \cI$ and $\cA_{\text{c}}=\{a_e|e\in \cI_\text{c}\}$, respectively. Note that each of these $l_0$ edges connects two nodes with different values, which implies that $\bm b_e^\mathsf{T}\bm x\neq0$ for all $e\in \cI_\text{c}$. We can appropriately choose the orientation of each edge $e$ for all $e\in \cI_\text{c}$ such that $\bm b_e^\mathsf{T}\bm x>0$. It then follows that $\text{SGN}(B_{\text{c}}^\mathsf{T}\bm x)=\bone$, where $B_{\text{c}}=[\bm b_e],e\in \cI_\text{c}$. 

Let $\cI_{\text{r}}=\cI-\cI_c,\cA_{\text{r}}=\{a_e,e\in \cI_{\text{r}}\}$, and $B_{\text{r}}=[\bm b_e],e\in \cI_{\text{r}}$. By properly shifting the orders of columns of $B,A_e,\bm x$ and $\partial g(\bm x)$, we obtain
\bee\label{gradf2}
\begin{aligned}
&\partial\tilde{f}_\lambda(\bm x)\\
&=\lambda BA_e\text{SGN}(B^\mathsf{T}\bm x)+\partial g(\bm x)\\
&=\lambda [B_{\text{c}},B_{\text{r}}]\begin{bmatrix}A_\text{c}&0\\0&A_\text{r}\end{bmatrix}\begin{bmatrix}\text{SGN}(B_\text{c}^\mathsf{T}\bm x)\\\text{SGN}(B_\text{r}^\mathsf{T}\bm x)\end{bmatrix}+\partial g(\bm x)\\
&=\lambda B_{\text{c}}A_{\text{\text{c}}}\text{SGN}(B_\text{c}^\mathsf{T}\bm x)+\lambda B_{\text{r}}A_{\text{r}}\text{SGN}(B_\text{r}^\mathsf{T}\bm x)+\partial g(\bm x)\\
&=\lambda B_{\text{c}}A_{\text{\text{c}}}\bone+\lambda B_{\text{r}}A_{\text{r}}\cY+\partial g(\bm x)\\
&=\lambda\sum_{e\in \cI_\text{c}}\bm b_{e}a_e+\lambda B_{\text{r}}A_{\text{r}}\cY+\partial g(\bm x)
\end{aligned}
\ene
where $A_{\text{c}} =\text{diag}(\cA_{\text{c}}), A_{\text{r}} =\text{diag}(\cA_{\text{r}})$, and $\cY\subseteq[-1,1]^{m-l_0}$.

Consider two subgraphs $\cG_1$ and $\cG_2$ of the graph $\cG$ induced connectedly by $\cV_\text{max}$ and $\cV_\text{r}$, respectively. Let the incidence matrices of $\cG_1$ and $\cG_2$ be $B_1$ and $B_2$, respectively, and let $\bar{B}=\diag(B_1,B_2)$. From Lemma \ref{lemma2} we know that for any $u\in \cI_\text{c}$, we have $\bm b_e=\gamma_{ue}\bm b_u+\bar{B}\bm z_{ue}$ for all $e\in \cI_\text{c}$, where $\gamma_{ue}$ is given by \eqref{gamma}, and $z_{ue}$ is a vector with elements $1, 0$ or $-1$. Since $\text{SGN}(B_\text{c}^\mathsf{T}\bm x)=\bone$, all edges have their source nodes in the same subset, and hence all $\gamma_{ue},e\in \cI_\text{c}$ are equal. Thus we can let $\gamma_{ue}=\gamma_u,\forall e\in \cI_\text{c}$. Substituting this into \eqref{gradf2} yields
\bee\label{eq1_appendix}
\begin{aligned}
&\partial \tilde{f}_\lambda(\bm x)\\
&=\lambda\sum_{e\in \cI_\text{c}}(\gamma_u \bm b_u+\bar{B}\bm z_{ue})a_{e}+\lambda B_{\text{r}}A_{\text{r}}\cY+\partial g(\bm x)\\
&=\lambda\gamma_u \bm b_u\sum_{e\in \cI_\text{c}}a_{e}+\bar{B}\lambda\sum_{e\in \cI_\text{c}}\bm z_{ue}a_{e}+\lambda B_\text{r}A_{\text{r}}\cY+\partial g(\bm x)\\
&= \lambda\gamma_u \bm b_u\sum_{e\in \cI_\text{c}}a_{e}+\bar{B}\cS+\partial g(\bm x)
\end{aligned}
\ene
where $\cS$ is a subset of $\bR^{m-l_0}$ and $u\in \cI_c$. The last equality holds because $\bar{B}$ includes all columns of $B_\text{r}$ by its definition, and hence $\text{range}(B_\text{r})\subseteq\text{range}(\bar{B})$. Note that $\gamma_u=1$ or $-1$. Thus, it follows from \eqref{eq1_appendix} that for any $\nabla \tilde{f}_\lambda(\bm x)\in \partial \tilde{f}_\lambda(\bm x)$, we can find some $\nabla g(\bm x)\in \partial g(\bm x)$ and $\bm s\in\cS$ such that
\bee\nonumber
\begin{aligned}
\nabla \tilde{f}_\lambda(\bm x)=\lambda\gamma_u \bm b_u\sum_{e\in \cI_\text{c}}a_{e}+\bar{B}\bm s+\nabla g(\bm x).
\end{aligned}
\ene
Using also Assumption \ref{assum2}\ref{assum2a}, we have
\bee\nonumber
\begin{aligned}
\|\nabla \tilde{f}_\lambda(\bm x)\|_{\infty}&\geq \|\lambda\gamma_u \bm b_u\sum_{e\in \cI_\text{c}}a_{e}+\bar{B}\bm s\|_{\infty}-\|\nabla g(\bm x)\|_{\infty}\\ 
&\geq \min_{\bm t\in\bR^{m-l_0}}\|\lambda\gamma_u \bm b_u\sum_{e\in \cI_\text{c}}a_{e}+\bar{B}\bm t\|_{\infty}- c\\
&= \min_{\bm t\in\bR^{m-l_0}}\|\lambda \bm b_u\sum_{e\in \cI_\text{c}}a_{e}+\bar{B}\bm t\|_{\infty}- c.
\end{aligned}
\ene

By Lemma \ref{lemma1} and the $l$-connected property, it follows that for all $\bm x\neq\alpha\bone$,
\[
\begin{aligned}
\|\nabla \tilde{f}_\lambda(\bm x)\|_{\infty}&\geq \min_{\bm t\in\bR^{m-l_0}}\|\lambda \bm b_u\sum_{e\in \cI_\text{c}}a_{e}+\bar{B}\bm t\|_{\infty}- c\\
&\geq\frac{2|\lambda|}{n}\sum_{e\in \cI_\text{c}}a_{e}- c\geq \frac{2\lambda a^{(l)}_\text{min}}{n}- c,
\end{aligned}
\]
which completes the proof.
\end{proof}
\section{Proof of Theorem \ref{theo4}}\label{appendix_b}

We first show that $\tilde{d}(\rho)<\infty$. Since $\tilde{f}_\lambda(x)$ is convex, $\tilde{\cX}(\rho)$ is convex and $\cX^\star\subseteq\tilde{\cX}(\rho)$ for any $\rho>0$. One can verify that $\tilde{\cX}(\rho)-\cX^\star$ is bounded. If $\tilde{\cX}(\rho)-\cX^\star$ is empty, then $\tilde{d}(\rho)=0$, otherwise $0\leq\tilde{d}(\rho)=\max_{x\in\tilde{\cX}(\rho)}\min_{x^\star\in \cX^\star}|x-x^\star|=\max_{x\in\tilde{\cX}(\rho)-\cX^\star}\min_{x^\star\in \cX^\star}|x-x^\star|<\infty$.

Then, we claim the following. 

{\em Claim 1:}~If $\|\bm x^{k}-x^\star\bone\|>c_\rho$ for all $x^\star\in \cX^\star$, then $\tilde{f}_\lambda(\bm x^k)-f^\star>{\rho c_a^2}/{2}$.  

Recall from \eqref{eq6_theo3} that
\[
\begin{aligned}
&\tilde{f}_\lambda(\bm x^k)-f^\star\geq f(\bar{x}^k)-f^\star+(\lambda a_\text{min}^{(l)}-\frac{1}{2} cn)v(\bm x^k), \forall k.
\end{aligned}
\]
This implies that if either $f(\bar{x}^k)-f^\star>{\rho c_a^2}/{2}$ or $v(\bm x^k)>\frac{\rho c_a^2}{2\lambda a_\text{min}^{(l)}- cn}$, then $\tilde{f}_\lambda(\bm x^k)-f^\star>{\rho c_a^2}/{2}$.  Let $$c_\rho:=2\sqrt{n}\max\{\tilde{d}(\rho),\frac{\rho c_a^2}{2\lambda a_\text{min}^{(l)}- cn}\}.$$ Since
\bee\nonumber
\begin{aligned}
c_\rho&<\|\bm x^{k}-x^\star\bone\|\leq \|\bm x^{k}-\bar{x}^k\bone\|+\|\bar{x}^k\bone-x^\star\bone\|\\
&\leq\sqrt{n}v(\bm x^k)+\sqrt{n}|\bar{x}^k-x^\star|
\end{aligned}
\ene
we obtain that $v(\bm x^k)>c_\rho/(2\sqrt{n})\ge \frac{\rho c_a^2}{2\lambda a_\text{min}^{(l)}- cn}$ or $|\bar{x}^k-x^\star|>c_\rho/(2\sqrt{n})\ge \tilde{d}(\rho)$. For the former case we have $\tilde{f}_\lambda(\bm x^k)-f^\star>{\rho c_a^2}/{2}$. For the latter case, $\bar{x}^k\notin\tilde{\cX}(\rho)$, which by the definition of $\tilde{\cX}(\rho)$ implies $\tilde{f}_\lambda(\bm x^k)-f^\star>{\rho c_a^2}/{2}$.

{\em Claim 2:} There is $x_0^\star\in \cX^\star$ such that $\liminf_{k\rightarrow \infty} \|\bm x^{k}-x_0^\star\bone\|\leq c_\rho$. 

Otherwise, there exists $k>0$ such that 
$$\|\bm x^{k}-x^\star\bone\|> c_\rho, \forall x^\star\in \cX^\star, \forall  k>k.$$ 

By Claim 1,  there exists some $\epsilon>0$ such that $\tilde{f}_\lambda(\bm x^k)-f^\star>{\rho c_a^2}/{2}+\epsilon$ for all $k>k$. Together with \eqref{eq_sg}, it yields that
\bee\label{eq3_theo4}
\begin{aligned}
\|\bm x^{k+1}-x^\star\bone\|^2&\leq\|\bm x^{k}-x^\star\bone\|^2-2\rho(\tilde{f}_\lambda(\bm x^k)-f^\star)+\rho^2c_a^2\\
&\leq\|\bm x^{k}-x^\star\bone\|^2-2\rho(\frac{\rho c_a^2}{2}+\epsilon)+\rho^2c_a^2\\
&=\|\bm x^{k}-x^\star\bone\|^2-2\rho \epsilon.
\end{aligned}
\ene
Summing this relation implies that for all $k>k$, 
\bee\nonumber
\begin{aligned}
&\|\bm x^{k+1}-x^\star\bone\|^2\leq\|\bm x^{k}-x^\star\bone\|^2-2(k+1-k)\rho \epsilon,
\end{aligned}
\ene
which clearly cannot hold for a sufficiently large $k$. Thus, we have verified Claim 2. 

{\em Claim 3}:  There is $x^\star\in \cX^\star$ such that $\limsup_{k\rightarrow \infty} \|\bm x^{k}-x^\star\bone\|\leq c_\rho +\rho c_a$. 

Otherwise, for {\em any} $x^\star\in\cX^\star$, there must exist a subsequence $\{\bm x^k\}_{k\in\cK}$ (which depends on $x^\star$) such that  for all $k\in\cK$, 
\bee\label{eq6_theo4}
\|\bm x^{k}-x^\star\bone\|>c_\rho+\rho c_a.
\ene
Moreover, it follows from \eqref{subdiff}  that 
\beq
&&\|\bm x^{k+1}-x^\star\bone\| \nonumber \\
&&=\|\bm x^{k}-x^\star\bone-\rho\lambda BA_e\text{sgn}(B^\mathsf{T}\bm x^k)-\rho\nabla g(\bm x^k)\| \nonumber \\
&&\leq\|\bm x^{k}-x^\star\bone\|+\lambda\rho\| BA_e\text{sgn}(B^\mathsf{T}\bm x^k)\|+\rho\|\nabla g(\bm x^k)\| \nonumber \\
&&\leq\|\bm x^{k}-x^\star\bone\|+\rho\sqrt{n}(\lambda \|A\|_\infty +  c) \nonumber \\
&&=\|\bm x^{k}-x^\star\bone\|+\rho c_a, \forall k
\label{eq2_theo4}
\enq
where the second inequality follows from that
\bee\nonumber
\begin{aligned}
\|BA_e\text{sgn}(B^\mathsf{T}\bm x^k)\|&\leq\sqrt{n}\| BA_e\text{sgn}(B^\mathsf{T}\bm x^k)\|_\infty\\
&\leq\sqrt{n}\|BA_e\|_\infty\|\text{sgn}(B^\mathsf{T}\bm x^k)\|_\infty\\
&\leq\sqrt{n}\max_{i}\sum_{j=1}^na_{ij}=\sqrt{n}\|A\|_\infty.
\end{aligned}
\ene
Thus, we obtain that for all $k\in\cK$,  
\bee\label{eq4_theo4}
\|\bm x^{k-1}-x^\star\bone\| \geq\|\bm x^{k}-x^\star\bone\|-\rho c_a>c_\rho.
\ene

By Claim 2, there must exist some $k_1\in\cK$ and $k_1>k$ such that
\bee\nonumber
\begin{aligned}
&\|\bm x^{k_1-1}-x_0^\star\bone\|\leq c_\rho+\rho c_a.
\end{aligned}
\ene
Together with \eqref{eq4_theo4}, it implies that
\bee\label{crho}
c_\rho<\|\bm x^{k_1-1}-x_0^\star\bone\|\leq c_\rho+\rho c_a.
\ene
Hence, it follows from Claim 1 that $\tilde{f}_\lambda(\bm x^{k_1-1})-f^\star>{\rho c_a^2}/{2}$, which together with \eqref{eq3_theo4} and \eqref{crho} yields that
\bee\label{eq_end}
\|\bm x^{k_1}-x_0^\star\bone\|\leq\|\bm x^{k_1-1}-x_0^\star\bone\|\leq c_\rho+\rho c_a.
\ene
Set $x^\star=x_0^\star$ in \eqref{eq6_theo4}, we have $\|\bm x^{k_1}-x_0^\star\bone\|>c_\rho+\rho c_a.$ This contradicts \eqref{eq_end}, and hence verifies Claim 3. 

In view of \eqref{defidis}, the proof is completed. 
\qed




\bibliographystyle{IEEEtran}
\bibliography{mybibf}         

\begin{thebibliography}{10}
\providecommand{\url}[1]{#1}
\csname url@samestyle\endcsname
\providecommand{\newblock}{\relax}
\providecommand{\bibinfo}[2]{#2}
\providecommand{\BIBentrySTDinterwordspacing}{\spaceskip=0pt\relax}
\providecommand{\BIBentryALTinterwordstretchfactor}{4}
\providecommand{\BIBentryALTinterwordspacing}{\spaceskip=\fontdimen2\font plus
\BIBentryALTinterwordstretchfactor\fontdimen3\font minus
  \fontdimen4\font\relax}
\providecommand{\BIBforeignlanguage}[2]{{%
\expandafter\ifx\csname l@#1\endcsname\relax
\typeout{** WARNING: IEEEtran.bst: No hyphenation pattern has been}%
\typeout{** loaded for the language `#1'. Using the pattern for}%
\typeout{** the default language instead.}%
\else
\language=\csname l@#1\endcsname
\fi
#2}}
\providecommand{\BIBdecl}{\relax}
\BIBdecl

\bibitem{you2011network}
K.~You and L.~Xie, ``Network topology and communication data rate for
  consensusability of discrete-time multi-agent systems,'' \emph{IEEE
  Transactions on Automatic Control}, vol.~56, no.~10, pp. 2262--2275, 2011.

\bibitem{cao2013overview}
Y.~Cao, W.~Yu, W.~Ren, and G.~Chen, ``An overview of recent progress in the
  study of distributed multi-agent coordination,'' \emph{IEEE Transactions on
  Industrial Informatics}, vol.~9, no.~1, pp. 427--438, 2013.

\bibitem{cevher2014convex}
V.~Cevher, S.~Becker, and M.~Schmidt, ``Convex optimization for big data:
  Scalable, randomized, and parallel algorithms for big data analytics,''
  \emph{IEEE Signal Processing Magazine}, vol.~31, no.~5, pp. 32--43, 2014.

\bibitem{you2018distributed}
K.~You, R.~Tempo, and P.~Xie, ``Distributed algorithms for robust convex
  optimization via the scenario approach,'' \emph{IEEE Transactions on
  Automatic Control}, 2018.

\bibitem{wang2017distributed}
H.~Wang and C.~Li, ``Distributed quantile regression over sensor networks,''
  \emph{IEEE Transactions on Signal and Information Processing over Networks},
  2017.

\bibitem{nedic2017network}
A.~Nedi{\'c}, A.~Olshevsky, and M.~G. Rabbat, ``Network topology and
  communication-computation tradeoffs in decentralized optimization,''
  \emph{arXiv preprint arXiv:1709.08765}, 2017.

\bibitem{nedic2009distributed}
A.~Nedic and A.~Ozdaglar, ``Distributed subgradient methods for multi-agent
  optimization,'' \emph{IEEE Transactions on Automatic Control}, vol.~54,
  no.~1, pp. 48--61, 2009.

\bibitem{nedic2015distributed}
A.~Nedi{\'c} and A.~Olshevsky, ``Distributed optimization over time-varying
  directed graphs,'' \emph{IEEE Transactions on Automatic Control}, vol.~60,
  no.~3, pp. 601--615, 2015.

\bibitem{shi2015extra}
W.~Shi, Q.~Ling, G.~Wu, and W.~Yin, ``Extra: An exact first-order algorithm for
  decentralized consensus optimization,'' \emph{SIAM Journal on Optimization},
  vol.~25, no.~2, pp. 944--966, 2015.

\bibitem{yi2014quantized}
P.~Yi and Y.~Hong, ``Quantized subgradient algorithm and data-rate analysis for
  distributed optimization,'' \emph{IEEE Transactions on Control of Network
  Systems}, vol.~1, no.~4, pp. 380--392, 2014.

\bibitem{pu2017quantization}
Y.~Pu, M.~N. Zeilinger, and C.~N. Jones, ``Quantization design for distributed
  optimization,'' \emph{IEEE Transactions on Automatic Control}, vol.~62,
  no.~5, pp. 2107--2120, 2017.

\bibitem{rabbat2005quantized}
M.~G. Rabbat and R.~D. Nowak, ``Quantized incremental algorithms for
  distributed optimization,'' \emph{IEEE Journal on Selected Areas in
  Communications}, vol.~23, no.~4, pp. 798--808, 2005.

\bibitem{zhu2016quantized}
S.~Zhu, M.~Hong, and B.~Chen, ``Quantized consensus {ADMM} for multi-agent
  distributed optimization,'' in \emph{2016 IEEE International Conference on
  Acoustics, Speech and Signal Processing,}.\hskip 1em plus 0.5em minus
  0.4em\relax IEEE, 2016, pp. 4134--4138.

\bibitem{seide20141}
F.~Seide, H.~Fu, J.~Droppo, G.~Li, and D.~Yu, ``1-bit stochastic gradient
  descent and its application to data-parallel distributed training of speech
  dnns,'' in \emph{Fifteenth Annual Conference of the International Speech
  Communication Association}, 2014.

\bibitem{magnusson2017convergence}
S.~Magn{\'u}sson, C.~Enyioha, N.~Li, C.~Fischione, and V.~Tarokh, ``Convergence
  of limited communications gradient methods,'' \emph{IEEE Transactions on
  Automatic Control}, 2017.

\bibitem{Ling2014Decentralized}
Q.~Ling and A.~Ribeiro, ``Decentralized dynamic optimization through the
  alternating direction method of multipliers,'' \emph{IEEE Transactions on
  Signal Processing}, vol.~62, no.~5, pp. 1185--1197, 2014.

\bibitem{shi2014linear}
W.~Shi, Q.~Ling, K.~Yuan, G.~Wu, and W.~Yin, ``On the linear convergence of the
  {ADMM} in decentralized consensus optimization,'' \emph{IEEE Transactions on
  Signal Processing}, vol.~62, no.~7, pp. 1750--1761, 2014.

\bibitem{aybat2017distributed}
N.~S. Aybat, Z.~Wang, T.~Lin, and S.~Ma, ``Distributed linearized alternating
  direction method of multipliers for composite convex consensus
  optimization,'' \emph{IEEE Transactions on Automatic Control}, 2017.

\bibitem{chen2011finite}
G.~Chen, F.~L. Lewis, and L.~Xie, ``Finite-time distributed consensus via
  binary control protocols,'' \emph{Automatica}, vol.~47, no.~9, pp.
  1962--1968, 2011.

\bibitem{chen2012distributed}
F.~Chen, Y.~Cao, W.~Ren \emph{et~al.}, ``Distributed average tracking of
  multiple time-varying reference signals with bounded derivatives,''
  \emph{IEEE Transactions on Automatic Control}, vol.~57, no.~12, pp.
  3169--3174, 2012.

\bibitem{guo2013consensus}
M.~Guo and D.~V. Dimarogonas, ``Consensus with quantized relative state
  measurements,'' \emph{Automatica}, vol.~49, no.~8, pp. 2531--2537, 2013.

\bibitem{franceschelli2015finite}
M.~Franceschelli, A.~Pisano, A.~Giua, and E.~Usai, ``Finite-time consensus with
  disturbance rejection by discontinuous local interactions in directed
  graphs,'' \emph{IEEE Transactions on Automatic Control}, vol.~60, no.~4, pp.
  1133--1138, 2015.

\bibitem{franceschelli2017finite}
M.~Franceschelli, A.~Giua, and A.~Pisano, ``Finite-time consensus on the median
  value with robustness properties,'' \emph{IEEE Transactions on Automatic
  Control}, vol.~62, no.~4, pp. 1652--1667, 2017.

\bibitem{lin2017distributed}
P.~Lin, W.~Ren, and J.~A. Farrell, ``Distributed continuous-time optimization:
  nonuniform gradient gains, finite-time convergence, and convex constraint
  set,'' \emph{IEEE Transactions on Automatic Control}, vol.~62, no.~5, pp.
  2239--2253, 2017.

\bibitem{clarke2008nonsmooth}
F.~H. Clarke, Y.~S. Ledyaev, R.~J. Stern, and P.~R. Wolenski, \emph{Nonsmooth
  Analysis and Control Theory}.\hskip 1em plus 0.5em minus 0.4em\relax Springer
  Science \& Business Media, 2008, vol. 178.

\bibitem{dieci2012survey}
L.~Dieci and L.~Lopez, ``A survey of numerical methods for {IVP}s of {ODE}s
  with discontinuous right-hand side,'' \emph{Journal of Computational and
  Applied Mathematics}, vol. 236, no.~16, pp. 3967--3991, 2012.

\bibitem{bertsekas2015convex}
D.~P. Bertsekas, \emph{Convex Optimization Algorithms}.\hskip 1em plus 0.5em
  minus 0.4em\relax Athena Scientific Belmont, 2015.

\bibitem{boyd2011subgradient}
\BIBentryALTinterwordspacing
S.~Boyd. (2017) Subgradient methods. [Online]. Available:
  \url{https://stanford.edu/class/ee364b/lectures/subgrad_method_slides.pdf}
\BIBentrySTDinterwordspacing

\bibitem{liu2011distributed}
S.~Liu, L.~Xie, and H.~Zhang, ``Distributed consensus for multi-agent systems
  with delays and noises in transmission channels,'' \emph{Automatica},
  vol.~47, no.~5, pp. 920--934, 2011.

\bibitem{kar2009distributed}
S.~Kar and J.~M. Moura, ``Distributed consensus algorithms in sensor networks
  with imperfect communication: Link failures and channel noise,'' \emph{IEEE
  Transactions on Signal Processing}, vol.~57, no.~1, pp. 355--369, 2009.

\bibitem{cheng2016convergence}
L.~Cheng, Y.~Wang, W.~Ren, Z.-G. Hou, and M.~Tan, ``On convergence rate of
  leader-following consensus of linear multi-agent systems with communication
  noises,'' \emph{IEEE Transactions on Automatic Control}, vol.~61, no.~11, pp.
  3586--3592, 2016.

\bibitem{dimakis2010gossip}
A.~G. Dimakis, S.~Kar, J.~M. Moura, M.~G. Rabbat, and A.~Scaglione, ``Gossip
  algorithms for distributed signal processing,'' \emph{Proceedings of the
  IEEE}, vol.~98, no.~11, pp. 1847--1864, 2010.

\bibitem{kan2016leader}
Z.~Kan, J.~M. Shea, and W.~E. Dixon, ``Leader--follower containment control
  over directed random graphs,'' \emph{Automatica}, vol.~66, pp. 56--62, 2016.

\bibitem{zhang2018distributed}
J.~Zhang, K.~You, and T.~Ba{\c{s}}ar, ``Distributed discrete-time optimization
  by exchanging one bit of information,'' accepted by the 2018 American Control
  Conference, Milwaukee, USA.

\bibitem{deo1974graph}
N.~Deo, \emph{Graph Theory with Applications to Engineering and Computer
  Science}.\hskip 1em plus 0.5em minus 0.4em\relax Courier Dover Publications,
  1974.

\bibitem{bazaraa2013nonlinear}
M.~S. Bazaraa, H.~D. Sherali, and C.~M. Shetty, \emph{Nonlinear Programming:
  Theory and Algorithms}, 3rd~ed.\hskip 1em plus 0.5em minus 0.4em\relax John
  Wiley \& Sons, 2013.

\bibitem{mokhtari2017network}
A.~Mokhtari, Q.~Ling, and A.~Ribeiro, ``Network {N}ewton distributed
  optimization methods,'' \emph{IEEE Transactions on Signal Processing},
  vol.~65, no.~1, pp. 146--161, 2017.

\bibitem{qu2017harnessing}
G.~Qu and N.~Li, ``Harnessing smoothness to accelerate distributed
  optimization,'' \emph{IEEE Transactions on Control of Network Systems}, 2017.

\bibitem{borkar2008stochastic}
V.~S. Borkar, \emph{Stochastic Approximation: A Dynamical Systems
  Viewpoint}.\hskip 1em plus 0.5em minus 0.4em\relax Cambridge University
  Press, 2008.

\bibitem{lim2011convergence}
E.~Lim, ``On the convergence rate for stochastic approximation in the nonsmooth
  setting,'' \emph{Mathematics of Operations Research}, vol.~36, no.~3, pp.
  527--537, 2011.

\bibitem{leone1961folded}
F.~Leone, L.~Nelson, and R.~Nottingham, ``The folded normal distribution,''
  \emph{Technometrics}, vol.~3, no.~4, pp. 543--550, 1961.

\bibitem{olfati2004consensus}
R.~Olfati-Saber and R.~M. Murray, ``Consensus problems in networks of agents
  with switching topology and time-delays,'' \emph{IEEE Transactions on
  Automatic Control}, vol.~49, no.~9, pp. 1520--1533, 2004.

\bibitem{hunter2000quantile}
D.~R. Hunter and K.~Lange, ``Quantile regression via an {MM} algorithm,''
  \emph{Journal of Computational and Graphical Statistics}, vol.~9, no.~1, pp.
  60--77, 2000.

\bibitem{bullo2017lectures}
\BIBentryALTinterwordspacing
F.~Bullo, \emph{Lectures on Network Systems}.\hskip 1em plus 0.5em minus
  0.4em\relax Version 0.95, 2017, with contributions by J. Cortes, F. Dorfler,
  and S. Martinez. [Online]. Available:
  \url{http://motion.me.ucsb.edu/book-lns}
\BIBentrySTDinterwordspacing

\end{thebibliography}
\begin{IEEEbiography}
[{\includegraphics[width=1in,height=1.25in,clip,keepaspectratio]{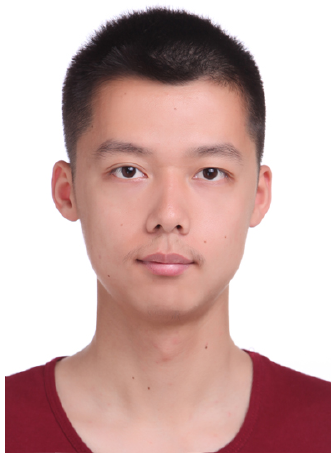}}]
{Jiaqi Zhang} received the B.S. degree from the School of Electronic and Information Engineering, Beijing Jiaotong University, Beijing, China, in 2016. He is currently pursuing the Ph.D. degree at the Department of Automation, Tsinghua University, Beijing, China. His research interests include networked control systems, distributed optimization and their applications.
\end{IEEEbiography}
\begin{IEEEbiography}
[{\includegraphics[width=1in,height=1.25in,clip,keepaspectratio]{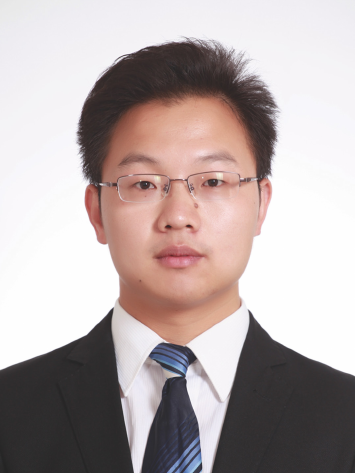}}]
{Keyou You}   received the B.S. degree in Statistical Science from Sun Yat-sen University, Guangzhou, China, in 2007 and the Ph.D. degree in Electrical and Electronic Engineering from Nanyang Technological University (NTU), Singapore, in 2012. After briefly working as a Research Fellow at NTU, he joined Tsinghua University in Beijing, China where he is now an Associate Professor in the Department of Automation. He held visiting positions at Politecnico di Torino, The Hong Kong University of Science and Technology, The University of Melbourne and etc. His current research interests include networked control systems, distributed  algorithms, and their applications.

Dr. You received the Guan Zhaozhi award at the 29th Chinese Control Conference in 2010 and a CSC-IBM China Faculty Award in 2014. He was selected to the National 1000-Youth Talent Program of China in 2014 and received the National Science Fund for Excellent Young Scholars in 2017.\end{IEEEbiography}

\begin{IEEEbiography}
[{\includegraphics[width=1in,height=1.25in,clip,keepaspectratio]{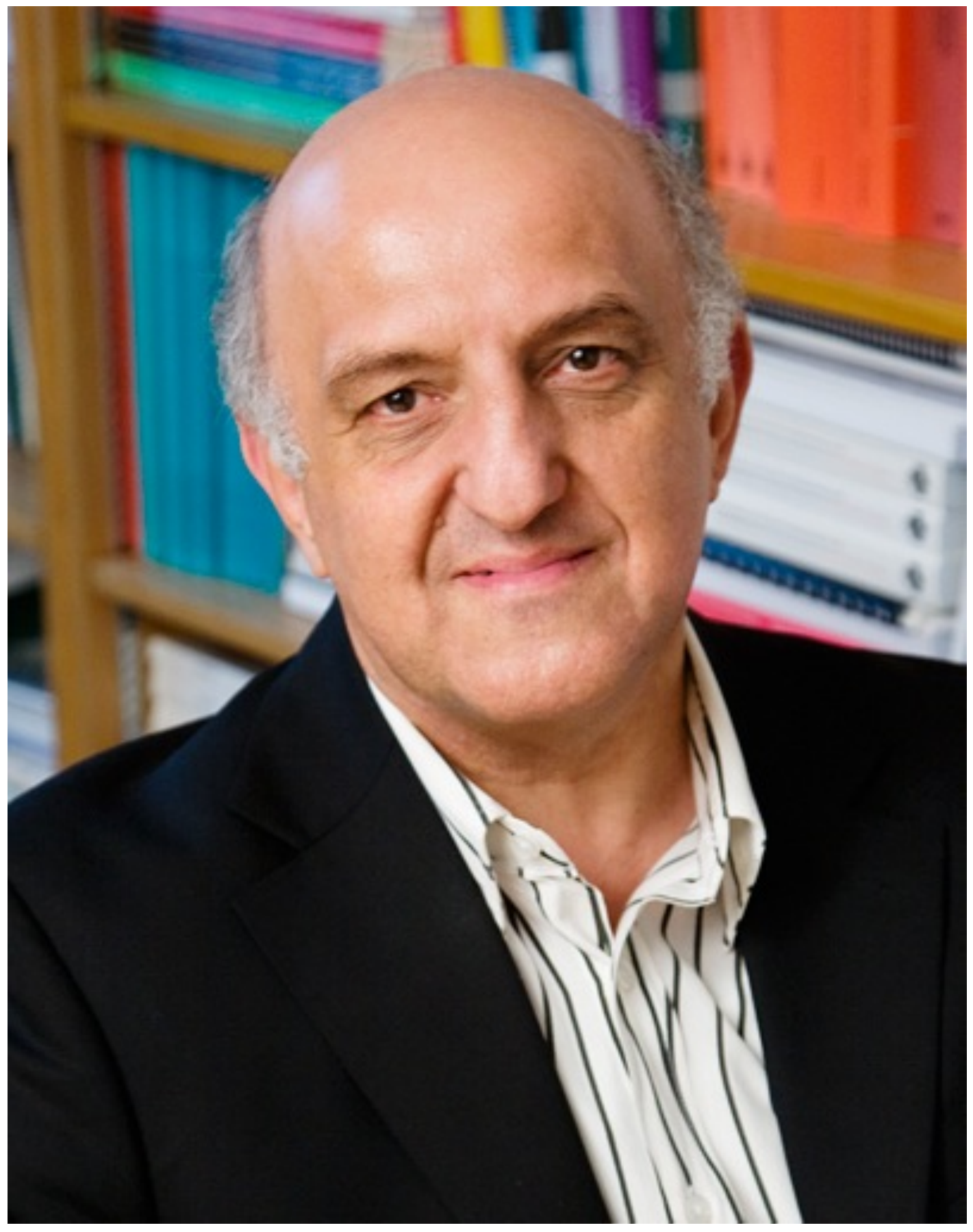}}]
{\bf Tamer Ba\c{s}ar} (S'71-M'73-SM'79-F'83-LF'13) is with the University of Illinois at Urbana-Champaign, where he holds the academic positions of  Swanlund Endowed Chair;    
Center for Advanced Study Professor of  Electrical and Computer Engineering; 
Research Professor at the Coordinated Science
Laboratory; and Research Professor  at the Information Trust Institute. 
He is also the Director of the Center for Advanced Study.
He received B.S.E.E. from Robert College, Istanbul,
and M.S., M.Phil, and Ph.D. from Yale University. 

He is a member of the US National Academy
of Engineering,  member of the  European Academy of Sciences, and Fellow of IEEE, IFAC (International Federation of Automatic Control) and SIAM (Society for Industrial and Applied Mathematics), and has served as president of IEEE CSS (Control Systems  Society), ISDG (International Society of Dynamic Games), and AACC (American Automatic Control Council). He has received several awards and recognitions over the years, including the
highest awards of IEEE CSS, IFAC, AACC, and ISDG, the IEEE Control Systems Award, and a number of international honorary doctorates and professorships. He has over 800 publications in systems, control, communications, networks,
and dynamic games, including books on non-cooperative dynamic game theory, robust control,
network security, wireless and communication networks, and stochastic networked control. He was
the Editor-in-Chief of Automatica between 2004 and 2014, and is currently  editor of several book series. His current research interests include stochastic teams, games, and networks; security; and cyber-physical systems.
\end{IEEEbiography}

\end{document}